\renewcommand*{\backrefalt}[4]{%
    \ifcase #1 \footnotesize{(Not cited.)}%
    \or        \footnotesize{(Cited on page~#2.)}%
    \else      \footnotesize{(Cited on pages~#2.)}%
    \fi}
\renewcommand{\cite}[1]{\citep{#1}}
\newtheorem{assumption}{Assumption}
\newtheorem{remark}{Remark}
\newtheorem{lemma}{Lemma}
\newcommand{\kibitz}[2]{\ifnum\Comments=1{\color{#1}{#2}}\fi}
\definecolor{english}{rgb}{0.0, 0.5, 0.0}
\title{Generative Adversarial Equilibrium Solvers
%%%% Cite as
%%%% Update your official citation here when published 
% \thanks{\textit{\underline{Citation}}: 
% \textbf{Authors. Title. Pages.... DOI:000000/11111.}} 
}
\author{
  Denizalp Goktas\thanks{Research conducted while the author was an intern at DeepMind.} \\
  Computer Science Department \\
  Brown University \\
  Providence, RI, USA\\
  \texttt{denizalp\_goktas@brown.edu} \\
  %% examples of more authors
   \And
  David C. Parkes\thanks{Also, School of Engineering and Applied Sciences, Harvard University.},\ Ian Gemp, Luke Marris, Georgios Piliouras, Romuald Elie, Guy Lever, Andrea Tacchetti \\
  DeepMind Technologies \\
  London, UK\\
  \texttt{\{parkesd, imgemp, marris, gpil, relie, guylever, atacchet\}@deepmind.com} \\
  %% \AND
  %% Coauthor \\
  %% Affiliation \\
  %% Address \\
  %% \texttt{email} \\
  %% \And
  %% Coauthor \\
  %% Affiliation \\
  %% Address \\
  %% \texttt{email} \\
  %% \And
  %% Coauthor \\
  %% Affiliation \\
  %% Address \\
  %% \texttt{email} \\
}
\begin{document}
\maketitle
\begin{abstract}
  We introduce the use of generative adversarial learning to compute equilibria in general game-theoretic settings, specifically the \emph{generalized Nash equilibrium} (GNE) in \emph{pseudo-games}, and its specific instantiation as the \emph{competitive equilibrium} (CE) in Arrow-Debreu competitive economies. Pseudo-games are a generalization of games in which players' actions affect not only the payoffs of other players but also their feasible action spaces.
  Although the computation of GNE and CE is intractable in the worst-case, i.e., PPAD-hard, in practice, many applications only require solutions with high accuracy in expectation
  over a distribution of problem instances. We introduce {\em Generative Adversarial Equilibrium Solvers} (\nees{}): a family of generative adversarial neural networks that can  learn GNE and CE from only a sample of problem instances. We provide computational and sample complexity bounds, and apply the framework to finding Nash equilibria in normal-form games, CE in Arrow-Debreu competitive economies, and GNE in an environmental economic model of the Kyoto mechanism.
\end{abstract}
\if 0
Pseudo-games are an economically motivated generalization of games, in which the actions taken by each player affect not only other players' payoffs but also their action spaces. The canonical solution concept for pseudo-games is the generalized Nash equilibrium (GNE), an action profile from which no player can improve their payoffs by deviating to an action in the action space determined by the other players' actions. Pseudo-games include as special instance many important economic models such as the Arrow-Debreu model of a competitive economy for which the solution concept by tradition is the competitive equilibrium (CE). Although the computation of GNE and CE is intractable in the worst-case, in practice, many applications only only require an approximate solution in expectation over a distribution of problem instances.
    % Although, the computation of these equilibria is intractable in the worst-case, there exists many applications which require approximately correct GNE and CE computation in expectation over a distribution of problem instances.
    To this end, we introduce Neural Economic Equilibrium Networks (\nees{}), a family of generative adversarial neural networks, which can effectively learn GNE and CE only from a sample of problem instances. We introduce algorithms to train \nees{} and provide a computational and sample complexity analysis. We demonstrate the capabilities of our network by solving for Nash equilibrium in normal-form games, CE in Arrow-Debreu competitive economies, and GNE in an environmentall economic  model of the Kyoto mechanism.
\fi
\section{Introduction}

Economic models and  equilibrium concepts %dcp (e.g., \cite{walras, arrow-debreu})
are critical tools to solve practical problems, including capacity allocation in wireless and network communication \cite{han2011wireless, pang2008distributed}, energy resource allocation~\cite{hobbs2007power, wei1999power}, and cloud computing~\cite{gutman2012fair, lai2005tycoon, zahedi2018amdahl, ardagna2017cloud}. 
% Although, the computation of equilibrium in these models is intractable in the worst-case, in practice, many applications only require an approximate solution in expectation over a distribution of problem instances.
% 
% Equilibrium-based modelling, arguably first introduced into economics in the works of \citeauthor{cournot1897researches} \cite{cournot1897researches} and \citeauthor{walras} \cite{walras}, has nowadays become one of the main paradigms in mathematical economics 
% % (see for instance this year's Nobel laureates in economics \cite{diamond1983bank})
% with the seminal works of \citeauthor{arrow-debreu} \cite{arrow-debreu} and \citeauthor{mckenzie1954equilibrium} \cite{mckenzie1954equilibrium} establishing the existence of equilibrium prices in a very broad model of an economy.
%dcp Although some of these economic models can be cast as games \cite{nash1950existence}, i.e., models of rational interaction, many  
%
Many of these economic models %(e.g., \cite{arrow-debreu}) are instances of a more general model of rational interaction known as
are instances of what are known as \mydef{pseudo-games}, in which the actions taken by each player affect not only the other players' payoffs, as in games, but also the other players' strategy sets.\footnote{In many games, such as chess, the action taken by one player affects the actions available to the others, but these games are sequential, while in pseudo-games actions are chosen simultaneously. % 
Additionally, even if one constructs a game with payoffs that penalize the players for actions that are not allowed, the NE of the ensuing game will in general not correspond to the GNE of the original pseudo-game and can often be trivial. We refer the reader to \Cref{sec_app:gne_vs_ne} for a mathematical example.} 
The formalism of pseudo-games was introduced by \citet{arrow-debreu}, who used
it in studying their foundational microeconomic equilibrium model, the competitive economy model.
%of competitive economy is a pseudo-game \cite{arrow-debreu}.

% \mydef{Games} consist of a set of players, each of whom simultaneously chose an action from an action space, and receive a payoff based on all the players actions. A \mydef{Nash equilibrium} (NE)  \cite{cournot1897researches,nash1950existence} is a \mydef{policy profile}, i.e., a collection of actions for all players, one per player, for which no player can improve their payoff by unilaterally deviating to another action in their action space. \mydef{Pseudo-games} are a \emph{generalization} of games, in which the \emph{simultaneous} actions taken by each player affect not only the other players' payoffs, as in games, but also the other players' action spaces.

%
The standard solution concept for pseudo-games is the {\em generalized Nash equilibrium (GNE)} \cite{arrow-debreu, facchinei2010generalized}, which is an action profile from which no player can improve their payoff by unilaterally deviating to another action in the space of admissible actions determined by the actions of other players. Important economic models can often be formulated as a pseudo-game, with their set of solutions equal to the set of GNE of the pseudo-game: for instance, the set of {
\em competitive equilibria (CE)}~\cite{walras,arrow-debreu} of an Arrow-Debreu competitive economy corresponds to the set of GNE of an 
associated pseudo-game.

A large literature has been devoted to the computation of GNE in certain classes of pseudo-games but unfortunately many algorithms that are guaranteed to converge in theory have in practice been observed to converge slowly in ill-conditioned or large problems or fail numerically~\cite{facchinei2010penalty, jordan2022gne, goktas2022exploitability}. Additionally, all known algorithms have hyperparameters that have to be optimized individually for every pseudo-game instance \cite{facchinei2010generalized}, deteriorating the performance of these algorithms when used to solve multiple pseudo-games. These issues point to a need to develop methods to compute GNE, for a distribution of pseudo-games, reliably and quickly.

We reformulate the problem of computing GNE in pseudo-games (and CE in Arrow-Debreu competitive economies) as a learning problem for a generative adversarial network (GAN) called the \mydef{Generative Adversarial Equilibrium Solver (\nees{})},
consisting of a generator and a discriminator network. 
The generator takes as input a parametric representation of a pseudo-game, and predicts a solution that
consists of a tuple of actions, one per player.
The discriminator takes as input both the pseudo-game and the output of the generator, and outputs a best-response for each player,
seeking to find a useful unilateral deviation for all players; this also gives  the sum of regrets,
with which to evaluate the generator (see \Cref{fig:network_diagram}). \nees{} predicts GNE and CE in batches and in order to minimize the expected exploitability, across a distribution of pseudo-games. \nees{} amortizes computational cost up-front in training, and allows for near constant evaluation time for inference. Our approach is inspired by previous methods that cast the computation of an equilibrium in normal-form games as an unsupervised learning problem~\cite{duan2021pac,marris2022cce}.
These methods train
a network to predict a strategy profile that minimizes  the \mydef{exploitability}  (i.e., the sum of the players' payoff-maximizing unilateral deviations w.r.t.\ a given strategy profile)
over a distribution of games. These methods become inefficient in pseudo-games, since in contrast to regular games, the exploitability in pseudo-games (1) requires solving a non-linear optimization problem, (2) is not Lipschitz-continuous, in turn making it hard to learn from samples, and (3) has unbounded gradients, which might lead to exploding gradients in neighborhoods of GNE. Our GAN formulation  circumvents all three of these issues.

Although the computation of GNE is intractable in the worst-case \cite{chen2006settling, daskalakis2009complexity, chen2009spending, vazirani2011market, garg2017settling}, in practice,  applications may only require a solver that gives solutions with high accuracy in
expectation over a realistic distribution of problem instances. 
In particular, a decision maker may need  to compute a GNE for a sequence of  pseudo-games from some family 
or \emph{en masse} over a 
set of pseudo-games 
sampled from some distribution of interest. 
An example of such an application is the problem of resource allocation on cloud computing platforms \cite{hindman2011mesos,isard2009quincy, burns2016borg,vavilapalli2013apache} where a significant number of methods make use of repeated computation of competitive equilibrium \cite{gutman2012fair, lai2005tycoon, budish2011combinatorial, zahedi2018amdahl, varian1973equity} and generalized Nash equilibrium \cite{ardagna2017cloud, ardagna2011cloud, ardagna2011flexible, anselmi2014generalized}. In such settings, as consumers request resources from the platform, the platforms have to find a new equilibrium  while handling all numerical failures within a given time frame.
% Going back to the telecommunication example, a national network carrier has to make a
% rapid decision of how to send power through cables,  as it receives a communication request, and it has to do this {\em en masse}, not just for one cable infrastructure in one town, but for the cable infrastructures across multiple towns, 
% requiring them solve a batch of pseudo-games. 
Another example is policy makers who often want to understand the equilibria induced by a policy for different distributions of agent preferences in a pseudo-game %
allowing them to study the impact of a policy for a distribution on different possible kinds of participants. For example,
in studying the impact of a protocol such
as the \mydef{Kyoto joint implementation mechanism} (see \Cref{sec:experiments}), one might be interested in understanding how the emission levels of countries would change based on their productivity levels \cite{roger2000kyoto}.
%,
%in which case they are interested in solving many different
%pseudo-games \cite{breton2006game}; 
%we show that our method can handle such applications by replicating existing theoretical analysis of the Kyoto joint implementation mechanism in \Cref{sec:experiments}.
%
%
Other applications 
%with the same requirement include state-value function-based reinforcement leaning algorithms for solving
include computing competitive equilibria
%, an example of GNE,
in stochastic market environments. 
For example,  %algorithms such as Model-based Optimistic Online Learning for Markov Exchange Economy and  Model-based Pessimistic Online Learning for Markov Exchange Economy
%iteratively construct better state value functions which determine a new 
recently proposed algorithms
work through a series of equilibrium problems, each of which has to be solved quickly~\cite{liu2022welfare}.
%dcp or similar online learning algorithms such as randomized exchange equilibrium learning \cite{guo2022onlineEE} which compute a competitive equilibrium when agents' payoffs can only be obtained from sample observations also require solving for a competitive equilibrium iteratively and quickly. 
% \input{figures/figure_intro_model}
%Critically, these applications require one to approximately compute a CE or a GNE in a quick fashion while handling any numerical failures that might occur in practice with different input types. 
%even where there exists no known algorithms to compute such solutions. 
%

% \nees{} is a {\em generative adversarial network} (GAN) that learns to 
% map pseudo-games to  GNE. 
% Similar to previous approaches, the \nees{} method predicts equilibria so as to minimize distance from equilibrium in payoff space, over a distribution of pseudo-games. 
% 
% \textbf{Contributions.}
\subsection{Contributions}
% \gp{Shouldn't the contribution section include the previous paragraph?}
% 

Earlier  approaches~\cite{duan2021pac,marris2022cce} do not extend even to
continuous (non pseudo-)games, since   evaluating the expected exploitability and its gradient over a distribution of pseudo-games requires  solving as many convex programs as examples in the data set. 
Additionally, in pseudo-games, the exploitability is not Lipschitz-continuous, and thus its gradient is unbounded (\Cref{sec_app:examples}), hindering the use of  standard tools to prove sample complexity and convergence bounds, and making training hard due to exploding gradients.
By delegating the task of computing a best-response to a discriminator,   our method circumvents the issue of solving a convex program, yielding a training  problem given by
a min-max optimization problem whose objective is Lipschitz-continuous, for which gradients can be guaranteed to be bounded under standard assumptions on the discriminator and the payoffs of  players.

Our approach also extends the class of (non pseudo-)games that can be solved through
deep learning methods from normal-form games to simultaneous-move continuous action
games, since  the non-linear program involved in the computation of  exploitability  in 
previous methods makes them  inefficient in application to continuous-action games.
%, opening the doors to solve many pseudo-games, 
%and including those are  not theoretically well-understood.
We  give polynomial-time convergence guarantees for our training algorithm for the special case of affine generators and affine discriminators (\Cref{thm:convergence_stationary_approx}, \Cref{sec:theory}) and provide generalization bounds for arbitrary function approximators (\Cref{thm:sample_complexity}, \Cref{sec:theory}). Finally, we provide empirical evidence that \nees{} outperforms state of the art baselines in Arrow-Debreu competitive economies, and show that \nees{} can replicate existing qualitative analyses for pseudo-games, suggesting that \nees{} makes predictions that not only have low expected exploitability, but also are qualitatively correct, i.e., close to the true GNE in action space (\Cref{sec:experiments}).
\begin{figure}
    \centering
    % Data
    \includegraphics[width=1\linewidth]{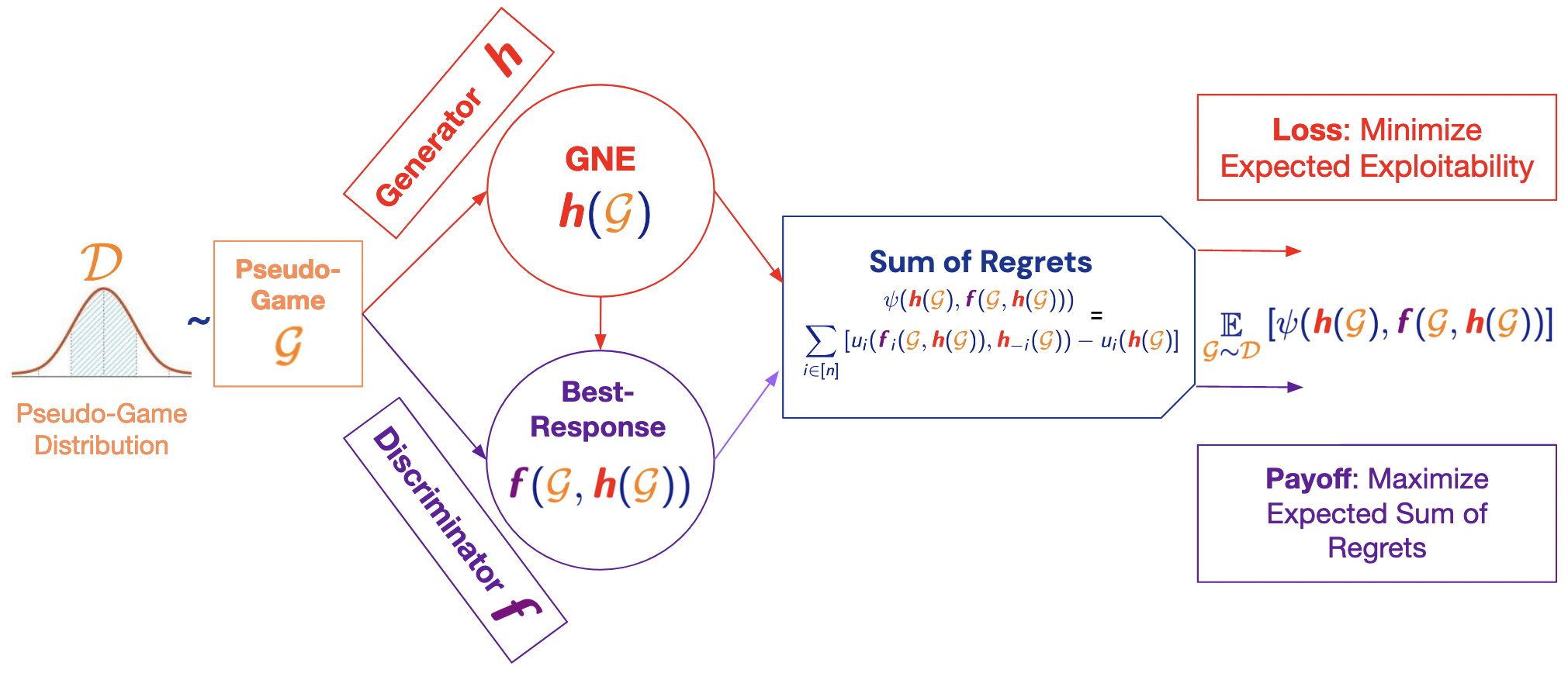}
    \caption{Summary of the Architecture of \nees{}.\label{fig:network_diagram} }
    \vspace{-1em}
\end{figure}
% 
% \textbf{Additional related work.}
\subsection{Additional Related Work}
We refer the reader to \Cref{sec_app:related} for a survey of methods to compute GNEs, and to \Cref{sec_app:application_motivation} for a survey of applications of GNE. 
Our contributions generally relate to a line of work on \emph{differentiable economics}, which seeks to use methods of neural computation for problems of economic design and equilibrium computation.
In regard to finding optimal economic designs, deep learning has been used  for problems of auction design
%
%mechanism can be seen as a neural network from bids to outcomes and trained to maximize a desired objective over a training set of bids %while penalizing failure of incentive compatibility. The mechanisms that have been tackled with this approach include %amongst others %\cite{brero2022learning}\
%revenue-maximizing auctions~
\cite{dutting2019optimal,curry2022learning,tacchetti2019neural,curry2022differentiable, gemp2022designing, rahme2020auction}
and matching~\cite{ravindranath2021deep}.
In regard to solving for equilibria, some recent works have tried to solve for Nash equilibria in auctions \cite{heidekruger2019computing, bichler2021learning}, and dynamic stochastic general equilibrium models \cite{curry2022finding, chen2021deep, hill2021solving}. 
% \if 0
% Finally, a related and more theoretical line of work in economics and computer science has aimed to bring tools from machine learning and computational learning theory to design auctions from samples of bidder valuations. Much of the effort here has focused on analyzing the sample complexity of designing revenue-maximizing auctions (see e.g. \cite{cole2014sample, mohri2016learning}). Some works have leveraged machine learning to optimize different aspects of mechanisms (\cite{lahaie2011kernel, dutting2015payment, narasimhan2016automated}).
% This line of work is also closely related to works on automated mechanism design (\cite{conitzer2002complexity, sandholm2015automated}), but these are limited to specialized classes of auctions. The theory we provide in this paper can be placed in light of the developments in these works.
% \fi
% %xx 
% %Our work also extends the use of {\em differentiable economics}~\cite{dutting2019optimal, shen2018automated, curry2022differentiable}, %from mechanism design to general equilibrium theory. 
% %Our framework can be used to solve any pseudo-game, and hence any general equilibrium model, bridging  the gap that has been existing between various general equilibrium models and the associated deep learning methods to solve them.

\section{Preliminaries}\label{sec:prelim}
% 
% 
% \textbf{Notation.}
\subsection{Notation}
All notation for variable types, e.g., vectors, are clear from context, if any confusions arise see \Cref{sec_ap:prelims}.
%
% We denote functions by a letter determined by the value of the function, e.g., $f$ if the mapping is scalar-valued, $\f$ if the mapping is vector-valued, and $\calF$ if the mapping is set-valued.
%
We denote the set of integers $\left\{0, \hdots, n-1\right\}$ by $[n]$, the set of natural numbers by $\N$, the set of real numbers by $\R$, and the positive and strictly positive elements of a set by a subscript $+$ and $++$, e.g., $\R_+$ and $\R_{++}$.
% subscript, respectively, e.g., $\R_+$ and $\R_{++}$. 
%
% We define the gradient $\grad[\x]: C^1(\calX) \to C^0(\calX) $ as the operator which takes as input a differentiable functional $f: \calX \to \R$, and outputs a vector-valued function consisting of the partial derivatives of $f$ w.r.t. $\x$.
% We denote the orthogonal projection onto a set $C$ by $\project[C]$, i.e., $\project[C](\x) = \argmin_{\y \in C} \left\|\x - \y \right\|_2^2$. 
% We denote the orthogonal projection operator onto a set $C$ by $\project[C]$, i.e., $\project[C](\x) = \argmin_{\y \in C} \left\|\x - \y \right\|^2$. 
We denote by $\simplex[n] = \{\x \in \R_+^n \mid \sum_{i = 1}^n x_i = 1\}$, and by $\simplex(A)$, the set of probability measures on the set $A$.
\subsection{Pseudo-Games}

A \mydef{pseudo-game} \cite{arrow-debreu} $\pgame \doteq (\numplayers, \actionspace,  \actions[][\pgame], \actionconstr[][][\pgame], \util[][][\pgame])$, denoted $(\numplayers, \actionspace,  \actions, \actionconstr, \util)$ when clear from context, comprises $\numplayers \in \N_+$ players, where player $\player \in \players$ chooses an action $\action[\player]$ from a non-empty, compact, and convex \mydef{action space} $\actionspace[\player] \subset \R^\numactions$.
We denote the players' joint action space by $\actionspace = \bigtimes_{\player \in \players} \actionspace[\player] \subset \R^{\numplayers \numactions}$.
Each player $\player \in \players$ aims to maximize their continuous \mydef{payoff},
% \footnote{The assumption that $\util[\player]$ is bounded between 0 and 1 is without loss of generality since the action space is compact, and the utility function is continuous implying that $\util[\player]$ is bounded and its range can thus be normalized to $[0,1]$.}
$\util[\player]: \actionspace \to \R$, which is concave in $\action[\player]$, by choosing a feasible action from a set of actions, $\actions[\player](\naction[\player]) \subseteq \actionspace[\player]$, this depending on the actions $\naction[\player] \in \actionspace[-\player] \subset \R^{(\numplayers-1)\numactions}$ of the other players. Here, $\actions[\player]: \actionspace[-\player] \rightrightarrows  \actionspace[\player]$ is a non-empty, continuous, compact- and convex-valued \mydef{(feasible) action correspondence}. It is this dependence on each others' actions that makes this a pseudo-game, and not just a game.
For convenience, we represent $\actions[\player]$
as $\actions[\player](\naction[\player]) = \{ \action[\player] \in \actionspace[\player] \mid \actionconstr[\player][\numconstr](\action[\player], \naction[\player]) \geq \zeros, \text{ for all } \numconstr \in [\numconstrs]\}$, where for all $\player \in \players$, and $\numconstr \in [\numconstrs]$, $\actionconstr[\player][\numconstr]$ is a continuous and concave function in $\action[\player]$, which defines the constraints.\footnote{This is without loss of generality since any compact convex set can be represented by the superlevel sets of a continuous concave function \cite{rockafellar2009variational}.} 
We denote the \mydef{product (feasible) action correspondence} by $\actions(\action) = \bigtimes_{\player \in \players} \actions[\player](\naction[\player])$, which we note is guaranteed to be non-empty, continuous, and compact-valued, but not necessarily convex-valued. We denote $\actions$ the \mydef{set of jointly feasible strategies}, i.e., $\actions = \{\action  \in \actionspace \mid \actionconstr[\player][\numconstr](\action) \geq \zeros, \forall \player \in \players, \numconstr \in [\numconstrs] \}$. We denote the class of all pseudo-games by $\pgames$.\footnote{A \mydef{game} \cite{nash1950existence} is a pseudo-game where, for all players $\player \in \players$, $\actions[\player]$ is a constant correspondence with value $\actionspace[\player]$. A \mydef{normal-form game} is a game where $\actionspace[\player] = \simplex[\numactions]$ and for all $\player \in \players$, $\util[\player]$ is affine.} 

% Given a pseudo-game $\pgame$, an $\varepsilon$-\mydef{generalized Nash equilibrium (GNE)} is strategy profile 
% $\action^* \in \actions(\action^*)$, s.t.~for all $\player \in \players$ and $\action[\player] \in \actions[\player](\naction[\player][][][*])$, $\util[\player](\action^*) \geq \util[\player](\action[\player], \naction[\player][][][*]) - \varepsilon$. 
Given a pseudo-game $\pgame$, a \mydef{generalized Nash equilibrium (GNE)} is strategy profile 
$\action^* \in \actions$, s.t.~for all $\player \in \players$ and $\action[\player] \in \actions[\player](\naction[\player][][][*])$, $\util[\player](\action^*) \geq \util[\player](\action[\player], \naction[\player][][][*])$. 
An \mydef{equilibrium mapping}, $\hypothesis: \pgames \to \actions$ is a mapping that takes as input a pseudo-game $\pgame \in \pgames$ and outputs a GNE, $\hypothesis(\pgames)$, for that game. 
% We denote the set of all equilibrium mappings as $\actions[][\pgames]$.

Given a pseudo-game $\pgame$, we define the \mydef{regret} for 
player $\player \in \players$ for  action $\action[\player]$  as compared to another action $\otheraction[\player]$, given the action profile $\action[-\player]$ of other players, as 
%
%\begin{align}
    $\regret[\player][\pgame](\action[\player], \otheraction[\player]; \naction[\player]) = \util[\player][][\pgame](\otheraction[\player], \naction[\player]) - \util[\player][][\pgame](\action[\player], \naction[\player])$.
Additionally, the \mydef{cumulative regret}, $\cumulregret[][\pgame]: \actionspace \times \actionspace \to \R$
between two action profiles $\action \in \actionspace$ and $\otheraction \in \actionspace$
%across all players in a pseudo-game $\pgame \in \pgames$ is 
is given by $\cumulregret(\otheraction; \action) = \sum_{\player \in \players} \regret[\player][\pgame](\action[\player], \otheraction[\player]; \naction[\player])$.
%\enspace .
%\end{align}
%\noindent
% For notational simplicity, we will often write $\regret[\player][\pgame](\action[\player], \naction[\player]) = \max_{\otheraction[\player] \in \actions[\player](\naction[\player][][][*])} \util[\player](\otheraction[\player], \naction[\player]) - \util[\player](\action[\player], \naction[\player])$
Further, the \mydef{exploitability}  (or \mydef{Niakido-Isoda potential  function} \cite{nikaido1955note}),
$\exploit[][\pgame]: \actionspace \to \R$, of an action profile $\action$ is defined as 
$\exploit[][\pgame](\action) =  \sum_{\player \in \players} \max_{\otheraction[\player] \in \actions[\player](\naction[\player])} \regret[\player][\pgame](\action[\player], \otheraction[\player]; \naction[\player])$.
%An alternative name for exploitability in the GNE literature is the Nikaido-Isoda potential function
The max is taken over $\actions[\player](\naction[\player])$,
%to obtain the sum of unilateral deviations, 
since a player can only deviate within the set of feasible strategies. Note that an action profile $\action^*$ is a GNE iff $\exploit[][\pgame](\action^*) = 0$.
% 
% 
% \textbf{Mathematical preliminaries.}
\subsection{Mathematical Preliminaries}
For any function $\obj: \calX \to \calY$, we denote its Lipschitz-continuity constant by $\lipschitz[\obj]$.
For two arbitrary sets $\hypotheses, \hypotheses^\prime \subset \calF$, the set  $\hypotheses^\prime$ \mydef{$\coveringnumber$-covers} $\hypotheses$ (w.r.t. some norm $\| \cdot \|$) if for any $\hypothesis \in \hypotheses$ there exists $\hypothesis^\prime \in \hypotheses^\prime$ such that $\left\|\hypothesis - \hypothesis^\prime \right\| \leq \coveringnumber$. 
The \mydef{$\coveringnumber$-covering number}, $\coveringcard(\hypothesis, \coveringnumber)$, of a set $\hypotheses$ is the cardinality of the smallest set $\hypotheses^\prime \subset \calF$ that $\coveringnumber$-covers $\hypotheses$.
A set $\hypotheses$ is said to have a \mydef{bounded covering number}, if for all $\coveringnumber \in \R_+$, we have that the logarithm of its covering number is polynomially bounded in $1/\coveringnumber$, that is $\log(\coveringcard(\hypothesis, \coveringnumber)) \leq \poly(\nicefrac{1}{\coveringnumber})$. Additional background can be found in \Cref{sec_ap:prelims}.

\section{Generative Adversarial Learning of Equilibrium Mappings}\label{sec:GAN}

In this section, we revisit previous formulations of the problem of learning an equilibrium mapping, discuss the computational difficulties associated with these formulations when used to learn GNE, and introduce our generative adversarial learning formulation. 

As creating a sufficiently diverse sample of (pseudo-game, GNE) pairs, while performing adequate equilibrium selection, is intractable both theoretically and computationally, we forgo of supervised learning methods, and formulate the equilibrium mapping learning problem as an \mydef{unsupervised} learning problem, 
following the approach adopted by~\citet{marris2022cce,zhijian2021ne} for finding Nash equilibria.
Given a hypothesis class $\hypotheses \subseteq \actions[][\pgames]$, and a distribution over pseudo-games $\distribpgames \in \simplex(\pgames)$, \mydef{the unsupervised learning problem} for an equilibrium mapping consists of finding a hypothesis $\hypothesis^* \in \argmin_{\hypothesis \in \hypotheses} \Ex_{\pgame \sim \distribpgames} \left[\ell(\pgame, \hypothesis(\pgame))\right]$ where  $\ell: \pgames \times \actionspace \to \R$ is a loss function that outputs the distance of $\action \in \actionspace$ from a GNE,
such that for any pseudo-game $\pgame \in \pgames$, $\ell(\pgame, \action^*) = 0$ iff $\action^*$ is a GNE of $\pgame$.
In particular, \citet{marris2022cce, zhijian2021ne} suggest to use   exploitability
as the loss function.
However, a number of  issues  arise when trying to minimize the expected exploitability over a distribution of pseudo-games:

(1) Computing the gradient of the exploitability, when it exists, for even only one pseudo-game requires solving a concave maximization problem
%, making the computation of the gradient of the expected exploitability costly as it requires to solve as multiple concave optimization problems. We note that 
(this reasoning also applies to continuous games).

(2) The exploitability in pseudo-games, is in general not Lipschitz-continuous (unlike in regular games), even when payoffs are Lipschitz-continuous, since the inputs of the exploitability parameterize the constraints in the optimization problem defining each player's maximal regret computation.
This makes it unclear how
to efficiently approximate $\Ex_{\pgame \sim \distribpgames} \left[\exploit[][\pgame](\action) \right]$ from samples, without knowledge of the distribution.
% \if 0
% efficiently obtain estimates of the expected exploitability, $\Ex_{\pgame \sim \distribpgames} \left[\exploit[][\pgame](\action) \right]$,
% from a sample of pseudo-games $\distribpgames$.

% This is unfortunate as it suggests that
% % because if one wants to approximate $\Ex_{\pgame \sim \distribpgames} \left[\exploit[][\pgame](\action) \right]$ from samples without knowledge of the distribution, as one does often in practice \cite{} since computing the expected exploitability in closed form is intractable in general, without Lipschitz-continuity of $\exploit[][\pgame](\action)$, $\Ex_{\pgame \sim \distribpgames} \left[\exploit[][\pgame](\action) \right]$ is also not guaranteed to be Lipschitz-continuous, and as a result 
% one cannot approximate $\Ex_{\pgame \sim \distribpgames} \left[\exploit[][\pgame](\action) \right]$ with an efficient sample size, i.e., a sample size polynomial in the parameters of the pseudo-game, as it is often the case \cite{kearns1994introduction}
% \fi
%

(3) The exploitability in pseudo-games  is absolutely continuous and hence differentiable almost everywhere \cite{afriat1971envelope}, but in contrast to games, the gradients   cannot be bounded. This in turn precludes the convergence of first-order methods.\footnote{
%Note that while exploitability is also not guaranteed to be differentiable in games, it is guaranteed to Lipschitz-continuous. 
We refer the reader to \Cref{sec_app:examples} for an example in which exploitability is not Lipschitz-continuous and has unbounded gradients.}

To address the aforementioned issues, we propose a generative adversarial learning formulation of the associated 
unsupervised learning problem for equilibrium mappings.
The formulation relies on the following observation, whose proof is deferred to \Cref{sec_app:proofs}: the exploitability can be computed \emph{ex post} after computing the expected cumulative regret by optimizing over the space of best-response functions from pseudo-games to actions, 
rather than the space of actions individually for every pseudo-game.
\begin{restatable}{observation}{obsganeq}\label{obs:gan_eq}
For any $\distribpgames \in \simplex(\pgames)$, we have:
\begin{align}
    &\min_{\hypothesis \in \actions[][\pgames]} \Ex_{\pgame \sim \distribpgames} \left[\exploit[][\pgame](\hypothesis(\pgame)) \right]  \notag \\
    &= \min_{\hypothesis \in \actions[][\pgames]} \max_{\substack{\otherhypothesis \in \actionspace^{\pgames}: \forall \pgame \in \pgames,\\ \otherhypothesis(\pgame) \in \actions[][\pgame](\hypothesis(\pgame))}}  \Ex_{\pgame \sim \distribpgames} \left[ \cumulregret[][\pgame](\hypothesis(\pgame), \otherhypothesis(\pgame))\right].\label{eq:gan_obs}
\end{align}
\end{restatable}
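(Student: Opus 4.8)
The plan is to fix an arbitrary generator $\hypothesis \in \actions[][\pgames]$, establish the identity
\begin{align}
\Ex_{\pgame\sim\distribpgames}\!\left[\exploit[][\pgame](\hypothesis(\pgame))\right]
 = \max_{\substack{\otherhypothesis\in\actionspace^{\pgames}:\ \otherhypothesis(\pgame)\in\actions[][\pgame](\hypothesis(\pgame))\ \forall\pgame}}\ \Ex_{\pgame\sim\distribpgames}\!\left[\cumulregret[][\pgame](\hypothesis(\pgame),\otherhypothesis(\pgame))\right].\label{eq:obsgan-inner}
\end{align}
and then take $\min_{\hypothesis\in\actions[][\pgames]}$ on both sides, which is exactly \eqref{eq:gan_obs}. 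The first ingredient is a purely algebraic rewriting of exploitability as a constrained maximum of cumulative regret: since the product feasible set $\actions[][\pgame](\hypothesis(\pgame))=\bigtimes_{\player\in\players}\actions[\player](\naction[\player])$ is a Cartesian product over the deviating players and $\cumulregret[][\pgame](\hypothesis(\pgame),\cdot)$ is additively separable across the corresponding blocks of coordinates, the maximum of the sum over the product equals the sum of the per-block maxima; hence $\exploit[][\pgame](\action)=\max_{\otheraction\in\actions[][\pgame](\action)}\cumulregret[][\pgame](\action,\otheraction)$ for every $\action\in\actionspace$ and $\pgame\in\pgames$, and the maximum is attained by Weierstrass's theorem (each $\regret[\player][\pgame]$ is continuous in the deviation and each $\actions[\player](\naction[\player])$ is compact).

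The ``$\ge$'' inequality in \eqref{eq:obsgan-inner} is immediate: for any admissible $\otherhypothesis$ we have $\otherhypothesis(\pgame)\in\actions[][\pgame](\hypothesis(\pgame))$, so $\cumulregret[][\pgame](\hypothesis(\pgame),\otherhypothesis(\pgame))\le\exploit[][\pgame](\hypothesis(\pgame))$ pointwise in $\pgame$; integrating against $\distribpgames$ and then taking the supremum over $\otherhypothesis$ preserves the inequality.

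The ``$\le$'' inequality is the crux, and the obstacle is not an inequality manipulation but a \emph{measurable-selection} point: for each $\pgame$ the set of maximizers of $\cumulregret[][\pgame](\hypothesis(\pgame),\cdot)$ over $\actions[][\pgame](\hypothesis(\pgame))$ is nonempty, but to obtain an \emph{admissible} $\otherhypothesis^\star\in\actionspace^{\pgames}$ one must select these maximizers coherently across $\pgame$ so that $\pgame\mapsto\cumulregret[][\pgame](\hypothesis(\pgame),\otherhypothesis^\star(\pgame))$ is (measurable and) integrable. I would invoke a measurable maximum / interchange-of-maximization-and-integration theorem (e.g.\ Rockafellar--Wets, \emph{Variational Analysis}, Thm.~14.60, or the Aliprantis--Border measurable maximum theorem): the correspondence $\pgame\rightrightarrows\actions[][\pgame](\hypothesis(\pgame))$ is nonempty- and compact-valued and measurable --- which follows from continuity of the feasible-action correspondences $\actions[\player]$ together with measurability of $\hypothesis$ --- and $(\pgame,\otheraction)\mapsto\cumulregret[][\pgame](\hypothesis(\pgame),\otheraction)$ is a Carath\'eodory integrand: measurable in $\pgame$ (using measurability of $\hypothesis$ and of the payoff parametrization), continuous in $\otheraction$, and bounded since $\cumulregret[][\pgame]$ is continuous on the compact set $\actionspace\times\actionspace$. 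The theorem then yields a measurable selector $\otherhypothesis^\star$ with $\cumulregret[][\pgame](\hypothesis(\pgame),\otherhypothesis^\star(\pgame))=\exploit[][\pgame](\hypothesis(\pgame))$ for $\distribpgames$-a.e.\ $\pgame$, whence $\Ex_{\pgame\sim\distribpgames}[\exploit[][\pgame](\hypothesis(\pgame))]=\Ex_{\pgame\sim\distribpgames}[\cumulregret[][\pgame](\hypothesis(\pgame),\otherhypothesis^\star(\pgame))]$, which is at most the maximum over admissible $\otherhypothesis$. Combining the two directions gives \eqref{eq:obsgan-inner}, and taking $\min_{\hypothesis}$ on both sides completes the proof. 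As a final remark, one can trade the selection theorem for a routine $\varepsilon$-approximation argument, which however only establishes the equality with $\sup$ in place of $\max$; attaining the maximum still needs the pointwise maximizers (Weierstrass) together with a measurable selection.
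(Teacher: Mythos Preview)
Your proof is correct and follows essentially the same approach as the paper: rewrite exploitability as a maximum of cumulative regret over the product feasible set (using separability across players), then interchange the pointwise maximum with the expectation by assembling the per-game maximizers into a single function $\otherhypothesis^\star$, and finally minimize over $\hypothesis$. The paper's own proof is a terse chain of equalities that simply asserts this interchange without discussing the measurable-selection issue you handle; your treatment is strictly more careful on that point, but the underlying argument is the same.
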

\noindent
By Arrow-Debreu's lemma on abstract economies~\cite{arrow-debreu}, $\hypothesis^*$ is guaranteed to exist and is an equilibrium mapping iff  $\hypothesis^* \in  \argmin\limits_{\hypothesis \in \actions[][\pgames]} \max\limits_{\substack{\otherhypothesis \in \actionspace^{\pgames \times \actions}: \forall \pgame \in \pgames, \\ \otherhypothesis(\pgame, \hypothesis(\pgame)) \in \actions[][\pgame](\hypothesis(\pgame))}}  \Ex_{\pgame \sim \distribpgames} \left[ \cumulregret[][\pgame](\hypothesis(\pgame), \otherhypothesis(\pgame))\right]$.

This  problem formulation allows us to overcome issues (1) and (2).  For (1), rather than solve 
a concave program to compute the exploitability for each pseudo-game and action profile, 
we can learn a function that maps action profiles to their associated best-response profiles (see for example~\citet{lanctot2017unified} for training best-response oracles).
For (2),  the objective function in \Cref{eq:gan_obs} is Lipschitz-continuous when payoff functions are,  which  opens the doors to use standard proof techniques to learn the objective from a polynomial sample of pseudo-games. 

Still, the gradient of $\max_{\otherhypothesis}\Ex_{\pgame \sim \distribpgames} \left[ \cumulregret[][\pgame](\hypothesis(\pgame), \otherhypothesis(\pgame; \hypothesis(\pgame)))\right]$ with respect to $\hypothesis$ is  in general unbounded even when it exists,
due to the constraint $\forall \pgame \in \pgames, \otherhypothesis(\pgame) \in \actions(\hypothesis(\pgame))$. However, since any solution $\otherhypothesis^*(\pgame, \hypothesis(\pgame))$ to the inner optimization problem $\max_{\otherhypothesis \in \actionspace^\pgames: \forall \pgame \in \pgames, \otherhypothesis(\pgame) \in \actions(\hypothesis(\pgame))}  \Ex\left[ \cumulregret[][\pgame](\hypothesis(\pgame), \otherhypothesis(\pgame))\right]$ is implicitly parameterized by the choice of equilibrium mapping $\hypothesis$, we can represent this dependence explicitly in the optimization problem, and restrict our selection of $\otherhypothesis$ to a continuously differentiable hypothesis class $\otherhypotheses \subset \actionspace^{\pgames \times \actions}$, and 
% sidestep the issue by obtaining an approximate gradient (that is guaranteed to be bounded) via the implicit function theorem, and in this way
overcome issue (3).

With these observations in mind, given hypothesis classes $\hypotheses \subset \actions[][\pgames]$,
and $\otherhypotheses \subset \actionspace^{\pgames \times \actions}$, the \mydef{generative adversarial learning problem} is to find a tuple $(\hypothesis^*, \otherhypothesis^*) \in \hypotheses \times \otherhypotheses$ that consists  of a \mydef{generator} and \mydef{discriminator} to solve the following optimization problem:
\begin{align}\label{eq:min_max_hypo_abstract}
\!\!    \!\! \min_{\hypothesis \in \hypotheses} \!\!\!\!\max_{\substack{\otherhypothesis \in \otherhypotheses: \forall \pgame \in \pgames,\\ \otherhypothesis(\pgame; \hypothesis(\pgame)) \in \actions[][\pgame](\hypothesis(\pgame))}} \!\!\!\! \Ex_{\pgame \sim \distribpgames} \!\left[ \cumulregret[][\pgame](\hypothesis(\pgame), \!\otherhypothesis(\pgame; \!\hypothesis(\pgame)))\right].
\end{align}
\noindent This problem can be interpreted as a zero-sum game between the generator and the discriminator. The generator takes as input a parametric representation of a pseudo-game, and predicts a solution that consists of an \mydef{action profile}, i.e., a tuple of actions, one per agent. The discriminator  takes the game and the output of the generator as input, and outputs a best-response for each agent (\Cref{fig:network_diagram}). The optimal mappings  $(\hypothesis^*, \otherhypothesis^*)$ for  \Cref{eq:min_max_hypo_abstract} are then called the \mydef{Generative Adversarial Equilibrium Solver (\nees)}.

\if 0
That is, the \mydef{Neural Economic Equilibrium Network (\nees{})} is a tuple $(\hypothesis^*, \otherhypothesis^*)$ which corresponds to the solutions of \Cref{eq:min_max_hypo_abstract} that satisfy:
\begin{align}
     &\max_{\substack{\otherhypothesis \in \hypotheses: \forall \pgame \in \pgames,\\ \otherhypothesis(\pgame; \hypothesis(\pgame)) \in \actions[][\pgame](\hypothesis(\pgame))}}  \Ex\left[ \cumulregret[][\pgame](\hypothesis^*(\pgame), \otherhypothesis(\pgame; \hypothesis(\pgame)))\right] 
     \notag
     \\ & \leq  \Ex \left[ \cumulregret[][\pgame](\hypothesis^*(\pgame), \otherhypothesis^*(\pgame; \hypothesis(\pgame)))\right] \notag
     \\  &\leq \min_{\hypothesis \in \hypotheses} \max_{\substack{\otherhypothesis \in \otherhypotheses: \forall \pgame \in \pgames,\\ \otherhypothesis(\pgame; \hypothesis(\pgame)) \in \actions[][\pgame](\hypothesis(\pgame))}} \Ex \left[ \cumulregret[][\pgame](\hypothesis(\pgame), \otherhypothesis(\pgame; \hypothesis(\pgame)))\right]
\end{align}
\fi

\section{Convergence and Sample Complexity}\label{sec:theory}

%With the learning problem having been formulated, we now turn our attention to the problem of training, i.e., computing the optimal generator and discriminator.

For training, we propose a stochastic variant of the nested gradient descent ascent algorithm \cite{goktas2021minmax}, which we call \mydef{stochastic exploitability descent}. Our algorithm computes the optimal generator and discriminator  by estimating the gradient of the expected cumulative regret  and exploitability on  a training set  of
pseudo-games. 
% We  provide  convergence and sample complexity bounds.
% 
\begin{algorithm}[!ht]
\caption{Stochastic Exploitability Descent}
\textbf{Inputs:} $\batchpgames, \learnrate[\hypothesis], \learnrate[\otherhypothesis], \numiters[\hypothesis], \numiters[\otherhypothesis],
\weight[][\hypothesis][0],
\weight[][\otherhypothesis][0]$\\
\textbf{Outputs:} $(\weight[][\hypothesis][\iter],
\weight[][\otherhypothesis][\iter])_{\iter = 0}^{\numiters[\hypothesis]}$
\label{alg:cumul_regret}
\begin{algorithmic}[1]
\FOR{$\iter = 0, \hdots, \numiters[\hypothesis] - 1$}
    \STATE Receive batch $\batchpgames[][\iter] \subset \samplepgames$.
    \STATE
    $\weight[][{\hypothesis}][\iter+1] = \weight[][{\hypothesis}][\iter] - \learnrate[\hypothesis][\iter] \left( \nicefrac{1}{\left|\batchpgames[\hypothesis][\iter] \right|}  \sum_{\pgame \in \batchpgames[\hypothesis][\iter]} \left[ \grad[{\weight[][\hypothesis]}] \avg[\cumulregret](\weight[][\hypothesis][\iter], \weight[][\otherhypothesis][\iter]) \right]\right)$
    \STATE $\weight[][\otherhypothesis]  = \zeros$
    \FOR{$s = 0, \hdots, \numiters[\otherhypothesis] - 1$}
        \STATE Receive batch $\batchpgames[][s] \subset \samplepgames$.
            \STATE $\weight[][\otherhypothesis]  =  \weight[][\otherhypothesis] + \learnrate[\hypothesis][s] \left( \nicefrac{1}{\left| \batchpgames[\otherhypothesis][s] \right|}  \sum_{\pgame \in \batchpgames[\otherhypothesis][s]} \grad[{\weight[][\otherhypothesis]}] \avg[\cumulregret](\weight[][\hypothesis][\iter], \weight[][\otherhypothesis]) \right)$
    \ENDFOR
    \STATE $\weight[][\otherhypothesis][\iter + 1] =  \weight[][\otherhypothesis]$
    \ENDFOR
\STATE Return $(\weight[][\hypothesis][\iter],
\weight[][\otherhypothesis][\iter])_{\iter = 0}^{\numiters[\hypothesis]}$
\end{algorithmic}
\end{algorithm}

% \paragraph{Training algorithm.}
\subsection{Training Algorithm}
For purposes of applicability, going forward, we will assume that we have access to the distribution of pseudo-games $\distribpgames$ only indirectly through a \mydef{training set} $\samplepgames \sim \distribpgames$ of $\numsamples \in \N_+$ sampled pseudo-games.
 %
% In practice, one faces two issues: 1) we might not have direct access to the pseudo-game distribution $\pgame$, or it might not be feasible to compute the expectation $\Ex \left[ \cumulregret[][\pgame](\cdot, \cdot))\right] $ in closed form, and 2) it might be impossible to optimize over hypothesis classes $\hypotheses, \otherhypotheses$ directly.
%
% As a result, we  approximate $\Ex_{\pgame \sim \distribpgames} \left[\exploit[][\pgame](\action) \right]$ via the sample mean, $\Ex_{\pgame \sim \mathrm{unif}(\samplepgames)} \left[\exploit[][\pgame](\action) \right] = \frac{1}{\numsamples} \sum_{\pgame \in \samplepgames} \exploit[][\pgame](\action)$ of a \mydef{training set} $\samplepgames \subseteq \pgames$ of $\numsamples \in \N_+$ sampled pseudo-games. 
Additionally, we will assume that the generator $\hypothesis \in \hypotheses$ and discriminator $\otherhypothesis \in \otherhypotheses$ are parameterized by  vectors, $\weight[][\hypothesis], \weight[][\otherhypothesis] \in \R^\numweights$, such that for all pseudo-games $\pgame \in \pgames$, and weight vectors $\weight \in \R^\numweights$, $\hypothesis(\pgame; \weight[][\hypothesis]) \in \actions$ and $\otherhypothesis(\pgame; \hypothesis(\pgame; \weight[][\hypothesis]), \weight[][\otherhypothesis]) \in \actions(\hypothesis(\pgame; \weight[][\hypothesis]))$.

For notational simplicity,  we define the \mydef{expected exploitability} and \mydef{expected cumulative regret}, respectively, as:
\begin{align*}
    &\mean[\cumulregret](\weight[][\hypothesis], \weight[][\otherhypothesis]) = \Ex \left[ \cumulregret[][\pgame](\hypothesis(\pgame; \weight[][\hypothesis]), \otherhypothesis(\pgame, \hypothesis(\pgame; \weight[][\hypothesis]); \weight[][\otherhypothesis]))\right]\\
    &\mean[\exploit](\weight[][\hypothesis]) = \max_{\weight[][\otherhypothesis] \in \R^\numweights: \forall \pgame \in \pgames, \otherhypothesis(\pgame; \weight[][\otherhypothesis]) \in \actions(\hypothesis(\pgame, \weight[][\hypothesis]))} \mean[\cumulregret](\weight[][\hypothesis], \weight[][\otherhypothesis]),
\end{align*}
\noindent
where the expectation is over the distribution of pseudo-games $\pgame \sim \distribpgames$.
Similarly, we define the \mydef{empirical cumulative regret} and the \mydef{empirical exploitability} respectively as:
\begin{align*}
    &\avg[\cumulregret](\weight[][\hypothesis], \weight[][\otherhypothesis]) = \Ex \left[ \cumulregret[][\pgame](\hypothesis(\pgame; \weight[][\hypothesis]), \otherhypothesis(\pgame; \hypothesis(\pgame; \weight[][\hypothesis]), \weight[][\otherhypothesis]))\right]\\
    &\avg[\exploit](\weight[][\hypothesis]) = \max_{\weight[][\otherhypothesis] \in \R^\numweights: \forall \pgame \in \pgames, \otherhypothesis(\pgame; \weight[][\otherhypothesis]) \in \actions(\hypothesis(\pgame; \weight[][\hypothesis]))} \avg[\cumulregret](\weight[][\hypothesis], \weight[][\otherhypothesis]),
\end{align*}
\noindent
where the expectation is over the uniform distribution over the training set, $\pgame \sim \mathrm{unif}(\samplepgames)$.
Putting this together, our training problem becomes:
\begin{align}
    \min_{\weight[][\hypothesis] \in \R^\numweights} \max_{\substack{\weight[][\otherhypothesis] \in \R^\numweights: \forall \pgame \in \pgames,\\ \otherhypothesis(\pgame; \hypothesis(\pgame; \weight[][\hypothesis]), \weight[][\otherhypothesis]) \in \actions(\hypothesis(\pgame; \weight[][\hypothesis]))}}  \mean[\cumulregret](\weight[][\hypothesis], \weight[][\otherhypothesis]).
    % \Ex_{\pgame \sim \distribpgames} \left[ \cumulregret[][\pgame](\hypothesis(\pgame; \weight[][\hypothesis]), \otherhypothesis(\pgame; \hypothesis(\pgame; \weight[][\hypothesis]), \weight[][\otherhypothesis]))\right]
\end{align}
% \vspace{-1em}
We propose \Cref{alg:cumul_regret} to solve this optimization problem. This is
a nested stochastic gradient descent-ascent algorithm, which for each generator descent step runs multiple stochastic gradient ascent steps on the weights of the discriminator to approximate the empirical cumulative regret by processing the pseudo-games in the training set in \mydef{batches}, i.e., as mutually exclusive subsets of the training set. After the stochastic gradient ascent steps are done, the algorithm then takes a step of stochastic gradient descent on the empirical exploitability w.r.t. the weights of the generator using the discriminator's weights computed by the stochastic gradient ascent steps to compute the gradient. Note that when the hypothesis class of the discriminator is assumed to contain only differentiable functions, the gradient of the generator with respect to its weights exist, and they are given by the implicit function theorem. 

\subsection{Convergence bounds.}
We  give assumptions under which our
algorithm converges to a stationary point of the empirical exploitability in polynomial-time.
\begin{assumption}
\label{assum:convergence}
For any player $\player \in \players$ and $\pgame \in \mathrm{supp}(\distribpgames)$:
1.~(Lipschitz-smoothness) their payoff $\util[\player][][\pgame]$ is $\lipschitz[{\grad[\util]}]$-Lipchitz smooth,
2.~(Strong concavity) their payoff $\util[\player][][\pgame]$ is $\scparam$-strongly-concave in $\action[\player]$,
and 3.~(Affinely parameterized hypothesis classes) For all $\hypothesis \in \hypotheses \subset \actions[][\pgames \times \R^\numweights]$, $\otherhypothesis \in \otherhypotheses \subset \actions[][\pgames \times \actions \times \R^\numweights]$, $\pgame \in \pgames$, $\hypothesis(\pgame; \cdot)$, and $\otherhypothesis(\pgame; \cdot)$ are affine. 
\end{assumption}

The following assumption
%, which will be used for the generalization bound and which we present now for notational convenience in the following theorem, is 
is implied by \Cref{assum:convergence}.
\begin{assumption}
\label{assum:sample}
(Lipschitz-continuity) For any player $\player \in \players$ and $\pgame \in  \supp(\distribpgames)$, their payoff $\util[\player][][\pgame]$ is $\lipschitz[{\grad[\util]}]$-Lipchitz continuous. 
\end{assumption}

 Lipschitz-smoothness is a standard assumption in the convex optimization literature \cite{boyd2004convex}, while strong concavity in each player's action is an assumption that is much weaker than 
 strong monotonicity of the pseudo-game which is commonly used in the GNE literature to prove convergence results \cite{jordan2022gne}.\footnote{Strong concavity is necessary to obtain our results, since it ensures that the composition of the discriminator and the exploitability satisfy the PL-condition \cite{karimi2016linear} allowing us to obtain convergence to an optimal solution for the discriminator.}
 For omitted definitions, and proofs/results we refer the reader to \Cref{sec_ap:prelims} and \Cref{sec_app:proofs} respectively.

 Theorem~\ref{thm:convergence_stationary_approx}
tells us that our algorithm converges to a stationary point of the empirical exploitability at a  $\tilde{O}(\nicefrac{1}{\sqrt{\numiters[\hypothesis]}})$ rate, up to an error term that depends linearly on the distance, $\varepsilon$, of the discriminator computed by the algorithm w.r.t.~to the optimal discriminator.
A smaller $\varepsilon$ results in higher accuracy, at the expense of a longer run time.
%as $\numiters[\otherhypothesis]$ will then increase.
%
\begin{restatable}[Convergence to Stationary Point]{theorem}{thmconvergencestationaryapprox}
\label{thm:convergence_stationary_approx}
Suppose that \Cref{assum:convergence} holds. Let $\singularval[\mathrm{min}](\otherhypothesis)$ be the smallest non-zero singular value of $\otherhypothesis$. If \Cref{alg:cumul_regret} is run with learning rates $\learnrate[\hypothesis][\iter] = \nicefrac{1}{\sqrt{\iter}}$ and $\learnrate[\otherhypothesis][s] = \nicefrac{(2 s + 1)}{(2\singularval[\mathrm{min}](\otherhypothesis)\scparam(s + 1)^2)}$ for all $\iter \in [\numiters[\hypothesis]]$, $s \in [\numiters[\otherhypothesis]]$. Then, for any number of iterations $\numiters[\hypothesis] \in \N_{++}$, and  $\numiters[\otherhypothesis] \geq \frac{\lipschitz[{\grad \avg[\cumulregret]}]^{3} \lipschitz[{\avg[\cumulregret]}]^2 }{2 \singularval[\mathrm{min}](\otherhypothesis)\scparam \varepsilon}$,
the outputs $(\weight[][\hypothesis][\iter])_{t = 0}^{\numiters[\hypothesis]}$
% $(\weight[][\hypothesis][\iter], \weight[][\otherhypothesis][\iter])_{t = 0}^{\numiters[\hypothesis]}$ 
satisfy
%\begin{align}
% $
%     \left( \min_{\iter = 0, \hdots, \numiters[\hypothesis] - 1} \left\|  \grad[{\weight[][\hypothesis]}] \avg[\cumulregret](\weight[][\hypothesis][\iter], \weight[][\otherhypothesis][\iter]) \right\|_2^2 \right)   \leq \frac{ \avg[\exploit](\action[][][0]) -  \avg[\exploit](\action[][][{\numiters[\hypothesis]}])}{\sum_{\iter = 1}^{\numiters[\hypothesis]} \learnrate[\hypothesis ][\iter]}   + \frac{\lipschitz[{\cumulregret \circ  (\hypothesis, \otherhypothesis)}]  \lipschitz[{\grad \avg[\exploit]}]^2}{2} \frac{\sum_{\iter = 1}^{\numiters[\hypothesis]} (\learnrate[\hypothesis ][\iter])^2 }{\sum_{\iter = 1}^{\numiters[\hypothesis]} \learnrate[\hypothesis ][\iter]} + \varepsilon^2  .
% $
%\end{align}
% where $g = \max_{\action \in \actions} \left\|\gradmap[{\learnrate[\action ][  ]}][{\val}](\action[][][]) \right\|_2$
$\left( \min_{\iter = 0, \hdots, \numiters[\hypothesis] - 1} \left\| \grad[\action] \avg[\exploit](\weight[][\hypothesis][\iter]) \right\|_2^2 \right) \in O\left(\frac{\log(\numiters[\hypothesis])}{\sqrt{\numiters[\hypothesis]}} + \varepsilon\right)$ where $\varepsilon > 0$.
% and 
% $\lim\limits_{\numiters[\hypothesis] \to \infty} \left( \min\limits_{\iter \in [\numiters[\hypothesis]]} \left\|  \grad[{\weight[][\hypothesis]}] \avg[\cumulregret](\weight[][\hypothesis][\iter], \weight[][\otherhypothesis][\iter]) \right\|_2^2 \right) = \varepsilon \enspace .$
\end{restatable}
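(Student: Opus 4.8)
The plan is to view Algorithm~\ref{alg:cumul_regret} as an inexact stochastic gradient descent on the empirical exploitability $\avg[\exploit](\weight[][\hypothesis])$, where the inexactness comes from the fact that the inner loop returns only an $\varepsilon$-approximate maximizer of the discriminator's weights. So the argument splits into three pieces: (i) control the inner maximization, (ii) translate an $\varepsilon$-optimal discriminator into an $O(\varepsilon)$ error in the gradient estimate fed to the outer loop, and (iii) run a standard nonconvex SGD convergence analysis with this biased/noisy gradient.

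First I would analyze the inner loop. Under Assumption~\ref{assum:convergence}, each payoff $\util[\player][][\pgame]$ is $\scparam$-strongly concave in $\action[\player]$ and the generator/discriminator are affinely parameterized, so the map $\weight[][\otherhypothesis] \mapsto \avg[\cumulregret](\weight[][\hypothesis][\iter], \weight[][\otherhypothesis])$ is a concave quadratic in $\weight[][\otherhypothesis]$ that, composed with the affine discriminator, satisfies the Polyak--Łojasiewicz inequality with constant proportional to $\singularval[\mathrm{min}](\otherhypothesis)\scparam$ (this is exactly the role flagged in the footnote to Assumption~\ref{assum:convergence}). Projected/constrained stochastic gradient ascent on a PL objective with the stated step sizes $\learnrate[\otherhypothesis][s] = \nicefrac{(2s+1)}{(2\singularval[\mathrm{min}](\otherhypothesis)\scparam(s+1)^2)}$ converges at an $O(1/\numiters[\otherhypothesis])$ rate, so after $\numiters[\otherhypothesis] \geq \frac{\lipschitz[{\grad \avg[\cumulregret]}]^{3}\lipschitz[{\avg[\cumulregret]}]^2}{2\singularval[\mathrm{min}](\otherhypothesis)\scparam\varepsilon}$ iterations the returned $\weight[][\otherhypothesis][\iter+1]$ is $\varepsilon$-suboptimal for the inner problem; by strong concavity it is also $O(\sqrt{\varepsilon})$-close to the true maximizer in parameter space, hence (by Lipschitz-smoothness of $\avg[\cumulregret]$) the outer gradient $\grad[{\weight[][\hypothesis]}]\avg[\cumulregret](\weight[][\hypothesis][\iter], \weight[][\otherhypothesis][\iter+1])$ differs from the true $\grad[\action]\avg[\exploit](\weight[][\hypothesis][\iter])$ (which exists and equals the partial gradient at the inner optimum by Danskin/the implicit function theorem, using differentiability of the discriminator class) by $O(\varepsilon)$ after absorbing constants into the threshold on $\numiters[\otherhypothesis]$.

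Next I would run the outer descent analysis. The composite objective $\avg[\exploit]$ is $\lipschitz$-smooth (composition of smooth payoffs with affine maps, plus the envelope over a strongly concave inner problem), so the standard descent lemma gives, for step $\iter$ with $\learnrate[\hypothesis][\iter] = \nicefrac{1}{\sqrt{\iter}}$,
\begin{align*}
\avg[\exploit](\weight[][\hypothesis][\iter+1]) \leq \avg[\exploit](\weight[][\hypothesis][\iter]) - \frac{\learnrate[\hypothesis][\iter]}{2}\left\|\grad[\action]\avg[\exploit](\weight[][\hypothesis][\iter])\right\|_2^2 + \learnrate[\hypothesis][\iter]\, O(\varepsilon) + O\!\left((\learnrate[\hypothesis][\iter])^2\right),
\end{align*}
where the $O(\varepsilon)$ term is the inner-loop bias and the $O((\learnrate[\hypothesis][\iter])^2)$ term collects the smoothness residual and the bounded stochastic-gradient variance (batches give an unbiased estimate of the empirical gradient with bounded second moment, using Assumption~\ref{assum:sample}). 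Telescoping over $\iter = 0,\dots,\numiters[\hypothesis]-1$, lower-bounding $\avg[\exploit]$ by $0$, using $\sum_{\iter=1}^{\numiters[\hypothesis]}\nicefrac{1}{\sqrt{\iter}} = \Theta(\sqrt{\numiters[\hypothesis]})$ and $\sum_{\iter=1}^{\numiters[\hypothesis]}\nicefrac{1}{\iter} = \Theta(\log \numiters[\hypothesis])$, and dividing through by $\sum_\iter \learnrate[\hypothesis][\iter]$ yields
\begin{align*}
\min_{\iter=0,\dots,\numiters[\hypothesis]-1}\left\|\grad[\action]\avg[\exploit](\weight[][\hypothesis][\iter])\right\|_2^2 \;\in\; O\!\left(\frac{\log(\numiters[\hypothesis])}{\sqrt{\numiters[\hypothesis]}} + \varepsilon\right),
\end{align*}
which is the claim.

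The main obstacle I expect is step (i)--(ii): carefully justifying that the inner problem is a \emph{constrained} optimization (the discriminator must satisfy $\otherhypothesis(\pgame;\hypothesis(\pgame;\weight[][\hypothesis]),\weight[][\otherhypothesis]) \in \actions(\hypothesis(\pgame;\weight[][\hypothesis]))$ for all $\pgame$) yet still admits a PL-type linear convergence for projected SGA, and that differentiating through the $\varepsilon$-approximate inner solution produces an $O(\varepsilon)$ (not $O(\sqrt{\varepsilon})$) gradient error after the stated blow-up of $\numiters[\otherhypothesis]$ — this is where the exponents $\lipschitz[{\grad\avg[\cumulregret]}]^3$ and $\lipschitz[{\avg[\cumulregret]}]^2$ in the iteration threshold come from, through chaining strong-concavity (parameter-to-value), Lipschitz-smoothness (value-to-gradient), and the conditioning constant $\singularval[\mathrm{min}](\otherhypothesis)$. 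Everything else is bookkeeping with standard nonconvex-SGD and PL-optimization lemmas.
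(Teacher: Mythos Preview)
Your proposal is correct and follows essentially the same route as the paper: the paper also (i) uses the PL property of $\weight[][\otherhypothesis]\mapsto\avg[\cumulregret](\weight[][\hypothesis][\iter],\weight[][\otherhypothesis])$ (via Karimi et al.) to get an $O(1/\numiters[\otherhypothesis])$ inner suboptimality, converts this to parameter closeness via the quadratic-growth consequence of PL, and then to an $O(\varepsilon)$ outer-gradient error via Lipschitz-smoothness of $\avg[\cumulregret]$, and (ii) combines this with the standard smooth descent lemma for $\avg[\exploit]$, telescopes, and uses $\sum_\iter 1/\sqrt{\iter}=\Theta(\sqrt{\numiters[\hypothesis]})$, $\sum_\iter 1/\iter=\Theta(\log\numiters[\hypothesis])$. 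The only cosmetic difference is that the paper invokes QG (implied by PL) rather than strong concavity for the parameter-distance step, which is the technically correct statement since the affine composition need not be strongly concave in $\weight[][\otherhypothesis]$.
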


\begin{remark}
This convergence 
%to a stationary point of the empirical exploitability
can be further improved to a convergence to a minimum of the empirical exploitability under stronger assumptions on the pseudo-games that ensure that the empirical exploitability is convex. For instance, for zero-sum, potential, a large class of monotone games,\footnote{We recall that a pseudo-game is said to be \mydef{monotone} (or \mydef{dissipative}) if for all $\action, \otheraction \in \actionspace$
$\sum_{\player \in \players} \left(\grad[{\action[\player]}] \util[\player](\action) - \grad[{\action[\player]}] \util[\player](\otheraction) \right)^T \left( \action - \otheraction \right) \leq 0$.} and a class of bilinear games, 
the empirical exploitability is guaranteed to be convex \cite{flam1994gne}.
\end{remark}

\subsection{Generalization bounds.}
We also give a sample complexity result for the  
empirical exploitability, showing how cumulative regret can be approximated 
with a sample of pseudo-games that is polynomial in the parameters of the game distribution, 
$\nicefrac{1}{\varepsilon}$ and $\nicefrac{1}{\delta}$.
%, the desired precision levels for the approximation.
%
The novelty of the result comes from  the  context: 
we mentioned earlier that expected exploitability need not be 
Lipschitz-continuous, making it hard 
to use any standard and simple machinery to prove a  sample complexity
bound.   However, by reframing this problem 
as one of learning the expected cumulative regret, which is  Lipschitz-continuous, we can obtain the result.
%
%The proof makes use of  standard tools such as covering numbers. to obtain sample complexity bounds. 
%
%
% , which by \Cref{assum:main} is guaranteed to be Lipschitz-smooth. We show in the next theorem that when the expected cumulative regret is sample-efficiently learnable.
%
\begin{restatable}[Sample Complexity of Expected Cumulative Regret]{theorem}{thmsamplecomplexity}\label{thm:sample_complexity}
For any hypothesis classes $\hypotheses \subset \actions[][\pgames]$, $\otherhypotheses \subset \actionspace^{\actions \times \pgames}$ $\varepsilon, \delta \in (0,1)$ and any pseudo-game distribution $\distribpgames \in \simplex(\pgames)$, with probability at least $1-\delta$ over draw of the training set $\samplepgames \sim \distribpgames^\numsamples$ with 
    $\numsamples \geq \frac{9 }{\varepsilon^2} \left[ \log\left(\frac{\coveringcard(\hypotheses, \nicefrac{\varepsilon}{6})}{\delta}\right) + \log\left(\frac{\coveringcard(\otherhypotheses, \nicefrac{\varepsilon}{6})}{\delta}\right) \right]$,
we have:
\begin{align*}
    \left| \mathop{\Ex}_{\pgame \sim \mathrm{unif}(\samplepgames)} \! \left[\cumulregret[][\pgame](\hypothesis(\pgame), \otherhypothesis(\pgame, \hypothesis(\pgame))) \right] \!\!- \!\! \mathop{\Ex}_{\pgame \sim \distribpgames} \! \left[\cumulregret[][\pgame](\hypothesis(\pgame), \otherhypothesis(\pgame, \hypothesis(\pgame))) \right] \right| \leq \varepsilon
\end{align*}
\end{restatable}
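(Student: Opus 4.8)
The plan is a standard uniform-convergence argument by covering numbers, Hoeffding's inequality, and a union bound; the only genuinely nonroutine step is checking that the integrand is well behaved, which is exactly the reframing advertised before the statement. For $\hypothesis \in \hypotheses$, $\otherhypothesis \in \otherhypotheses$ write $\Phi_{\hypothesis,\otherhypothesis}(\pgame) \doteq \cumulregret[][\pgame](\hypothesis(\pgame), \otherhypothesis(\pgame, \hypothesis(\pgame)))$. The first step is to record two properties of $\Phi_{\hypothesis,\otherhypothesis}$, uniformly over $\pgame \in \supp(\distribpgames)$: (i) it is bounded, because each payoff $\util[\player][][\pgame]$, $\player \in \players$, is continuous on the compact joint action space $\actionspace$, so $\cumulregret[][\pgame]$ (a finite sum of differences of payoff values) has bounded range; and (ii) $(\hypothesis, \otherhypothesis) \mapsto \Phi_{\hypothesis,\otherhypothesis}(\pgame)$ is Lipschitz-continuous in the sup-norm on $\hypotheses \times \otherhypotheses$, because $\cumulregret[][\pgame]$ is Lipschitz in its two action arguments (\Cref{assum:sample}) and the discriminator class is Lipschitz in its action input (standing regularity assumption on $\otherhypotheses$), so perturbing $\hypothesis$ changes both the point at which regret is measured and the action handed to $\otherhypothesis$ by a controlled amount. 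Absorbing these (uniform) Lipschitz constants and the range of $\cumulregret[][\pgame]$ is what fixes the covering radius at $\nicefrac{\varepsilon}{6}$ and the constant $9$ in the sample bound.

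Second, I would pass to finite covers: let $\hypotheses^\prime$ and $\otherhypotheses^\prime$ be minimal $\nicefrac{\varepsilon}{6}$-covers of $\hypotheses$ and of $\otherhypotheses$ in sup-norm over their respective domains, so $|\hypotheses^\prime| = \coveringcard(\hypotheses, \nicefrac{\varepsilon}{6})$ and $|\otherhypotheses^\prime| = \coveringcard(\otherhypotheses, \nicefrac{\varepsilon}{6})$. Given an arbitrary pair $(\hypothesis,\otherhypothesis)$, choose $(\hypothesis^\prime,\otherhypothesis^\prime) \in \hypotheses^\prime \times \otherhypotheses^\prime$ within $\nicefrac{\varepsilon}{6}$ of each component; property (ii) gives $\sup_{\pgame \in \supp(\distribpgames)} |\Phi_{\hypothesis,\otherhypothesis}(\pgame) - \Phi_{\hypothesis^\prime,\otherhypothesis^\prime}(\pgame)| \leq \nicefrac{\varepsilon}{3}$. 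For each of the $|\hypotheses^\prime| \cdot |\otherhypotheses^\prime|$ fixed pairs in the product net, $\Phi_{\hypothesis^\prime,\otherhypothesis^\prime}$ is a bounded function of $\pgame$ and $\samplepgames$ consists of $\numsamples$ i.i.d.\ draws from $\distribpgames$, so Hoeffding's inequality gives $\Pr[\, |\mathop{\Ex}_{\pgame \sim \mathrm{unif}(\samplepgames)} \Phi_{\hypothesis^\prime,\otherhypothesis^\prime}(\pgame) - \mathop{\Ex}_{\pgame \sim \distribpgames} \Phi_{\hypothesis^\prime,\otherhypothesis^\prime}(\pgame)| > \nicefrac{\varepsilon}{3} \,] \leq 2\exp(-\Omega(\numsamples \varepsilon^2))$. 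A union bound over the net, required to be at most $\delta$, yields exactly $\numsamples \geq \frac{9}{\varepsilon^2}[\log(\coveringcard(\hypotheses, \nicefrac{\varepsilon}{6})/\delta) + \log(\coveringcard(\otherhypotheses, \nicefrac{\varepsilon}{6})/\delta)]$, using $\log(ab/\delta) \leq \log(a/\delta) + \log(b/\delta)$. Finally, for an arbitrary $(\hypothesis,\otherhypothesis)$, a triangle inequality through $(\hypothesis^\prime,\otherhypothesis^\prime)$ --- $\nicefrac{\varepsilon}{3}$ from the net deviation of the empirical mean, $\nicefrac{\varepsilon}{3}$ from the concentration term on the net, and $\nicefrac{\varepsilon}{3}$ from the net deviation of the population mean --- gives the claimed $\varepsilon$, on an event of probability at least $1-\delta$.

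The main obstacle is property (ii): establishing the uniform-in-$\pgame$ Lipschitz estimate for $\Phi_{\hypothesis,\otherhypothesis}$ in spite of the nested composition $\otherhypothesis(\pgame, \hypothesis(\pgame))$, i.e.\ carefully bounding how the payoff Lipschitz constant and the discriminator's Lipschitz constant combine when $\hypothesis$ is perturbed. This is precisely the estimate that is unavailable for the expected exploitability $\mean[\exploit]$, where the inner maximum ranges over the $\hypothesis$-dependent feasible set $\actions(\hypothesis(\pgame))$ and a small change in $\hypothesis$ can move the maximizer far, destroying Lipschitz-continuity; it holds for the cumulative regret because the feasibility constraint has been delegated to the discriminator. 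Everything downstream of (ii) is the routine cover--Hoeffding--union-bound pipeline for bounded function classes of bounded covering number, and does not use the affine-parameterization part of \Cref{assum:convergence} --- only Lipschitz-continuity of payoffs and discriminators --- so the bound holds for arbitrary function approximators.
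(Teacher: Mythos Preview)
Your proposal is correct and follows essentially the same route as the paper: reduce to an $\nicefrac{\varepsilon}{6}$-cover of $\hypotheses \times \otherhypotheses$, use Lipschitz-continuity of $\cumulregret[][\pgame]$ in its action arguments (the paper isolates this as \Cref{lemma:bounded_cumul_regret}) to bound the two approximation terms by $\nicefrac{\varepsilon}{3}$ each, apply Hoeffding plus a union bound over the net for the remaining $\nicefrac{\varepsilon}{3}$, and solve for $\numsamples$. If anything, your treatment of property~(ii) is more careful than the paper's, which in the proof of \Cref{lemma:bounded_cumul_regret} silently bounds $\|\otherhypothesis^\coveringnumber(\pgame,\hypothesis^\coveringnumber(\pgame)) - \otherhypothesis(\pgame,\hypothesis(\pgame))\|$ by $\coveringnumber$ without explicitly invoking a Lipschitz assumption on the discriminator in its action input to handle the nested composition.
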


\section{Experimental Results}\label{sec:experiments}

We run three sets of experiments, in which we train \nees{} in three different pseudo-game settings.\footnote{We include experiments on normal-form games, as well as all missing additional implementation details in \Cref{sec_app:experiments}.} 
%
% \dcp{suggest to hide the following, and 
% just push all details to appendix, and note in intro that our approach collapses to the existing methods in the case of Nash since we compute exploitability directly}
%
All  experiments are run with 5 randomly selected different seeds, with hyperparameter selection being done over 
all 5 seeds %. Unless otherwise mentioned, all results correspond to an average over these 5 seeds, with
and confidence intervals reported across these seeds as appropriate.

% \textbf{Arrow-Debreu exchange economies.}
\subsection{Arrow-Debreu Exchange Economies}
Our first
set of experiments aim to solve CE in Arrow-Debreu exchange economies \cite{arrow-debreu},
%, an instance of generalized Nash equilbrium in a very well-known pseudo-game which has been shown to be provably hard to solve \cite{daskalakis2009complexity, chen2006settling}.
%For ease and clarity, our experiments focus on a special case of Arrow-Debreu competitive economies\footnote{This is without loss of generality as Arrow-Debreu competitive economies can be reduced to Arrow-Debreu exchange economies in polynomial-time.}, namely 
%
a special case of competitive economies with only consumers.\footnote{This is without loss of generality, as Arrow-Debreu competitive economies can be reduced to Arrow-Debreu exchange economies in polynomial-time.}
The difficulty in solving the pseudo-game associated with Arrow-Debreu exchange economies---hereafter \mydef{exchange economies}---arises from the fact that it does not fit into any well-defined categories of pseudo-games, e.g., monotone or jointly convex, for which there are  algorithms that converge to GNEs.

An \mydef{exchange economy} $(\util, \consendow)$ consists of a finite set of $\numgoods \in \N_+$ goods and $\numbuyers \in \N_+$ consumers (or traders). Every consumer $\buyer \in \buyers$ has a set of possible consumptions $\consumptions[\buyer] \subseteq \mathbb{R}^{\numgoods}_+$, an endowment of goods $\consendow[\buyer] = \left(\consendow[\buyer][1], \dots, \consendow[\buyer][\numgoods] \right) \in \R^\numgoods$ and a utility function $\util[\buyer]: \mathbb{R}^{\numgoods } \to \mathbb{R}$. We denote 
% $\consumptions = \bigtimes_{\buyer \in \buyers} \consumptions[\buyer]$, and 
$\consendow = \left(\consendow[1], \hdots, \consendow[\numbuyers] \right)^T$. 
Any exchange economy can be formulated as a pseudo-game whose set of GNE is equal to the set of competitive equilibria (CE)\footnote{We refer the reader to \Cref{sec_ap:arrow_debreu} for a definition.} 
of the original economy \cite{arrow-debreu}. This pseudo-game consists of $\numbuyers + 1$ agents, who correspond to the $\numbuyers$ buyers and a seller.
The pseudo-game is given by the following optimization problem, for each buyer $\buyer \in \buyers$, and the seller, respectively:
\begin{align*}
    \!\!\max_{\allocation[\buyer] \in \consumptions[\buyer]: \allocation[\buyer] \cdot \price \leq \consendow[\buyer] \cdot \price} \!\! \util[\buyer](\allocation[\buyer]) \quad \vline \quad \max_{\price \in \simplex[\numgoods]} \price \cdot \!\!\left( \sum_{\buyer \in \buyers} \allocation[\buyer] - \!\!\sum_{\buyer \in \buyers} \consendow[\buyer] \!\!\right).
\end{align*}
Let $\valuation[\buyer] \in \R^\numgoods_+$, $\subsparams \in (-\infty, 1]^\numbuyers$, be a vector of parameters for the utility function of buyer $\buyer \in \buyers$.
In our experiments, we consider the following utility function classes:
\mydef{Linear}: $\util[\buyer](\allocation[\buyer]) = \sum_{\good \in \goods} \valuation[\buyer][\good] \allocation[\buyer][\good]$,  \mydef{Cobb-Douglas}:  $\util[\buyer](\allocation[\buyer]) = \prod_{\good \in \goods} \allocation[\buyer][\good]^{\valuation[\buyer][\good]}$, \mydef{Leontief}:  $\util[\buyer](\allocation[\buyer]) = \min_{\good \in \goods} \left\{ \frac{\allocation[\buyer][\good]}{\valuation[\buyer][\good]}\right\}$, and 
\mydef{constant elasticity of substitution (CES)}: $\util[\buyer](\allocation[\buyer]) = \left(\sum_{\good \in \goods} \valuation[\buyer][\good] \allocation[\buyer][\good]^\rho\right)^{\nicefrac{1}{\rho}}$.  When we take $\rho = 1$, $\rho \to 0$, and $\rho \to -\infty$ for CES utilities, we obtain linear, Cobb-Douglas, and Leontief utilities respectively. We denote $\valuation = (\valuation[1], \hdots, \valuation[\numbuyers])^T$. As is standard in the literature~\cite{fisher-tatonnement, branzei2021proportional}, we assume that for all buyers $\buyer \in \buyers$, $\consumptions[\buyer] = \R_+^\numgoods$. Once a utility function class is fixed, an exchange economy is referred to with the name of the utility function, and can be sampled as a tuple $(\valuation, \consendow) \in \R^{\numbuyers \times \numgoods} \times \R^{\numbuyers \times \numgoods}$ for linear, Cobb-Douglas, and Leontief exchange economies, and as a tuple $(\valuation, \bm{\rho}, \consendow) \in \R^{\numbuyers \times \numgoods} \times \R^{\numbuyers} \times \R^{\numbuyers \times \numgoods}$ for CES exchange economies. For CES exchange economies, 
 we have a \mydef{gross substitute (GS)} and \mydef{gross complements (GC)} CES economy,
 either when $\rho_\buyer \geq 0$ for all buyers 
or  $\rho_\buyer < 0$  for all buyers,  respectively.  Otherwise, this is a
mixed CES economy.

% \textbf{Baselines.}
\paragraph{Baselines}
For  special cases of exchange economies, the computation of CE is well-studied (e.g., \citet{bei2015tatonnement}), allowing us to compare the performance of  \nees{} to known specialized methods.
%, and see how far beyond known domains our method can perform well.
We benchmark \nees{} to \mydef{t\^atonnement} \cite{walras}, which is
an auction-like algorithm that is guaranteed to converge for CES utilities with $\bm{\rho} \in [0, 1)^\numplayers$ \cite{bei2015tatonnement}, including Cobb-Douglas and excluding Linear utilities. We also
benchmark to  \mydef{exploitability descent (or Augmented Descent Ascent)} \cite{goktas2022exploitability}. For each 
of these baselines, we run an extensive grid search over  decreasing learning rates during validation 
(see \Cref{sec_ap:arrow_debreu}). Each baseline was run to convergence.

We report the results of two experiments. First, we 
run our algorithms in linear, Cobb-Douglas, Leontief, GS CES, GC CES, and mixed CES exchange economies (we defer the
results from the GS CES, and  GC CES experiments to the Appendix). We report the distribution of the
exploitability on the test set for \nees{} and the baselines in \Cref{fig:violin_ad} (additional plots can be found in \Cref{sec_ap:arrow_debreu}). 
We measure  performance 
w.r.t. the \mydef{normalized exploitability},
which is  the exploitability of the method divided by the average exploitability over the action space. 
We observe that in all economies, \nees{} outputs an action profile that is on average better than at least $99\%$ of the action profiles in terms of exploitability. 
In all four markets, \nees{} on average achieves lower exploitability than the baselines (see \Cref{fig:arrow_exploit}, \Cref{sec_ap:arrow_debreu}). 
We also see in \Cref{fig:violin_ad} that \nees{} outperforms the baselines, in distribution, in every economy
except Cobb-Douglas. This is not surprising since t\^atonnement is guaranteed to converge in Cobb-Douglas economies \cite{bei2015tatonnement}.
%, which leads to exploitabilities being low for all economies and not just on average.
That said,  t\^atonement does not outperform \nees{} on average, since  t\^atonnement's convergence guarantees hold for different learning rates in each market. 

Second, we compare  \nees{} with the performance of t\^atonnment 
in a pathological and well-know Leontief exchange economy, \mydef{the Scarf economy} \cite{scarf1960instable} (\Cref{fig:scarf_economy}). 
Here, we soft start t\^atonnement with the output of \nees{} with some added uniform noise. This additional noise ensures that the starting point of t\^atonnement is distinct from the output of \nees{}).
We see on Figure \ref{fig:scarf_economy} that the prices generated by t\^atonnement  spiral out and settle into an orbit. 
This makes sense as, unlike pure Nash equilibria, GNE and CE are not locally stable \cite{flokas2020no}, meaning that soft-starting the algorithm with the output of \nees{} might be worse than using \nees{} alone.\footnote{An implication is that, whereas  \citet{duan2021pac} observe that neural equilibrium mappings can be used to succesfully soft start known learning dynamics that compute Nash equilibria, this 
does not hold in pseudo-games.} 

The success of \nees{} in Leontief exchange economies as well as the Scarf economy is notable, and suggests that  \nees{} might be smoothing out the loss landscape since for these economies the exploitability is non-differentiable. 
%
% \dcp{drop following?}
% In fact, exploitability descent performs erratically in our experiments as demonstrated by \Cref{fig:violin_ad}. We 
% omit the exploitability descent results from \Cref{fig:arrow_scarf}, since the results were too unstable and obstructing the rest of the figure.
%
\begin{figure*}[t]
  \centering
  \begin{subfigure}{0.52\textwidth}
    \includegraphics[width=\linewidth,trim={0cm 0cm 0.25cm 0.2cm},clip]{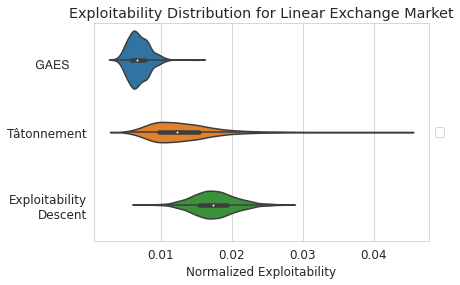}
    \caption{Linear economy}
  \end{subfigure}
%   \hspace{0.5cm}
  \begin{subfigure}{0.42\textwidth}
    \includegraphics[width=\linewidth,trim={3.1cm 0cm 2cm 0.75cm},clip]{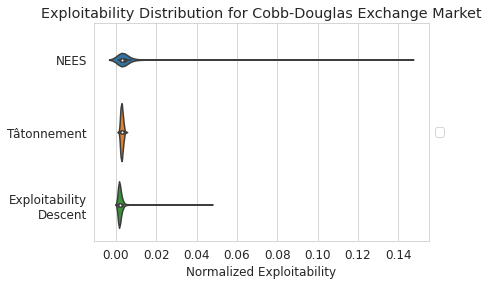}
    \caption{Cobb-Douglas economy}
  \end{subfigure}
%   \hspace{0.5cm}
  \begin{subfigure}{0.52\textwidth}     
    \includegraphics[width=\linewidth,trim={0cm 0cm 0.3cm 0.2cm},clip]{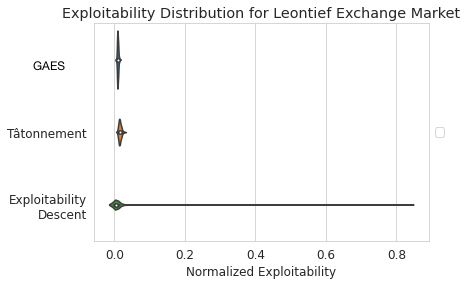}
    \caption{Leontief economy}
  \end{subfigure}
%   \begin{subfigure}{0.3\textwidth}
%     \includegraphics[width=\linewidth]{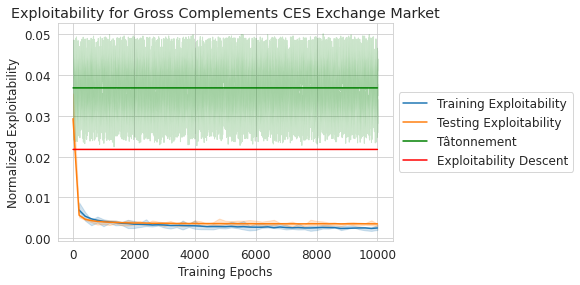}
%     \caption{Caption for image 4}
%   \end{subfigure}
%   \hspace{0.5cm}
%   \begin{subfigure}{0.3\textwidth}
%     \includegraphics[width=\linewidth]{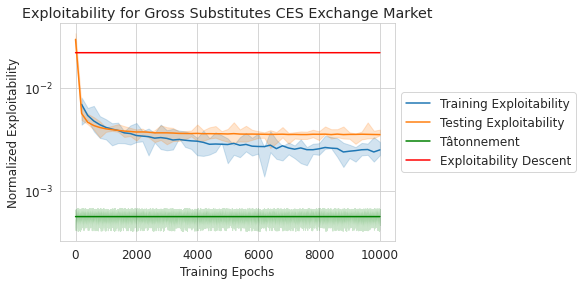}
%     \caption{Caption for image 5}
%   \end{subfigure}
%   \hspace{0.5cm}
  \begin{subfigure}{0.42\textwidth}
    \includegraphics[width=\linewidth,trim={3.1cm 0cm 1.6cm 0.75cm},clip]{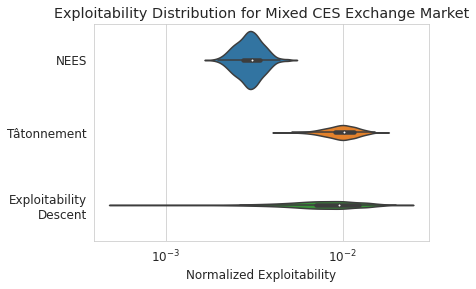}
    \caption{Mixed CES economy}
  \end{subfigure}
  \caption{The distribution of test exploitability on pseudo-games.
  %. For Leontief, exploitability descent is not shown, as it performed much worse than either methods. 
   \nees{} outperforms all baselines on average in all markets and in distribution in all markets except Cobb-Douglas.\label{fig:violin_ad} 
%   \dcp{Now we have more space, consider this as a 2x2 figure to make the plots larger}
   }
\end{figure*}
\begin{figure}[!ht]
    \centering
    \includegraphics[width =0.5\linewidth]{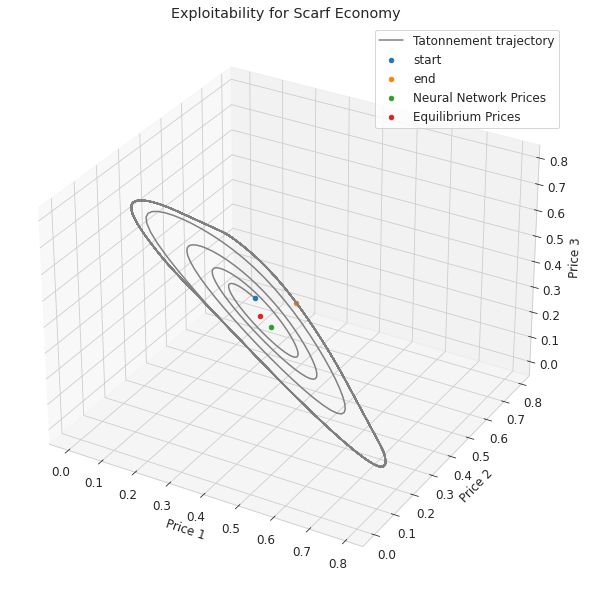}
    \caption{A phase portrait of equilibrium prices in the Scarf Economy.  While the output of \nees{} is close to the equilibrium prices, the final prices outputed by t\^atonnement prices are further than the starting prices. 
    % \sdeni{}{This not implies that competitive equilibria are unstable under t\^atonnement dynamics, additionally, }
    % \gp{Maybe we can expand a bit on this? In what sense is the fact that t\^atonnement prices are further than the starting prices a negative aspect. I think there are two bad things here: First instability, second divergence away from equilibrium prices. Maybe point both of these aspects especially since the page break gives us as plenty of space here. On another note can you try to move both Figs 2 and 3 one page forward?}\deni{I can't distinguish between instability, and divergence from equilibrium prices. Doesn't instability capture also divergence away from equilibrium prices? Btw, I want to add as a reminder note here, that although there exists globally stable alternatives to t\^atonnement, e.g., Smale's method, this method cannot be used in all markets since it requires a unique and diffentiable (marshallian) demand.}
    \label{fig:scarf_economy}}
    \vspace{-1em}
\end{figure}

% \textbf{Kyoto joint implementation mechanism.}
\subsection{Kyoto Joint Implementation Mechanism}
We solve a pseudo-game model of the joint implementation mechanism proposed in the Kyoto protocols~\cite{protocol1997kyoto}. The {\em Kyoto Joint Implementation Mechanism} is a {\em cap-and-trade} mechanism that  bounds each country that signed onto the protocol
to emit anthropogenic gases below a particular emission cap. 
Countries bound by the mechanism can  also invest in green projects in other countries, which in return increases their emission caps. \citet{breton2006game} introduce a model of the Kyoto Joint Implementation Mechanism, using this to predict the impact of the mechanism. The model that the authors propose is partially solvable analytically, that is, one can characterize equilibria qualitatively using comparative statics~\cite{nachbar2002general}, but cannot obtain closed-form solutions for GNE.
Moreover, there is no algorithm that is known to converge to the GNE of this pseudo-game,
since it is not monotone. 

Formally, the \mydef{(Kyoto) Joint Implementation Mechanism} (JI)
consists of $\numcountries \in \N_+$ countries. Each \mydef{country} $\country \in \countries$, can make decisions that result in environmentally damaging anthropogenic \mydef{emissions}, $\emit[\country] \in \R_+$, and can make investments $\invest[\country] \in \R^\numcountries$ to offset their emissions. 
The investments $\invest[\country] \in \R^\numcountries$ that each country $\country \in \countries$ makes in another country $\othercountry \in \countries$ offsets that country's emissions by $\invest[\country][\othercountry] \investrate[\othercountry]$, where this is in proportion to an investment return rate $\gamma_j>0$ for country $j$, with
$\investrate \in \R^\numcountries_+$.
Each country $\country$ has a revenue $\revenue[\country]: \R_+^\numcountries \to \R$, which is a
function of its emissions $\emit[\country]$, a cost function $\cost[\country]: \R^{\numcountries \times \numcountries} \to \R$ that is  a function of all investments,
and a negative externality function, $\damage[\country]: \R^\numcountries \to \R$, 
that is a function of the net emissions of all countries.

Each country $\country$ aims to maximize their surplus, which is equal to their revenue minus the costs of their investments, as well as  the negative externalities caused by emissions constrained by keeping their net emissions under an emission cap, $\emitcap[\country] \in \R_+$. Additionally, for all countries $\country \in \countries$, we require the emission transfer balance to hold, i.e., $\emit[\country] - \sum_{\othercountry \in \countries} \invest[\othercountry][\country] \investrate[\othercountry] \geq 0$, that is no country can transfer more emission reduction than they have. This gives the following optimization problem for each country, $\country \in \countries$:
\begin{align*}
   \!\!\! \max_{(\emit[\country], \invest[\country]) \in \R_+ \times \R_+^{\numcountries}}  &\!\! \revenue[\country](\emit[\player]) \!-\! \cost[\country](\invest) \!-\! \damage[\country]\!\left(\! \left\{\! \emit[\country] \!-\! \sum_{\othercountry \in \countries}\!\invest[\othercountry][\country] \investrate[\country] \!\right\}_\country \right)\\
    & \emit[\country] - \sum_{\othercountry \in \countries} \invest[\country][\othercountry] \investrate[\othercountry] \leq \emitcap[\country]\\
    & \emit[\country] - \investrate[\country] \sum_{\othercountry \in \countries} \invest[\othercountry][\country]  \geq 0 \quad \quad \forall \country  \in \countries
\end{align*}
The literature has traditionally assumed that $\revenue[\country](\emit[\player]; \revrate[\country]) = \emit[\player]\left( \revrate[\player] - \frac{1}{2}\emit[\player] \right)$, $\cost[\country](\invest) = \frac{1}{2} \left( \invest[\country][\country]^2 + \sum_{\othercountry \neq \country} \left[ 
(\invest[\country][\othercountry] + \invest[\othercountry][\othercountry])^2 - \invest[\othercountry][\othercountry]^2 \right] \right)$, and $\damage[\country](\emit - \sum_{\country \in \countries} \invest[\country], \damagerate[\country]) = \damagerate[\country] \sum_{\country \in \countries} (\emit[\country] - \sum_{\othercountry \in \countries} \invest[\country][\othercountry])$.
% giving us the following problem:
% \begin{align}
%     &\max_{(\emit[\country], \invest[\country]) \in \R_+ \times \R_+^{\numcountries}} & \emit[\player]\left( \revrate[\player] - \frac{1}{2}\emit[\player] \right) - \frac{1}{2} \left( \invest[\country][\country]^2 + \sum_{\othercountry 
%     \neq \country \in \countries} \left[ 
%     (\invest[\country][\othercountry] + \invest[\othercountry][\othercountry])^2 - \invest[\othercountry][\othercountry]^2 \right] \right) - \damagerate[\country] \sum_{\country \in \countries} (\emit[\country] -  \sum_{\othercountry \in \countries} \invest[\othercountry][\country] \investrate[\country])\\
%     & &\emit[\country] - \sum_{\othercountry \in \countries} \invest[\country][\othercountry] \investrate[\othercountry] \leq \emitcap[\country]\\
%     & &\emit[\country] - \investrate[\country] \sum_{\othercountry \in \countries} \invest[\othercountry][\country] \geq 0
% \end{align}
Fixing these functional forms, we can  sample $(\revrate, \damagerate, \investrate, \emitcap) \in \R^\numagents_+ \times \R^\numagents \times \R^{\numagents \times \numagents} \times \R^\numagents$ and obtain a representation of the JI mechanism.

We run two different experiments to replicate and extend the analysis of~\citet{breton2006game}. 
We first replicate their qualitative analysis of equilibria (\Cref{fig:phase_kyoto}). \citet{breton2006game} introduce a comparative static analysis, in which they fix all parameters of  JI   except for $\revrate$, and characterize the kinds of GNE  as $\revrate$ varies. In \Cref{fig:phase_kyoto}, the six regions correspond to different kinds of GNE. 
For instance in Region~1, both countries emit strictly less than their emission cap (see Section~4 of \citet{breton2006game}). The parts of the plot that are not shaded correspond to pseudo-games for which GNE are not unique,
and for which equilibria cannot be characterized analytically. We superpose on top of this plot
a set of pseudo-games from an unseen test set, and color each pseudo-game by the color of the region whose condition they fulfill.  

%\deni{Rephrased all of this with the nuemrical precision argument removed. That was a bad argument I think the following makes much more sense!}

We observe that with the exception of Regions~1 and~2a, the structure of the GNE generated by \nees{} 
lines up well with 
the type of GNE. We believe that failure to predict Regions~1 and~2a is
due to  the closeness of the action profiles that fit either of these equilibrium types to other ones.  For instance, \nees{} predicts a GNE of type~4b for pseudo-games in Region~2a but these GNE, although qualitatively different, are very close in action space: the only difference between the two regions is that for the type 2a GNE, one player emits strictly less than its  cap and the other emits exactly its cap, while for the type 4b GNE, both players emit exactly their
emission cap. 
% \dcp{careful with `numerical precision` because I feel you argued we're ok from this viewpoint in the intro... maybe just say the profiles are close, just not qualitatively fitting?}
%
A similar conclusion holds for GNE in Region 3a, as predicted in Region~1.

We also solve the JI pseudo-game for a distribution of JI pseudo-games (\Cref{fig:exploit_kyoto}, \Cref{sec_app:experiments}), and 
these results confirm that the testing normalized exploitability is very low. 
We note that \mydef{normalized exploitability} is given as the exploitability 
divided by the average exploitability over the action space, which means that \nees{} has on average a lower exploitability than $\approx 99.5\%$ of the feasible action profiles. 
This confirms our hypothesis that failure to predict Regions~1 and~2a in \Cref{fig:phase_kyoto} arises from
the proximity between GNE of these types and GNE of types 3a or 4a respectively, since according to exploitability there is very little improvement
left for \nees{}.
\begin{figure}[t]
    \centering
    \includegraphics[width=0.6\linewidth]{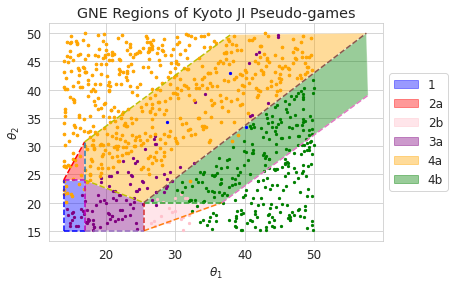}
    \caption{A  taxonomy of different equilibrium types for various  pseudo-games obtained by fixing all parameters, and varying $\revrate$. The $x$ and $y$ axes represent the revenue parameters $\revrate[1]$, $\revrate[2]$ of the countries. The colored regions (obtained analytically) correspond to different qualitative types of GNE  while the dots correspond to  pseudo-games in the test set colored by the GNE types that were predicted for them by \nees{}. 
    \label{fig:phase_kyoto}}
    \vspace{-2em}
\end{figure}
\section{Conclusion}
We introduced \nees{}, a GAN that learns mappings from pseudo-games to GNE. \nees{} outperforms existing methods to compute GNE or CE in exchange economies, and solves even pathological examples, i.e., Scarf economies. Our approach extends the use of exploitability-based learning methods from normal-form games to continuous games, exchange markets, and beyond. \nees{} adds to the growing list of differentiable economics methods that aim to provide practitioners with computational tools for the study of economic properties. \nees{} extends
the range of models for which we have  approximate and reliable solvers. 
For instance, existing algorithms only apply to GNE problems that
are  jointly convex, i.e., pseudo-games for which $\actions$ is a convex set, 
and these algorithms are restricted to relaxation methods 
that are not guaranteed to converge in non-jointly convex games. We show that \nees{} can be trained to convergence in pseudo-games under standard assumptions, hence opening the door to solve pseudo-games beyond those that are jointly convex. 
% Future work should aim to improve the design of the network architectures of \nees{}, 
% as well as devise better training algorithms or loss landscapes.
% \nees{} not only allows to approximately and reliably solves economic models but it also extends the range of models which we can approximately solve. For instance, an overwhelming majority of GNE computation algorithms apply only for a subset of problems which are known as jointly convex, i.e., pseudo-games for which $\actions$ is a convex set, and algorithms that can be used in the jointly convex setting are restricted to relaxation methods which are not guaranteed to converge in non-jointly convex games. We show that \nees{} can be trained to convergence in all problem settings where a solution is guaranteed to exist under suitable assumptions.

\bibliographystyle{icml2023}
\bibliography{references}

\begin{thebibliography}{112}
\providecommand{\natexlab}[1]{#1}
\providecommand{\url}[1]{\texttt{#1}}
\expandafter\ifx\csname urlstyle\endcsname\relax
  \providecommand{\doi}[1]{doi: #1}\else
  \providecommand{\doi}{doi: \begingroup \urlstyle{rm}\Url}\fi

\bibitem[Afriat(1971)]{afriat1971envelope}
Afriat, S.~N.
\newblock Theory of maxima and the method of lagrange.
\newblock \emph{SIAM Journal on Applied Mathematics}, 20\penalty0 (3):\penalty0
  343--357, 1971.
\newblock ISSN 00361399.
\newblock URL \url{http://www.jstor.org/stable/2099955}.

\bibitem[Anselmi et~al.(2014)Anselmi, Ardagna, and
  Passacantando]{anselmi2014generalized}
Anselmi, J., Ardagna, D., and Passacantando, M.
\newblock Generalized nash equilibria for saas/paas clouds.
\newblock \emph{European Journal of Operational Research}, 236\penalty0
  (1):\penalty0 326--339, 2014.

\bibitem[Ardagna et~al.(2011{\natexlab{a}})Ardagna, Casolari, and
  Panicucci]{ardagna2011flexible}
Ardagna, D., Casolari, S., and Panicucci, B.
\newblock Flexible distributed capacity allocation and load redirect algorithms
  for cloud systems.
\newblock In \emph{2011 IEEE 4th International Conference on Cloud Computing},
  pp.\  163--170. IEEE, 2011{\natexlab{a}}.

\bibitem[Ardagna et~al.(2011{\natexlab{b}})Ardagna, Panicucci, and
  Passacantando]{ardagna2011cloud}
Ardagna, D., Panicucci, B., and Passacantando, M.
\newblock A game theoretic formulation of the service provisioning problem in
  cloud systems.
\newblock In \emph{Proceedings of the 20th International Conference on World
  Wide Web}, WWW '11, pp.\  177–186, New York, NY, USA, 2011{\natexlab{b}}.
  Association for Computing Machinery.
\newblock ISBN 9781450306324.
\newblock \doi{10.1145/1963405.1963433}.
\newblock URL \url{https://doi.org/10.1145/1963405.1963433}.

\bibitem[Ardagna et~al.(2017)Ardagna, Ciavotta, and
  Passacantando]{ardagna2017cloud}
Ardagna, D., Ciavotta, M., and Passacantando, M.
\newblock Generalized nash equilibria for the service provisioning problem in
  multi-cloud systems.
\newblock \emph{IEEE Transactions on Services Computing}, 10\penalty0
  (3):\penalty0 381--395, 2017.
\newblock \doi{10.1109/TSC.2015.2477836}.

\bibitem[Arrow \& Debreu(1954)Arrow and Debreu]{arrow-debreu}
Arrow, K. and Debreu, G.
\newblock Existence of an equilibrium for a competitive economy.
\newblock \emph{Econometrica: Journal of the Econometric Society}, pp.\
  265--290, 1954.

\bibitem[Ba \& Pang(2020)Ba and Pang]{ba2020exact}
Ba, Q. and Pang, J.-S.
\newblock Exact penalization of generalized nash equilibrium problems.
\newblock \emph{Operations Research}, 2020.

\bibitem[Bei et~al.(2015)Bei, Garg, and Hoefer]{bei2015tatonnement}
Bei, X., Garg, J., and Hoefer, M.
\newblock Tatonnement for linear and gross substitutes markets.
\newblock \emph{CoRR abs/1507.04925}, 2015.

\bibitem[Bichler et~al.(2021)Bichler, Fichtl, Heidekr{\"u}ger, Kohring, and
  Sutterer]{bichler2021learning}
Bichler, M., Fichtl, M., Heidekr{\"u}ger, S., Kohring, N., and Sutterer, P.
\newblock Learning equilibria in symmetric auction games using artificial
  neural networks.
\newblock \emph{Nature machine intelligence}, 3\penalty0 (8):\penalty0
  687--695, 2021.

\bibitem[Bigler(2019)]{bigler2019rolling}
Bigler, J.
\newblock Rolling out first price auctions to google ad manager partners.
\newblock \emph{Consulted at https://www. blog.
  google/products/admanager/rolling-out-first-price-auctions-google-ad-manager-partners},
  2019.

\bibitem[Blondel et~al.(2021)Blondel, Berthet, Cuturi, Frostig, Hoyer,
  Llinares-L{\'o}pez, Pedregosa, and Vert]{jaxopt_implicit_diff}
Blondel, M., Berthet, Q., Cuturi, M., Frostig, R., Hoyer, S.,
  Llinares-L{\'o}pez, F., Pedregosa, F., and Vert, J.-P.
\newblock Efficient and modular implicit differentiation.
\newblock \emph{arXiv preprint arXiv:2105.15183}, 2021.

\bibitem[Boyd et~al.(2004)Boyd, Boyd, and Vandenberghe]{boyd2004convex}
Boyd, S., Boyd, S.~P., and Vandenberghe, L.
\newblock \emph{Convex optimization}.
\newblock Cambridge university press, 2004.

\bibitem[Bradbury et~al.(2018)Bradbury, Frostig, Hawkins, Johnson, Leary,
  Maclaurin, Necula, Paszke, Vander{P}las, Wanderman-{M}ilne, and
  Zhang]{jax2018github}
Bradbury, J., Frostig, R., Hawkins, P., Johnson, M.~J., Leary, C., Maclaurin,
  D., Necula, G., Paszke, A., Vander{P}las, J., Wanderman-{M}ilne, S., and
  Zhang, Q.
\newblock {JAX}: composable transformations of {P}ython+{N}um{P}y programs,
  2018.
\newblock URL \url{http://github.com/google/jax}.

\bibitem[Br{\^a}nzei et~al.(2021)Br{\^a}nzei, Devanur, and
  Rabani]{branzei2021proportional}
Br{\^a}nzei, S., Devanur, N., and Rabani, Y.
\newblock Proportional dynamics in exchange economies.
\newblock In \emph{Proceedings of the 22nd ACM Conference on Economics and
  Computation}, pp.\  180--201, 2021.

\bibitem[Breton et~al.(2006)Breton, Zaccour, and Zahaf]{breton2006game}
Breton, M., Zaccour, G., and Zahaf, M.
\newblock A game-theoretic formulation of joint implementation of environmental
  projects.
\newblock \emph{European Journal of Operational Research}, 168\penalty0
  (1):\penalty0 221--239, 2006.

\bibitem[Bruckner \& Scheffer(2009)Bruckner and Scheffer]{bruckner2009nash}
Bruckner, M. and Scheffer, T.
\newblock Nash equilibria of static prediction games.
\newblock \emph{Advances in neural information processing systems}, 22, 2009.

\bibitem[Bruckner et~al.(2012)Bruckner, Kanzow, and Scheffer]{bruckner2012nash}
Bruckner, M., Kanzow, C., and Scheffer, T.
\newblock Static prediction games for adversarial learning problems.
\newblock \emph{J. Mach. Learn. Res.}, 13:\penalty0 2617--2654, 2012.
\newblock URL \url{http://dl.acm.org/citation.cfm?id=2503326}.

\bibitem[Budish(2011)]{budish2011combinatorial}
Budish, E.
\newblock The combinatorial assignment problem: Approximate competitive
  equilibrium from equal incomes.
\newblock \emph{Journal of Political Economy}, 119\penalty0 (6):\penalty0
  1061--1103, 2011.

\bibitem[Bueno et~al.(2019)Bueno, Haeser, and Rojas]{bueno2019alm}
Bueno, L.~F., Haeser, G., and Rojas, F.~N.
\newblock Optimality conditions and constraint qualifications for generalized
  nash equilibrium problems and their practical implications.
\newblock \emph{SIAM Journal on Optimization}, 29\penalty0 (1):\penalty0
  31--54, 2019.
\newblock \doi{10.1137/17M1162524}.
\newblock URL \url{https://doi.org/10.1137/17M1162524}.

\bibitem[Burns et~al.(2016)Burns, Grant, Oppenheimer, Brewer, and
  Wilkes]{burns2016borg}
Burns, B., Grant, B., Oppenheimer, D., Brewer, E., and Wilkes, J.
\newblock Borg, omega, and kubernetes.
\newblock \emph{Communications of the ACM}, 59\penalty0 (5):\penalty0 50--57,
  2016.

\bibitem[Chen et~al.(2018)Chen, Liu, Li, and Li]{chen2018scheduling}
Chen, L., Liu, S., Li, B., and Li, B.
\newblock Scheduling jobs across geo-distributed datacenters with max-min
  fairness.
\newblock \emph{IEEE Transactions on Network Science and Engineering},
  6\penalty0 (3):\penalty0 488--500, 2018.

\bibitem[Chen et~al.(2021)Chen, Joseph, Kumhof, Pan, Shi, and
  Zhou]{chen2021deep}
Chen, M., Joseph, A., Kumhof, M., Pan, X., Shi, R., and Zhou, X.
\newblock Deep reinforcement learning in a monetary model.
\newblock \emph{arXiv preprint arXiv:2104.09368}, 2021.

\bibitem[Chen \& Deng(2006)Chen and Deng]{chen2006settling}
Chen, X. and Deng, X.
\newblock Settling the complexity of two-player nash equilibrium.
\newblock In \emph{2006 47th Annual IEEE Symposium on Foundations of Computer
  Science (FOCS'06)}, pp.\  261--272. IEEE, 2006.

\bibitem[Chen \& Teng(2009)Chen and Teng]{chen2009spending}
Chen, X. and Teng, S.-H.
\newblock Spending is not easier than trading: on the computational equivalence
  of fisher and arrow-debreu equilibria.
\newblock In \emph{International Symposium on Algorithms and Computation}, pp.\
   647--656. Springer, 2009.

\bibitem[Chen et~al.(2009)Chen, Dai, Du, and Teng]{chen2009settling}
Chen, X., Dai, D., Du, Y., and Teng, S.-H.
\newblock Settling the complexity of arrow-debreu equilibria in markets with
  additively separable utilities.
\newblock In \emph{2009 50th Annual IEEE Symposium on Foundations of Computer
  Science}, pp.\  273--282. IEEE, 2009.

\bibitem[Cheung et~al.(2013)Cheung, Cole, and Devanur]{fisher-tatonnement}
Cheung, Y.~K., Cole, R., and Devanur, N.
\newblock Tatonnement beyond gross substitutes? gradient descent to the rescue.
\newblock In \emph{Proceedings of the Forty-Fifth Annual ACM Symposium on
  Theory of Computing}, STOC '13, pp.\  191–200, New York, NY, USA, 2013.
  Association for Computing Machinery.
\newblock ISBN 9781450320290.
\newblock \doi{10.1145/2488608.2488633}.
\newblock URL \url{https://doi.org/10.1145/2488608.2488633}.

\bibitem[Codenotti et~al.(2006)Codenotti, Saberi, Varadarajan, and
  Ye]{codenotti2006leontief}
Codenotti, B., Saberi, A., Varadarajan, K., and Ye, Y.
\newblock Leontief economies encode nonzero sum two-player games.
\newblock In \emph{SODA}, volume~6, pp.\  659--667, 2006.

\bibitem[Conitzer et~al.(2022{\natexlab{a}})Conitzer, Kroer, Panigrahi,
  Schrijvers, Stier-Moses, Sodomka, and Wilkens]{conitzer2022pacing}
Conitzer, V., Kroer, C., Panigrahi, D., Schrijvers, O., Stier-Moses, N.~E.,
  Sodomka, E., and Wilkens, C.~A.
\newblock Pacing equilibrium in first price auction markets.
\newblock \emph{Management Science}, 2022{\natexlab{a}}.

\bibitem[Conitzer et~al.(2022{\natexlab{b}})Conitzer, Kroer, Sodomka, and
  Stier-Moses]{conitzer2022multiplicative}
Conitzer, V., Kroer, C., Sodomka, E., and Stier-Moses, N.~E.
\newblock Multiplicative pacing equilibria in auction markets.
\newblock \emph{Operations Research}, 70\penalty0 (2):\penalty0 963--989,
  2022{\natexlab{b}}.

\bibitem[Curry et~al.(2022{\natexlab{a}})Curry, Sandholm, and
  Dickerson]{curry2022differentiable}
Curry, M., Sandholm, T., and Dickerson, J.
\newblock Differentiable economics for randomized affine maximizer auctions.
\newblock \emph{arXiv preprint arXiv:2202.02872}, 2022{\natexlab{a}}.

\bibitem[Curry et~al.(2022{\natexlab{b}})Curry, Trott, Phade, Bai, and
  Zheng]{curry2022finding}
Curry, M., Trott, A., Phade, S., Bai, Y., and Zheng, S.
\newblock Finding general equilibria in many-agent economic simulations using
  deep reinforcement learning.
\newblock \emph{arXiv preprint arXiv:2201.01163}, 2022{\natexlab{b}}.

\bibitem[Curry et~al.(2022{\natexlab{c}})Curry, Lyi, Goldstein, and
  Dickerson]{curry2022learning}
Curry, M.~J., Lyi, U., Goldstein, T., and Dickerson, J.~P.
\newblock Learning revenue-maximizing auctions with differentiable matching.
\newblock In \emph{International Conference on Artificial Intelligence and
  Statistics}, pp.\  6062--6073. PMLR, 2022{\natexlab{c}}.

\bibitem[Daskalakis et~al.(2009)Daskalakis, Goldberg, and
  Papadimitriou]{daskalakis2009complexity}
Daskalakis, C., Goldberg, P.~W., and Papadimitriou, C.~H.
\newblock The complexity of computing a nash equilibrium.
\newblock \emph{SIAM Journal on Computing}, 39\penalty0 (1):\penalty0 195--259,
  2009.

\bibitem[Debreu(1996)]{debreu1996general}
Debreu, G.
\newblock \emph{General equilibrium theory}.
\newblock Edward Elgar Publishing, 1996.

\bibitem[Devanur et~al.(2016)Devanur, Garg, and V{\'e}gh]{devanur2016rational}
Devanur, N.~R., Garg, J., and V{\'e}gh, L.~A.
\newblock A rational convex program for linear arrow-debreu markets.
\newblock \emph{ACM Transactions on Economics and Computation (TEAC)},
  5\penalty0 (1):\penalty0 1--13, 2016.

\bibitem[Dixon \& Parmenter(1996)Dixon and Parmenter]{dixon1996computable}
Dixon, P.~B. and Parmenter, B.~R.
\newblock Computable general equilibrium modelling for policy analysis and
  forecasting.
\newblock \emph{Handbook of computational economics}, 1:\penalty0 3--85, 1996.

\bibitem[Dreves(2017)]{dreves2017computing}
Dreves, A.
\newblock Computing all solutions of linear generalized nash equilibrium
  problems.
\newblock \emph{Mathematical Methods of Operations Research}, 85\penalty0
  (2):\penalty0 207--221, 2017.

\bibitem[Dreves et~al.(2013)Dreves, Heusinger, Kanzow, and
  Fukushima]{dreves2013newton}
Dreves, A., Heusinger, A., Kanzow, C., and Fukushima, M.
\newblock A globalized newton method for the computation of normalized nash
  equilibria.
\newblock \emph{J. of Global Optimization}, 56\penalty0 (2):\penalty0
  327–340, jun 2013.
\newblock ISSN 0925-5001.
\newblock \doi{10.1007/s10898-011-9824-9}.
\newblock URL \url{https://doi.org/10.1007/s10898-011-9824-9}.

\bibitem[Duan et~al.(2021{\natexlab{a}})Duan, Zhang, Huang, Du, Wang, Yang, and
  Deng]{zhijian2021ne}
Duan, Z., Zhang, D., Huang, W., Du, Y., Wang, J., Yang, Y., and Deng, X.
\newblock Towards the pac learnability of nash equilibrium, 2021{\natexlab{a}}.
\newblock URL \url{https://arxiv.org/abs/2108.07472}.

\bibitem[Duan et~al.(2021{\natexlab{b}})Duan, Zhang, Huang, Du, Yang, Wang, and
  Deng]{duan2021pac}
Duan, Z., Zhang, D., Huang, W., Du, Y., Yang, Y., Wang, J., and Deng, X.
\newblock Pac learnability of approximate nash equilibrium in bimatrix games.
\newblock \emph{arXiv preprint arXiv:2108.07472}, 2021{\natexlab{b}}.

\bibitem[D{\"u}tting et~al.(2019)D{\"u}tting, Feng, Narasimhan, Parkes, and
  Ravindranath]{dutting2019optimal}
D{\"u}tting, P., Feng, Z., Narasimhan, H., Parkes, D., and Ravindranath, S.~S.
\newblock Optimal auctions through deep learning.
\newblock In \emph{International Conference on Machine Learning}, pp.\
  1706--1715. PMLR, 2019.

\bibitem[Facchinei \& Kanzow(2010{\natexlab{a}})Facchinei and
  Kanzow]{facchinei2010generalized}
Facchinei, F. and Kanzow, C.
\newblock Generalized nash equilibrium problems.
\newblock \emph{Annals of Operations Research}, 175\penalty0 (1):\penalty0
  177--211, 2010{\natexlab{a}}.

\bibitem[Facchinei \& Kanzow(2010{\natexlab{b}})Facchinei and
  Kanzow]{facchinei2010penalty}
Facchinei, F. and Kanzow, C.
\newblock Penalty methods for the solution of generalized nash equilibrium
  problems.
\newblock \emph{SIAM Journal on Optimization}, 20\penalty0 (5):\penalty0
  2228--2253, 2010{\natexlab{b}}.
\newblock \doi{10.1137/090749499}.
\newblock URL \url{https://doi.org/10.1137/090749499}.

\bibitem[Facchinei \& Lampariello(2011)Facchinei and
  Lampariello]{facchinei2011partial}
Facchinei, F. and Lampariello, L.
\newblock Partial penalization for the solution of generalized nash equilibrium
  problems.
\newblock \emph{Journal of Global Optimization}, 50\penalty0 (1):\penalty0
  39--57, 2011.

\bibitem[Facchinei et~al.(2007)Facchinei, Fischer, and
  Piccialli]{facchinei2007vi}
Facchinei, F., Fischer, A., and Piccialli, V.
\newblock On generalized nash games and variational inequalities.
\newblock \emph{Operations Research Letters}, 35\penalty0 (2):\penalty0
  159--164, 2007.
\newblock ISSN 0167-6377.
\newblock \doi{https://doi.org/10.1016/j.orl.2006.03.004}.
\newblock URL
  \url{https://www.sciencedirect.com/science/article/pii/S0167637706000484}.

\bibitem[Facchinei et~al.(2009)Facchinei, Fischer, and
  Piccialli]{facchinei2009generalized}
Facchinei, F., Fischer, A., and Piccialli, V.
\newblock Generalized nash equilibrium problems and newton methods.
\newblock \emph{Mathematical Programming}, 117\penalty0 (1):\penalty0 163--194,
  2009.

\bibitem[Fischer et~al.(2016)Fischer, Herrich, Izmailov, and
  Solodov]{fischer2016globally}
Fischer, A., Herrich, M., Izmailov, A.~F., and Solodov, M.~V.
\newblock A globally convergent lp-newton method.
\newblock \emph{SIAM Journal on Optimization}, 26\penalty0 (4):\penalty0
  2012--2033, 2016.

\bibitem[Flam \& Ruszczynski(1994)Flam and Ruszczynski]{flam1994gne}
Flam, S. and Ruszczynski, A.
\newblock Noncooperative convex games: Computing equilibrium by partial
  regularization.
\newblock Working papers, International Institute for Applied Systems Analysis,
  1994.
\newblock URL \url{https://EconPapers.repec.org/RePEc:wop:iasawp:wp94042}.

\bibitem[Flokas et~al.(2020)Flokas, Vlatakis-Gkaragkounis, Lianeas,
  Mertikopoulos, and Piliouras]{flokas2020no}
Flokas, L., Vlatakis-Gkaragkounis, E.-V., Lianeas, T., Mertikopoulos, P., and
  Piliouras, G.
\newblock No-regret learning and mixed nash equilibria: They do not mix.
\newblock \emph{arXiv preprint arXiv:2010.09514}, 2020.

\bibitem[Fukushima(2011)]{fukushima2011restricted}
Fukushima, M.
\newblock Restricted generalized nash equilibria and controlled penalty
  algorithm.
\newblock \emph{Computational Management Science}, 8\penalty0 (3):\penalty0
  201--218, 2011.

\bibitem[Garg et~al.(2017)Garg, Mehta, Vazirani, and
  Yazdanbod]{garg2017settling}
Garg, J., Mehta, R., Vazirani, V.~V., and Yazdanbod, S.
\newblock Settling the complexity of leontief and plc exchange markets under
  exact and approximate equilibria.
\newblock In \emph{Proceedings of the 49th Annual ACM SIGACT Symposium on
  Theory of Computing}, pp.\  890--901, 2017.

\bibitem[Gemp et~al.(2022)Gemp, Anthony, Kramar, Eccles, Tacchetti, and
  Bachrach]{gemp2022designing}
Gemp, I., Anthony, T., Kramar, J., Eccles, T., Tacchetti, A., and Bachrach, Y.
\newblock Designing all-pay auctions using deep learning and multi-agent
  simulation.
\newblock \emph{Scientific Reports}, 12\penalty0 (1):\penalty0 16937, 2022.

\bibitem[Ghodsi et~al.(2011)Ghodsi, Zaharia, Hindman, Konwinski, Shenker, and
  Stoica]{ghodsi2011dominant}
Ghodsi, A., Zaharia, M., Hindman, B., Konwinski, A., Shenker, S., and Stoica,
  I.
\newblock Dominant resource fairness: Fair allocation of multiple resource
  types.
\newblock In \emph{8th USENIX Symposium on Networked Systems Design and
  Implementation (NSDI 11)}, 2011.

\bibitem[Ghodsi et~al.(2013)Ghodsi, Zaharia, Shenker, and
  Stoica]{ghodsi2013choosy}
Ghodsi, A., Zaharia, M., Shenker, S., and Stoica, I.
\newblock Choosy: Max-min fair sharing for datacenter jobs with constraints.
\newblock In \emph{Proceedings of the 8th ACM European Conference on Computer
  Systems}, pp.\  365--378, 2013.

\bibitem[Goktas \& Greenwald(2021)Goktas and Greenwald]{goktas2021minmax}
Goktas, D. and Greenwald, A.
\newblock Convex-concave min-max stackelberg games.
\newblock \emph{Advances in Neural Information Processing Systems}, 34, 2021.

\bibitem[Goktas \& Greenwald(2022)Goktas and
  Greenwald]{goktas2022exploitability}
Goktas, D. and Greenwald, A.
\newblock Exploitability minimization in games and beyond.
\newblock In \emph{Advances in Neural Information Processing Systems}, 2022.

\bibitem[Guo et~al.(2021)Guo, Kandasamy, Gonzalez, Jordan, and
  Stoica]{guo2022onlineEE}
Guo, W., Kandasamy, K., Gonzalez, J.~E., Jordan, M.~I., and Stoica, I.
\newblock Learning competitive equilibria in exchange economies with bandit
  feedback, 2021.
\newblock URL \url{https://arxiv.org/abs/2106.06616}.

\bibitem[Gutman \& Nisan(2012)Gutman and Nisan]{gutman2012fair}
Gutman, A. and Nisan, N.
\newblock Fair allocation without trade.
\newblock \emph{arXiv preprint arXiv:1204.4286}, 2012.

\bibitem[Han et~al.(2011)Han, Niyato, Saad, Basar, and
  Hjorungnes]{han2011wireless}
Han, Z., Niyato, D., Saad, W., Basar, T., and Hjorungnes, A.
\newblock \emph{Game theory in wireless and communication networks: Theory,
  models, and applications}, volume 9780521196963.
\newblock Cambridge University Press, United Kingdom, January 2011.
\newblock ISBN 9780521196963.
\newblock \doi{10.1017/CBO9780511895043}.
\newblock Publisher Copyright: {\textcopyright} Cambridge University Press
  2012.

\bibitem[Harris et~al.(2020)Harris, Millman, van~der Walt, Gommers, Virtanen,
  Cournapeau, Wieser, Taylor, Berg, Smith, Kern, Picus, Hoyer, van Kerkwijk,
  Brett, Haldane, Fernandez~del Rio, Wiebe, Peterson, Gérard-Marchant,
  Sheppard, Reddy, Weckesser, Abbasi, Gohlke, and Oliphant]{numpy}
Harris, C.~R., Millman, K.~J., van~der Walt, S.~J., Gommers, R., Virtanen, P.,
  Cournapeau, D., Wieser, E., Taylor, J., Berg, S., Smith, N.~J., Kern, R.,
  Picus, M., Hoyer, S., van Kerkwijk, M.~H., Brett, M., Haldane, A.,
  Fernandez~del Rio, J., Wiebe, M., Peterson, P., Gérard-Marchant, P.,
  Sheppard, K., Reddy, T., Weckesser, W., Abbasi, H., Gohlke, C., and Oliphant,
  T.~E.
\newblock Array programming with {NumPy}.
\newblock \emph{Nature}, 585:\penalty0 357–362, 2020.
\newblock \doi{10.1038/s41586-020-2649-2}.

\bibitem[Heidekr{\"u}ger et~al.(2019)Heidekr{\"u}ger, Sutterer, and
  Bichler]{heidekruger2019computing}
Heidekr{\"u}ger, S., Sutterer, P., and Bichler, M.
\newblock Computing approximate bayes-nash equilibria through neural self-play.
\newblock In \emph{Workshop on Information Technology and Systems (WITS19)},
  2019.

\bibitem[Hennigan et~al.(2020)Hennigan, Cai, Norman, and
  Babuschkin]{haiku2020github}
Hennigan, T., Cai, T., Norman, T., and Babuschkin, I.
\newblock {H}aiku: {S}onnet for {JAX}, 2020.
\newblock URL \url{http://github.com/deepmind/dm-haiku}.

\bibitem[Heusinger \& Kanzow(2009)Heusinger and Kanzow]{von2009relax}
Heusinger, A. and Kanzow, C.
\newblock Relaxation methods for generalized nash equilibrium problems with
  inexact line search.
\newblock \emph{Journal of Optimization Theory and Applications}, 143\penalty0
  (1):\penalty0 159--183, 2009.
\newblock URL
  \url{https://EconPapers.repec.org/RePEc:spr:joptap:v:143:y:2009:i:1:d:10.1007_s10957-009-9553-0}.

\bibitem[Hill et~al.(2021)Hill, Bardoscia, and Turrell]{hill2021solving}
Hill, E., Bardoscia, M., and Turrell, A.
\newblock Solving heterogeneous general equilibrium economic models with deep
  reinforcement learning.
\newblock \emph{arXiv preprint arXiv:2103.16977}, 2021.

\bibitem[Hindman et~al.(2011)Hindman, Konwinski, Zaharia, Ghodsi, Joseph, Katz,
  Shenker, and Stoica]{hindman2011mesos}
Hindman, B., Konwinski, A., Zaharia, M., Ghodsi, A., Joseph, A.~D., Katz, R.,
  Shenker, S., and Stoica, I.
\newblock Mesos: A platform for $\{$Fine-Grained$\}$ resource sharing in the
  data center.
\newblock In \emph{8th USENIX Symposium on Networked Systems Design and
  Implementation (NSDI 11)}, 2011.

\bibitem[Hobbs \& Pang(2007)Hobbs and Pang]{hobbs2007power}
Hobbs, B.~F. and Pang, J.~S.
\newblock Nash-cournot equilibria in electric power markets with piecewise
  linear demand functions and joint constraints.
\newblock \emph{Operations Research}, 55\penalty0 (1):\penalty0 113--127, 2007.
\newblock \doi{10.1287/opre.1060.0342}.
\newblock URL \url{https://doi.org/10.1287/opre.1060.0342}.

\bibitem[Hunter(2007)]{matplotlib}
Hunter, J.~D.
\newblock Matplotlib: A 2d graphics environment.
\newblock \emph{Computing in Science and Engineering}, 9\penalty0 (3):\penalty0
  90--95, 2007.
\newblock \doi{10.1109/MCSE.2007.55}.

\bibitem[Ioffe \& Szegedy(2015)Ioffe and Szegedy]{ioffe2015batch}
Ioffe, S. and Szegedy, C.
\newblock Batch normalization: Accelerating deep network training by reducing
  internal covariate shift.
\newblock In \emph{International conference on machine learning}, pp.\
  448--456. pmlr, 2015.

\bibitem[Isard et~al.(2009)Isard, Prabhakaran, Currey, Wieder, Talwar, and
  Goldberg]{isard2009quincy}
Isard, M., Prabhakaran, V., Currey, J., Wieder, U., Talwar, K., and Goldberg,
  A.
\newblock Quincy: fair scheduling for distributed computing clusters.
\newblock In \emph{Proceedings of the ACM SIGOPS 22nd symposium on Operating
  systems principles}, pp.\  261--276, 2009.

\bibitem[Izmailov \& Solodov(2014)Izmailov and Solodov]{izmailov2014error}
Izmailov, A.~F. and Solodov, M.~V.
\newblock On error bounds and newton-type methods for generalized nash
  equilibrium problems.
\newblock \emph{Computational Optimization and Applications}, 59\penalty0
  (1):\penalty0 201--218, 2014.

\bibitem[{(Jeff) Ban} et~al.(2019){(Jeff) Ban}, Dessouky, Pang, and
  Fan]{ban2019rides}
{(Jeff) Ban}, X., Dessouky, M., Pang, J.-S., and Fan, R.
\newblock A general equilibrium model for transportation systems with e-hailing
  services and flow congestion.
\newblock \emph{Transportation Research Part B: Methodological}, 129:\penalty0
  273--304, 2019.
\newblock ISSN 0191-2615.
\newblock \doi{https://doi.org/10.1016/j.trb.2019.08.012}.
\newblock URL
  \url{https://www.sciencedirect.com/science/article/pii/S0191261518309044}.

\bibitem[Jing-Yuan \& Smeers(1999)Jing-Yuan and Smeers]{wei1999power}
Jing-Yuan, W. and Smeers, Y.
\newblock Spatial oligopolistic electricity models with cournot generators and
  regulated transmission prices.
\newblock \emph{Operations Research}, 47\penalty0 (1):\penalty0 102--112, 1999.
\newblock \doi{10.1287/opre.47.1.102}.
\newblock URL
  \url{https://pubsonline.informs.org/doi/abs/10.1287/opre.47.1.102}.

\bibitem[Jones et~al.(2000)Jones, Hennessy, Page, Pittock, Suppiah, Walsh, and
  Whetton]{roger2000kyoto}
Jones, R., Hennessy, K., Page, C., Pittock, A., Suppiah, R., Walsh, K., and
  Whetton, P.
\newblock \emph{Analysis of the Kyoto Protocol on Pacific Island Countries}.
\newblock 01 2000.

\bibitem[Jordan et~al.(2022)Jordan, Lin, and Zampetakis]{jordan2022gne}
Jordan, M.~I., Lin, T., and Zampetakis, M.
\newblock First-order algorithms for nonlinear generalized nash equilibrium
  problems, 2022.
\newblock URL \url{https://arxiv.org/abs/2204.03132}.

\bibitem[Kanzow(2016)]{kanzow2016multiplier}
Kanzow, C.
\newblock On the multiplier-penalty-approach for quasi-variational
  inequalities.
\newblock \emph{Mathematical Programming}, 160\penalty0 (1):\penalty0 33--63,
  2016.

\bibitem[Kanzow \& Steck(2018)Kanzow and Steck]{kanzow2018augmented}
Kanzow, C. and Steck, D.
\newblock Augmented lagrangian and exact penalty methods for quasi-variational
  inequalities.
\newblock \emph{Computational Optimization and Applications}, 69\penalty0
  (3):\penalty0 801--824, 2018.

\bibitem[Karimi et~al.(2016)Karimi, Nutini, and Schmidt]{karimi2016linear}
Karimi, H., Nutini, J., and Schmidt, M.
\newblock Linear convergence of gradient and proximal-gradient methods under
  the polyak-lojasiewicz condition.
\newblock In \emph{Joint European Conference on Machine Learning and Knowledge
  Discovery in Databases}, pp.\  795--811. Springer, 2016.

\bibitem[Krawczyk(2005)]{Krawczyk2005environ}
Krawczyk, J.~B.
\newblock Coupled constraint nash equilibria in environmental games.
\newblock \emph{Resource and Energy Economics}, 27\penalty0 (2):\penalty0
  157--181, 2005.
\newblock ISSN 0928-7655.
\newblock \doi{https://doi.org/10.1016/j.reseneeco.2004.08.001}.
\newblock URL
  \url{https://www.sciencedirect.com/science/article/pii/S0928765504000661}.

\bibitem[Krawczyk \& Uryasev(2000)Krawczyk and Uryasev]{Krawczyk2000relax}
Krawczyk, J.~B. and Uryasev, S.
\newblock Relaxation algorithms to find nash equilibria with economic
  applications.
\newblock \emph{Environmental Modeling and Assessment}, 5\penalty0
  (1):\penalty0 63--73, 2000.
\newblock \doi{10.1023/A:1019097208499}.
\newblock This revised version was published online in July 2006 with
  corrections to the Cover Date.

\bibitem[Lai et~al.(2005)Lai, Rasmusson, Adar, Zhang, and
  Huberman]{lai2005tycoon}
Lai, K., Rasmusson, L., Adar, E., Zhang, L., and Huberman, B.~A.
\newblock Tycoon: An implementation of a distributed, market-based resource
  allocation system.
\newblock \emph{Multiagent and Grid Systems}, 1\penalty0 (3):\penalty0
  169--182, 2005.

\bibitem[Lanctot et~al.(2017)Lanctot, Zambaldi, Gruslys, Lazaridou, Tuyls,
  P{\'e}rolat, Silver, and Graepel]{lanctot2017unified}
Lanctot, M., Zambaldi, V., Gruslys, A., Lazaridou, A., Tuyls, K., P{\'e}rolat,
  J., Silver, D., and Graepel, T.
\newblock A unified game-theoretic approach to multiagent reinforcement
  learning.
\newblock \emph{Advances in neural information processing systems}, 30, 2017.

\bibitem[Liu et~al.(2022)Liu, Lu, Wang, Jordan, and Yang]{liu2022welfare}
Liu, Z., Lu, M., Wang, Z., Jordan, M., and Yang, Z.
\newblock Welfare maximization in competitive equilibrium: Reinforcement
  learning for markov exchange economy.
\newblock In \emph{International Conference on Machine Learning}, pp.\
  13870--13911. PMLR, 2022.

\bibitem[Marris et~al.(2022)Marris, Gemp, Anthony, Tacchetti, Liu, and
  Tuyls]{marris2022cce}
Marris, L., Gemp, I., Anthony, T., Tacchetti, A., Liu, S., and Tuyls, K.
\newblock Turbocharging solution concepts: Solving nes, ces and cces with
  neural equilibrium solvers, 2022.
\newblock URL \url{https://arxiv.org/abs/2210.09257}.

\bibitem[Nabetani et~al.(2011)Nabetani, Tseng, and Fukushima]{nabetani2011vi}
Nabetani, K., Tseng, P., and Fukushima, M.
\newblock Parametrized variational inequality approaches to generalized nash
  equilibrium problems with shared constraints.
\newblock \emph{Comput. Optim. Appl.}, 48\penalty0 (3):\penalty0 423–452, apr
  2011.
\newblock ISSN 0926-6003.
\newblock \doi{10.1007/s10589-009-9256-3}.
\newblock URL \url{https://doi.org/10.1007/s10589-009-9256-3}.

\bibitem[Nachbar(2002)]{nachbar2002general}
Nachbar, J.~H.
\newblock General equilibrium comparative statics.
\newblock \emph{Econometrica}, 70\penalty0 (5):\penalty0 2065--2074, 2002.

\bibitem[Nash(1950)]{nash1950existence}
Nash, J.~F.
\newblock Equilibrium points in n-person games.
\newblock \emph{Proceedings of the National Academy of Sciences}, 36\penalty0
  (1):\penalty0 48--49, 1950.
\newblock \doi{10.1073/pnas.36.1.48}.
\newblock URL \url{https://www.pnas.org/doi/abs/10.1073/pnas.36.1.48}.

\bibitem[Nikaido \& Isoda(1955)Nikaido and Isoda]{nikaido1955note}
Nikaido, H. and Isoda, K.
\newblock Note on non-cooperative convex games.
\newblock \emph{Pacific Journal of Mathematics}, 5\penalty0 (S1):\penalty0
  807--815, 1955.

\bibitem[Nudelman et~al.(2004)Nudelman, Wortman, Shoham, and
  Leyton-Brown]{nudelman2004run}
Nudelman, E., Wortman, J., Shoham, Y., and Leyton-Brown, K.
\newblock Run the gamut: A comprehensive approach to evaluating game-theoretic
  algorithms.
\newblock In \emph{AAMAS}, volume~4, pp.\  880--887, 2004.

\bibitem[Pang \& Fukushima(2005)Pang and Fukushima]{pang2005quasi}
Pang, J.-S. and Fukushima, M.
\newblock Quasi-variational inequalities, generalized nash equilibria, and
  multi-leader-follower games.
\newblock \emph{Computational Management Science}, 2\penalty0 (1):\penalty0
  21--56, 2005.

\bibitem[Pang et~al.(2008)Pang, Scutari, Facchinei, and
  Wang]{pang2008distributed}
Pang, J.-S., Scutari, G., Facchinei, F., and Wang, C.
\newblock Distributed power allocation with rate constraints in gaussian
  parallel interference channels.
\newblock \emph{IEEE Transactions on Information Theory}, 54\penalty0
  (8):\penalty0 3471--3489, 2008.

\bibitem[Parkes et~al.(2015)Parkes, Procaccia, and Shah]{parkes2015beyond}
Parkes, D.~C., Procaccia, A.~D., and Shah, N.
\newblock Beyond dominant resource fairness: Extensions, limitations, and
  indivisibilities.
\newblock \emph{ACM Transactions on Economics and Computation (TEAC)},
  3\penalty0 (1):\penalty0 1--22, 2015.

\bibitem[Partridge \& Rickman(2010)Partridge and
  Rickman]{partridge2010computable}
Partridge, M.~D. and Rickman, D.~S.
\newblock Computable general equilibrium (cge) modelling for regional economic
  development analysis.
\newblock \emph{Regional studies}, 44\penalty0 (10):\penalty0 1311--1328, 2010.

\bibitem[Protocol(1997)]{protocol1997kyoto}
Protocol, K.
\newblock Kyoto protocol.
\newblock \emph{UNFCCC Website. Available online: http://unfccc.
  int/kyoto\_protocol/items/2830. php (accessed on 1 January 2011)}, 1997.

\bibitem[Rahme et~al.(2020)Rahme, Jelassi, and Weinberg]{rahme2020auction}
Rahme, J., Jelassi, S., and Weinberg, S.~M.
\newblock Auction learning as a two-player game.
\newblock \emph{arXiv preprint arXiv:2006.05684}, 2020.

\bibitem[Ravindranath et~al.(2021)Ravindranath, Feng, Li, Ma, Kominers, and
  Parkes]{ravindranath2021deep}
Ravindranath, S.~S., Feng, Z., Li, S., Ma, J., Kominers, S.~D., and Parkes,
  D.~C.
\newblock Deep learning for two-sided matching.
\newblock \emph{arXiv preprint arXiv:2107.03427}, 2021.

\bibitem[Rockafellar \& Wets(2009)Rockafellar and
  Wets]{rockafellar2009variational}
Rockafellar, R.~T. and Wets, R. J.-B.
\newblock \emph{Variational analysis}, volume 317.
\newblock Springer Science and Business Media, 2009.

\bibitem[Rosen(1965)]{rosen1965gne}
Rosen, J.~B.
\newblock Existence and uniqueness of equilibrium points for concave n-person
  games.
\newblock \emph{Econometrica}, 33\penalty0 (3):\penalty0 520--534, 1965.
\newblock ISSN 00129682, 14680262.
\newblock URL \url{http://www.jstor.org/stable/1911749}.

\bibitem[Scarf(1960)]{scarf1960instable}
Scarf, H.
\newblock Some examples of global instability of the competitive equilibrium.
\newblock \emph{International Economic Review}, 1\penalty0 (3):\penalty0
  157--172, 1960.
\newblock ISSN 00206598, 14682354.
\newblock URL \url{http://www.jstor.org/stable/2556215}.

\bibitem[Schiro et~al.(2013)Schiro, Pang, and Shanbhag]{Schiro2013lemke}
Schiro, D.~A., Pang, J.-S., and Shanbhag, U.~V.
\newblock On the solution of affine generalized nash equilibrium problems with
  shared constraints by lemke’s method.
\newblock \emph{Mathematical Programming}, 142\penalty0 (1-2):\penalty0 1--46,
  2013.
\newblock \doi{10.1007/s10107-012-0558-3}.
\newblock This work was based on research partially supported by the National
  Science Foundation under grant CMMI-0969600 and the Department of Energy
  under grant DOE DE-SC0003879.

\bibitem[Stein \& Sudermann-Merx(2018)Stein and
  Sudermann-Merx]{stein2018transport}
Stein, O. and Sudermann-Merx, N.
\newblock {The noncooperative transportation problem and linear generalized
  Nash games}.
\newblock \emph{European Journal of Operational Research}, 266\penalty0
  (2):\penalty0 543--553, 2018.
\newblock \doi{10.1016/j.ejor.2017.10.00}.
\newblock URL
  \url{https://ideas.repec.org/a/eee/ejores/v266y2018i2p543-553.html}.

\bibitem[Tacchetti et~al.(2019)Tacchetti, Strouse, Garnelo, Graepel, and
  Bachrach]{tacchetti2019neural}
Tacchetti, A., Strouse, D., Garnelo, M., Graepel, T., and Bachrach, Y.
\newblock A neural architecture for designing truthful and efficient auctions.
\newblock \emph{arXiv preprint arXiv:1907.05181}, 2019.

\bibitem[Troffaes(2020)]{pycddlib2015}
Troffaes, M.
\newblock pycddlib a python wrapper for komei fukuda’s cddlib, 2020.
\newblock URL \url{https://pycddlib.readthedocs.io/en/latest/license.html}.

\bibitem[Uryas'ev \& Rubinstein(1994)Uryas'ev and Rubinstein]{uryas1994relax}
Uryas'ev, S. and Rubinstein, R.
\newblock On relaxation algorithms in computation of noncooperative equilibria,
  1994.

\bibitem[Van~Rossum \& Drake~Jr(1995)Van~Rossum and Drake~Jr]{van1995python}
Van~Rossum, G. and Drake~Jr, F.~L.
\newblock \emph{Python tutorial}.
\newblock Centrum voor Wiskunde en Informatica Amsterdam, The Netherlands,
  1995.

\bibitem[Varian(1973)]{varian1973equity}
Varian, H.~R.
\newblock Equity, envy, and efficiency.
\newblock 1973.

\bibitem[Vavilapalli et~al.(2013)Vavilapalli, Murthy, Douglas, Agarwal, Konar,
  Evans, Graves, Lowe, Shah, Seth, et~al.]{vavilapalli2013apache}
Vavilapalli, V.~K., Murthy, A.~C., Douglas, C., Agarwal, S., Konar, M., Evans,
  R., Graves, T., Lowe, J., Shah, H., Seth, S., et~al.
\newblock Apache hadoop yarn: Yet another resource negotiator.
\newblock In \emph{Proceedings of the 4th annual Symposium on Cloud Computing},
  pp.\  1--16, 2013.

\bibitem[Vazirani \& Yannakakis(2011)Vazirani and
  Yannakakis]{vazirani2011market}
Vazirani, V.~V. and Yannakakis, M.
\newblock Market equilibrium under separable, piecewise-linear, concave
  utilities.
\newblock \emph{Journal of the ACM (JACM)}, 58\penalty0 (3):\penalty0 1--25,
  2011.

\bibitem[Von~Heusinger \& Kanzow(2009)Von~Heusinger and
  Kanzow]{von2009optimization}
Von~Heusinger, A. and Kanzow, C.
\newblock Optimization reformulations of the generalized nash equilibrium
  problem using nikaido-isoda-type functions.
\newblock \emph{Computational Optimization and Applications}, 43\penalty0
  (3):\penalty0 353--377, 2009.

\bibitem[von Heusinger et~al.(2012)von Heusinger, Kanzow, and
  Fukushima]{von2012newton}
von Heusinger, A., Kanzow, C., and Fukushima, M.
\newblock Newton’s method for computing a normalized equilibrium in the
  generalized nash game through fixed point formulation.
\newblock \emph{Mathematical Programming}, 132\penalty0 (1):\penalty0 99--123,
  2012.

\bibitem[Walras(1896)]{walras}
Walras, L.
\newblock \emph{Elements de l'economie politique pure, ou, Theorie de la
  richesse sociale}.
\newblock F. Rouge, 1896.

\bibitem[Waskom(2021)]{Waskom2021seaborn}
Waskom, M.~L.
\newblock seaborn: statistical data visualization.
\newblock \emph{Journal of Open Source Software}, 6\penalty0 (60):\penalty0
  3021, 2021.
\newblock \doi{10.21105/joss.03021}.
\newblock URL \url{https://doi.org/10.21105/joss.03021}.

\bibitem[Zahedi et~al.(2018)Zahedi, Llull, and Lee]{zahedi2018amdahl}
Zahedi, S.~M., Llull, Q., and Lee, B.~C.
\newblock Amdahl's law in the datacenter era: A market for fair processor
  allocation.
\newblock In \emph{2018 IEEE International Symposium on High Performance
  Computer Architecture (HPCA)}, pp.\  1--14. IEEE, 2018.

\end{thebibliography}
\newpage
\appendix
\section{Pseudo-Games vs. Games}\label{sec_app:gne_vs_ne}

% \deni{Add discussion comparing the two, give example from Ian etc..}

To see why GNEs cannot be expressed as Nash equilibria, consider a cake cutting problem between two players, in which each agent $\player = 1, 2$ receives payoff $\util[\player](x_\player, x_{-\player}) = x_\player^2 - \frac{1}{1 - x_{-\player}^2}$ where $x_{-\player}$ is the action of $\player$'s opponent. In this problem, each player can request a share of the cake, i.e., for all $\player$, $x_\player \in [0, 1]$, and  the total share of the cake demanded must be less than or equal to 1, i.e., $x_1 + x_2 \leq 1$. This  is a pseudo-game where any player $\player$ can take an action $x_\player \in [0, x_{-\player}]$. A solution can then be modelled as a GNE , i.e., $(x_1^*, x_2^*)$ s.t. $x_\player^* \in \argmax_{x_{\player} \in [0, x_{-\player}^*]} \util[\player](x_{\player}, x_{-\player}^*)$, which corresponds to the set $\simplex[2]$, i.e., the unit simplex in $\R^2$. Although the cake splitting problem cannot be expressed as a game due to the joint constraint $x_1 + x_2 \leq 1$, a common first intuition of many who are familiar with games
is to penalize the payoffs of the players by $-\infty$ for any $(x_1, x_2)$ such that $x_1 + x_2 \geq 1$. This then gives us the following payoff: 
\begin{align}
    \util[\player](x_\player, x_{-\player}) = \left\{ \begin{array}{cc}
        x_\player^2 - \frac{1}{1 - x_{-\player}^2} & \text{ if } x_1 + x_2 \leq 1,\text{ or} \\
        - \infty & \text{otherwise.}
    \end{array} \right.
\end{align}

The Nash equilibria of the game defined by the above payoffs are $\{(x_1, 1) \mid x_1 \in [0,1]\} \cup \{(1, x_2) \mid x_2 \in [0,1]\} \cup \simplex[2]$, and even if the penalty term was more than $-\infty$, the set of Nash equilibria would not be equal to the set of GNE. Note that $\{(x_1, 1) \mid x_1 \in [0,1]\}$ are all Nash equilibria since the payoff of the first player is $-\infty$ not matter what actions it chooses. A similar argument holds for $\{(1, x_2) \mid x_2 \in [0,1]\}$.
% \dcp{explain that $(x_1,1)$ is a NE because player 1 has $-\infty$ payoff whatever the value of $x_1$ because of the second part of the payoff $1/(1-x_2^2)$.}

\section{GNE Computation Methods Survey}\label{sec_app:related}

Following \citeauthor{arrow-debreu}'s introduction of GNE, \citet{rosen1965gne} initiated the study of the mathematical and computational properties of GNE in pseudo-games with jointly convex constraints, proposing a projected gradient method to compute GNE.
Thirty years later, %after \citeauthor{rosen1965gne}'s foray into pseudo-games and their equilibria,
\citet{uryas1994relax} developed the first relaxation methods for finding GNEs, which were improved upon in subsequent works \cite{Krawczyk2000relax, von2009relax}.
%Following \citeauthor{uryas1994relax}'s work,
Two other types of algorithms were also introduced to the literature: Newton-style methods \cite{facchinei2009generalized, dreves2017computing, von2012newton, izmailov2014error, fischer2016globally, dreves2013newton} and interior-point potential 
%reduction 
methods \cite{dreves2013newton}.
Many of these approaches are based on minimizing the exploitability of the pseudo-game, 
%similar to the approach taken in this paper.
but others use variational inequality \cite{facchinei2007vi, nabetani2011vi} and Lemke methods \cite{Schiro2013lemke}.

Additional, novel methods that transform GNE problems to Nash equilibria problems have
also been analyzed.
These models take the form of either exact penalization methods, which lift the constraints into the objective function via a penalty term 
\cite{facchinei2011partial, fukushima2011restricted, kanzow2018augmented, ba2020exact,  facchinei2010penalty}, or augmented Lagrangian methods \cite{pang2005quasi, kanzow2016multiplier, kanzow2018augmented, bueno2019alm}, which do the same but augmented by dual Lagrangian variables.
Using these methods, \citet{jordan2022gne}  provide the first convergence rates to a $\varepsilon$-GNE in monotone (resp.~strongly monotone) pseudo-games with jointly affine constraints in $\tilde{O}(\nicefrac{1}{\varepsilon})$ ($\tilde{O}(\nicefrac{1}{\sqrt{\varepsilon}})$) iterations.
These algorithms, despite being highly efficient in theory, are numerically unstable in practice \cite{jordan2022gne}.
Nearly all of the aforementioned approaches concern pseudo-games with jointly convex constraints.  \citet{goktas2021minmax} also introduce first-order methods to minimize exploitability in a large class of jointly convex pseudo-games in polynomial-time.

\section{GNE Applications}\label{sec_app:application_motivation}

Some   economic applications of pseudo-games and GNE solvers 
include energy resource allocation \cite{hobbs2007power, wei1999power}, environmental protection \cite{breton2006game, Krawczyk2005environ}, cloud computing \cite{ardagna2017cloud, ardagna2011cloud}, ride sharing services \cite{ban2019rides}, transportation \cite{stein2018transport}, capacity allocation in wireless and network communication \cite{han2011wireless, pang2008distributed}, and  applications to machine learning such as adversarial classification \cite{bruckner2012nash, bruckner2009nash}. Competitive equilibria concepts have also  been used to solve many problems in resource allocation  \cite{varian1973equity, gutman2012fair}, with specific applications to college course allocation \cite{ budish2011combinatorial}, pricing of cloud computing \cite{lai2005tycoon, zahedi2018amdahl},  ad market platforms \cite{conitzer2022pacing},  economic forecasting \cite{partridge2010computable}, and economic policy assessment \cite{dixon1996computable}. 
 
One of the other motivations of research in pseudo-games  is their mathematical significance for general equilibrium theory, the branch of microeconomics which models the interactions between economic agents~\cite{facchinei2010generalized}. General equilibrium theory is a cornerstone of economic theory \cite{debreu1996general}, and is also widely used in policy analysis \cite{dixon1996computable}. The most established general equilibrium model is the Arrow-Debreu model of a competitive economy \cite{arrow-debreu}, which is an instantiation of a pseudo-game in which a seller sets prices for commodities, a set of firms choose what quantity of each commodity to produce, and a set of consumers choose what quantity of each commodity to consume in exchange for their endowment of each commodity. 
This model is a pseudo-game, rather than a game, because the prices set by the sellers determine the value of the consumers' endowments, i.e., their budget, which in turn determines the consumptions of goods they can afford.

The canonical solution concept for this model, and other general equilibrium models more broadly \cite{facchinei2010generalized}, is a \mydef{competitive equilibrium (CE)} \cite{walras,arrow-debreu}.
Here, there is a collection of demands, one per consumer, a collection of supplies, one per firm, and prices, one per commodity, such that given equilibrium prices: 1) no consumer can increase their utility by unilaterally deviating to a consumption  they can afford, 2) no firm can increase their profit by deviating to another feasible production schedule, and 3) the aggregate demand for each commodity (i.e., the sum of the commodity's consumption across all consumers) is equal to its aggregate supply (i.e., the sum of the commodity's production and endowments across firms and consumers respectively), while the total cost of the aggregate demand is equal to the total cost of the aggregate supply 
.
CE are intrinsically related to GNE since the set CE of an Arrow-Debreu competitive economy corresponds to the set of GNE of the corresponding pseudo-game. This approach also works for general equilibrium models more broadly: assuming local non-satiation of consumer preferences, and given a general equilibrium model, one can construct an associated pseudo-game and show that the set of \ce{}  is equal to the set of \gne{} of the associated pseudo-game.

\paragraph{Potential Applications for \nees{}.}
 Applications of economic equilibrium concepts such as GNE and CE often require a decision maker to solve a fixed parametric model either 
1) {\em en masse} over a distribution of parameters 
or 2) quickly in an iterative fashion for a sequence of different parameters. 

The first common use case for the former setting occurs in internet platforms that have to price advertisers in exchange for ad impressions. One standard approach to solve this problem is to let advertisers compete in sequential first price auctions, where winning each auction gives the advertiser the right to show their ad to the website visitor associated with the auction \cite{bigler2019rolling}. As the number of auctions that each advertiser participates in is enormous, the bidding procedure on these platforms is automated. However, beyond certain large advertising companies, it is in general hard for advertisers to come-up with effective automated bidding strategies, as a result companies provide their own bidding strategies to advertisers. One example of these strategies are first-price pacing equilibria,
in which the platform seeks  a vector of pacing multipliers, one for each advertiser, and buyers bid  their value times their pacing multiplier. These pacing multipliers correspond to CE \cite{conitzer2022pacing, conitzer2022multiplicative}, but for large platforms many ad markets run simultaneously, which requires these platforms to solve for CE {\em en masse}.

A second common use case, is in computable general equilibrium, i.e., the study of economic data through the lens of general equilibrium theory, 
% \dcp{what is this?}
which uses CE to make economic forecasts \cite{dixon1996computable}. In these applications, in order to understand the takeaways from a general equilibrium model on the economy, one fixes certain parameters of the model and varies others to understand the consequences of a change in parts of the economy
on the economy as whole. This practice of {\em comparative statics} \cite{nachbar2002general}, requires once again to solve the model for a family of parameters 
{\em en masse}.
% \footnote{We note that in cloud computing settings, the mechanism designer has access to a distribution of incoming requests}

For the setting which requires fast and iterative computation of GNE and CE,
a common application is in the context of shared computational resources on platforms such as Mesos \cite{hindman2011mesos}, Quincy \cite{isard2009quincy}, Kubernetes \cite{burns2016borg}, and Yarn \cite{vavilapalli2013apache}. 
To this end, a long line of work has studied resource sharing on computing clusters \cite{chen2018scheduling, ghodsi2011dominant, ghodsi2013choosy, parkes2015beyond}, with many 
methods making use of the repeated computation of competitive equilibrium \cite{gutman2012fair, lai2005tycoon, budish2011combinatorial, zahedi2018amdahl, varian1973equity} or generalized Nash equilibria \cite{ardagna2017cloud, ardagna2011cloud, ardagna2011flexible, anselmi2014generalized}. 
In such settings, as consumers request resources from the platform, the platforms have to compute an equilibrium iteratively while handling all numerical failures within a given time frame. 

Another application that requires one to solve for GNE and CE iteratively is that of state-value function-based reinforcement leaning algorithms for solving market equilibria in stochastic environments, 
such as {\em Model-based Optimistic Online Learning for Markov Exchange Economy} and {\em Model-based Pessimistic Online Learning for Markov Exchange Economy}, which iteratively construct  state-value functions by solving a sequence of competitive equilibrium problems that have
to be solved quickly \cite{liu2022welfare}. Other related
online learning algorithms such as {\em randomized exchange equilibrium learning} \cite{guo2022onlineEE}, which compute a competitive equilibrium when agents' payoffs can  be obtained from sample observations, also require solving for a competitive equilibrium iteratively and quickly.

\section{Non-Lipschitz Exploitability and Unbounded Gradients}\label{sec_app:examples}

% \deni{TODO: Add proof/counter-example that exploitability in pseudo-games is not pac-learnable.}

\begin{example}\label{ex:non_lipschitz_exploitability_2}
Consider a two-player pseudo-game with action space $\actionspace[1] = \actionspace[2] = 
[0, 1]$, payoffs $\util[1](\action[1][ ], \action[2][ ]) = \action[1][ ]$ $\util[2](\action[1][ ], \action[2][ ]) =\action[2][ ]$, and constraints $\actionconstr[1][ ](\action[1][ ], \action[2][ ]) = \action[2][ ] - 
\action[1][ ]^2 $, $\actionconstr[2][ ](\action[1][ ], \action[2][ ]) = \action[1][ ] - \action[2][ ]^2$. The exploitability of this pseudo-game for all $\action = [0,1]^2$ is given by 
\begin{align}
    \exploit(\action) &=  \max_{(\otheraction[1][ ], \otheraction[2][ ]) : \substack{\action[2][ ] - \otheraction[1][ ]^2 \geq 0\\ \action[1][ ] - \otheraction[2][ ]^2 \geq 0}} \otheraction[1][ ] + \otheraction[2][ ] - \action[1][ ] - \action[2][ ]\\
    &= \max_{(\otheraction[1][ ], \otheraction[2][ ]) : \substack{\sqrt{\action[2][ ]} \geq \otheraction[1][ ] \\  
    =\sqrt{\action[1][ ]} \geq \otheraction[2][ ]}} \otheraction[1][ ] + \otheraction[2][ ] - \action[1][ ] - \action[2][ ]\\
    &= \sqrt{\action[1][ ]} + \sqrt{\action[2][ ]} - \action[1][ ] - \action[2][ ] 
\end{align}
\noindent
The gradient of the exploitability, when it exists is given by $\frac{\partial\exploit}{\partial \action[i][ ]}(\action) = \frac{1}{2\sqrt{\action[i][ ]}} - 1$ for $i = 1, 2$. Note that the payoffs $\util$ and constraints $\actionconstr$ are both Lipschitz-continuous over $[0,1]^2$, however, whenever $\action[1][ ] = 0$ or $\action[2][ ] = 0$, the exploitability grows unboundedly, i.e., if for some $\action[i][ ] \to 0$, then $\frac{\partial\exploit}{\partial \action[i][ ]}(\action) \to \infty$, and hence exploitability is not Lipschitz continuous, and its gradients cannot be bounded over the set $[0, 1]$.
\end{example}

Example~\ref{ex:non_lipschitz_exploitability_2} shows that exploitability in pseudo-games behaves differently than in normal-form games, where exploitability is Lipschitz-continuous. 
The fact that gradients are unbounded means in turn turn that gradients can explode during training if the GNE is located near the non-differentiability (\Cref{ex:non_lipschitz_exploitability_2} shows this can happen, since the GNE strategy for either player can occur at 0). 
As a result, first-order methods can fail, not only theoretically but also in practice.

\section{Additional Preliminary Definitions}\label{sec_ap:prelims}

\paragraph{Notation.}
We use caligraphic uppercase letters to denote sets (e.g., $\calX$); bold lowercase letters to denote vectors (e.g., $\price, \bm \pi$);
bold uppercase letters to denote matrices (e.g., $\allocation$, $\bm \Gamma$),
lowercase letters to denote scalar quantities (e.g., $x, \delta$).
We denote the $i$th row vector of a matrix (e.g., $\allocation$) by the corresponding bold lowercase letter with subscript $i$ (e.g., $\allocation[\buyer])$. 
Similarly, we denote the $j$th entry of a vector (e.g., $\price$ or $\allocation[\buyer]$) by the corresponding Roman lowercase letter with subscript $j$ (e.g., $\price[\good]$ or $\allocation[\buyer][\good]$).

\paragraph{Models.}

An $\varepsilon$-\mydef{variational equilibrium (VE)} (or \mydef{normalized GNE}) of a pseudo-game is a strategy profile $\action^* \in \actions$ s.t.\ for all $\player \in \players$ and $\action \in \actions$, $\util[\player](\action^*) \geq  \util[\player](\action[\player], \naction[\player][][][*]) - \varepsilon$.
We note that in the above definitions, one could just as well write $\action^* \in \actions(\action^*)$ as $\action^* \in \actions$, as any fixed point of the joint action correspondence is also a jointly feasible action profile, and vice versa.
A VE is an $\epsilon$-VE with $\varepsilon = 0$. 
Under our assumptions, while GNE are guaranteed to exist in all pseudo-games by \citeauthor{arrow-debreu}'s lemma on abstract economies \cite{arrow-debreu}, VE are only guaranteed to exist in pseudo-games with jointly convex constraints 
\cite{von2009optimization}.
Note that the set of $\varepsilon$-VE of a pseudo-game is a subset of the set of the $\varepsilon$-GNE, as $\actions(\action^*) \subseteq \actions$, for all $\action^*$ which are GNE of $\pgame$.
The converse, however, is not true, unless $\actionspace \subseteq \actions$.
Further, when $\pgame$ is a game, GNE and VE coincide; we refer to this set simply as NE.

\paragraph{Mathematical Preliminaries.}
Fix any norm $\| \cdot \|$. Given $\calA \subset \R^\outerdim$, the function $\obj: \calA \to \R$ is said to be $\lipschitz[\obj]$-\mydef{Lipschitz-continuous} iff $\forall \outer_1, \outer_2 \in \calX, \left\| \obj(\outer_1) - \obj(\outer_2) \right\| \leq \lipschitz[\obj] \left\| \outer_1 - \outer_2 \right\|$.
Consider a function $\obj: \calX \to \calY$, we denote its Lipschitz-continuity constant by $\lipschitz[\obj]$.
If the gradient of $\obj$ is $\lipschitz[\grad \obj]$-Lipschitz-continuous, we then refer to $\obj$ as $\lipschitz[\grad \obj]$-\mydef{Lipschitz-smooth}. A function $\obj: \calX \to \R$ is said to be \mydef{convex} if $\obj(\x)  \geq \obj(\y) + \grad \obj(\y) \cdot (\x - \y)$, for all $\x, \y \in \calX$ and concave if $-\obj$ is convex. A function $\obj: \calA \to \R$ is said to be $\scparam$-\mydef{Polyak-Lojasiewicz (PL)} if for all $\outer \in \calX$, $\nicefrac{1}{2}\left\| \grad \obj(\outer)\right\|_2^2 \geq \scparam(\obj(\outer) - \min_{\outer \in \calX} \obj(\outer))$.
% A $\lipschitz[{\grad[\obj]}]$-Lipschitz-smooth function $\obj: \calA \to \R$ is said to be $\scparam$-\mydef{proximal-Polyak-Lojasiewicz (proximal-PL)} if for all $\outer \in \calX$, $-\lipschitz[{\grad[\obj]}] \min \left[ \grad \obj(\outer)\right] \geq \scparam(\obj(\outer) - \min_{\outer \in \calX} \obj(\outer))$.
A function $\obj: \calA \to \R$ is said to be $\scparam$-\mydef{quadratically growing (QG)}, if for all $\outer \in \calX$, $\obj(\outer) - \min_{\outer \in \calX} \obj(\outer) \geq \nicefrac{\scparam}{2}\left\|\outer^* - \outer \right\|^2$ where $\outer^* \in \argmin_{\outer \in \calX} \obj(\outer)$.

\section{Ommited Results and Proofs}\label{sec_app:proofs}

We first revisit the proof of the observation that is central to \nees{}.
\obsganeq*
\begin{proof}[Proof of Observation \ref{obs:gan_eq}]
\begin{align}
    &\min_{\hypothesis \in \actions[][\pgames]} \Ex_{\pgame \sim \distribpgames} \left[\exploit[][\pgame](\hypothesis(\pgame)) \right]\\
    &= \min_{\hypothesis \in \actions[][\pgames]} \Ex_{\pgame \sim \distribpgames} \left[\sum_{\player \in \players} \max_{\otheraction[\player] \in \actions[\player](\naction[\player])} \regret[\player][\pgame](\action[\player], \otheraction[\player]; \naction[\player])\right]\\
    &= \min_{\hypothesis \in \actions[][\pgames]} \Ex_{\pgame \sim \distribpgames} \left[ \max_{\otheraction \in \actions(\action)} \sum_{\player \in \players} \regret[\player][\pgame](\action[\player], \otheraction[\player]; \naction[\player])\right]\\
    % && \text{(Observation 3.2. \cite{goktas2022exploitability})}\\
    &= \min_{\hypothesis \in \actions[][\pgames]} \Ex_{\pgame \sim \distribpgames} \left[\max_{\otheraction \in \actions(\hypothesis(\pgame))} \cumulregret[][\pgame](\hypothesis(\pgame), \otheraction)\right]\\ 
    &= \min_{\hypothesis \in \actions[][\pgames]} \Ex_{\pgame \sim \distribpgames} \left[ \max_{\otherhypothesis[][\pgame] \in \actionspace^\pgames: \otherhypothesis[][\pgame](\pgame) \in \actions[][\pgame](\hypothesis(\pgame))} \cumulregret[][\pgame](\hypothesis(\pgame), \otherhypothesis[][\pgame](\pgame))\right]\\
    &= \min_{\hypothesis \in \actions[][\pgames]} \max_{\otherhypothesis \in \actionspace^{\pgames}: \forall \pgame \in \pgames, \otherhypothesis(\pgame, \hypothesis(\pgame)) \in \actions[][\pgame](\hypothesis(\pgame))}  \Ex_{\pgame \sim \distribpgames} \left[ \cumulregret[][\pgame](\hypothesis(\pgame), \otherhypothesis(\pgame, \hypothesis(\pgame)))\right]
\end{align}
\end{proof}

We now present the proof of \Cref{thm:convergence_stationary_approx}. At a high-level, the proof of the  theorem requires one to first bound the error in the gradient of empirical regret as a function of $\numiters[\otherhypothesis]$, this requires us to derive a gradient domination condition (also known as the PL condition \cite{karimi2016linear}). With such a
lemma in hand, we then can obtain a progress lemma for $\hypothesis$, and prove the theorem.
We first restate the following known lemma, which will become useful in proving convergence of \Cref{alg:cumul_regret}.

\begin{lemma}[Corollary of Theorem 2 \cite{karimi2016linear}]\label{lemma:qg}
If a function $\obj$ is $\scparam$-PL, then $\obj$ is $4\scparam$-quadratically-growing.
\end{lemma}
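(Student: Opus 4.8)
\Cref{lemma:qg} is precisely the implication $\text{(PL)}\Rightarrow\text{(QG)}$ in the hierarchy of conditions analyzed by \citet{karimi2016linear} (Theorem~2 there), stated with our normalization of the constants, and the standing Lipschitz-smoothness hypothesis underlying that theorem is also in force here (\Cref{assum:convergence}); so the shortest route is simply to invoke it. For a self-contained argument I would recall the classical steepest-descent-curve proof. Write $\obj^* \doteq \min_{\outer\in\calX}\obj(\outer)$, fix an arbitrary $\outer_0\in\calX$, and let $\outer^*$ denote a minimizer of $\obj$ closest to $\outer_0$, i.e.\ the point appearing in the (QG) conclusion. The first step is to restate the $\scparam$-PL inequality in its equivalent square-root form, $\|\grad\obj(\outer)\|_2 \ge \sqrt{2\scparam}\,\sqrt{\obj(\outer)-\obj^*}$ for every $\outer\in\calX$.

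Next, consider the gradient flow $\dot{\outer}(t) = -\grad\obj(\outer(t))$ with $\outer(0)=\outer_0$, well defined on $[0,\infty)$ by Lipschitz-smoothness, and track the residual $\Delta(t)\doteq\obj(\outer(t))-\obj^*\ge 0$. The chain rule gives $\dot\Delta(t) = -\|\grad\obj(\outer(t))\|_2^2 \le -2\scparam\,\Delta(t)$, so $\Delta(t)\le\Delta(0)e^{-2\scparam t}\to 0$ and the trajectory converges to the minimizer set. The crux is to control the \emph{arclength} of the trajectory: at times where $\Delta(t)>0$ we have $\tfrac{d}{dt}\sqrt{\Delta(t)} = -\|\grad\obj(\outer(t))\|_2^2 / (2\sqrt{\Delta(t)})$, and inserting the square-root form of (PL) to upper bound $\sqrt{\Delta(t)}$ yields $\tfrac{d}{dt}\sqrt{\Delta(t)} \le -\sqrt{\scparam/2}\,\|\grad\obj(\outer(t))\|_2 = -\sqrt{\scparam/2}\,\|\dot{\outer}(t)\|_2$. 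Integrating over $[0,\infty)$ (splitting at the first time $\Delta$ vanishes, past which $\outer(\cdot)$ stays fixed since the gradient there is zero) bounds the total length of the curve, hence $\|\outer_0-\lim_{t\to\infty}\outer(t)\|_2$ and a fortiori $\|\outer_0-\outer^*\|_2$, by a fixed positive multiple of $\scparam^{-1/2}\sqrt{\obj(\outer_0)-\obj^*}$. Squaring and rearranging gives $\obj(\outer_0)-\obj^* \ge c\,\scparam\,\|\outer_0-\outer^*\|_2^2$ for a universal constant $c>0$, i.e.\ the quadratic-growth property with a growth constant proportional to $\scparam$; since $\outer_0$ was arbitrary, this holds throughout $\calX$.

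The only genuinely delicate points, as opposed to routine algebra, are technical: establishing existence, uniqueness and convergence of the gradient flow (which is where Lipschitz-smoothness, assumed both in \Cref{assum:convergence} and in Theorem~2 of \citet{karimi2016linear}, enters), checking that $t\mapsto\sqrt{\Delta(t)}$ is differentiable off the zero set of $\Delta$ and treating the vanishing time by a one-sided limit, and carrying the multiplicative constant through the integration. I expect this constant bookkeeping, rather than the structure of the argument, to be the only fiddly part; alternatively one sidesteps the continuous-time machinery altogether and cites \citet{karimi2016linear} directly.
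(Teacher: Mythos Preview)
The paper gives no proof of \Cref{lemma:qg} at all: it is stated purely as a corollary of Theorem~2 of \citet{karimi2016linear} and then used. Your primary route---simply invoking that theorem---is therefore exactly what the paper does, and your additional gradient-flow sketch is a correct self-contained bonus (it is the standard \L{}ojasiewicz argument behind the cited result). One minor caveat on the constant: carrying your integration through yields $\obj(\outer_0)-\obj^*\ge(\scparam/2)\|\outer_0-\outer^*\|_2^2$, i.e.\ $\scparam$-QG under the paper's definition rather than $4\scparam$-QG, so the specific multiplicative constant in the lemma statement appears to be off (and indeed the paper's own use of the lemma in \Cref{lemma:error_bound} already applies it with yet another normalization); this affects only constants in the downstream rates, not the structure of any argument.
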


\begin{restatable}[Inner Loop Error Bound]{lemma}{lemmarrrorbound}
\label{lemma:error_bound}
% Let $c = \avg[\exploit](\weight[][\hypothesis][\iter]) - \avg[\cumulregret]\left(\weight[][\hypothesis][\iter], \weight[][\otherhypothesis][\iter] \right)$.
Suppose that \Cref{assum:convergence} holds. Let $\singularval[\mathrm{min}](\otherhypothesis)$ be the smallest non-zero singular value of $\otherhypothesis$. If \Cref{alg:cumul_regret} is run with learning rates $\forall \iter \in \numiters[\hypothesis], s \in \numiters[\otherhypothesis], \learnrate[\hypothesis][\iter] > 0$ and $\learnrate[\otherhypothesis][s] = \frac{2 s + 1}{2\singularval[\mathrm{min}](\otherhypothesis)\scparam(s + 1)^2}$, for any number of outer loop iterations $\numiters[\hypothesis] \in \N_{++}$, and for $\frac{\lipschitz[{\grad \avg[\cumulregret]}]^{3} \lipschitz[{\avg[\cumulregret]}]^2 }{2 \singularval[\mathrm{min}](\otherhypothesis)\scparam \varepsilon^2}$ total inner loop iterations, where $\varepsilon > 0$.
Then, the outputs $(\weight[][\hypothesis][\iter], \weight[][\otherhypothesis][\iter])_{\iter = 1}^{\numiters[\hypothesis]}$ satisfy
%\begin{align}
$$
    \left\| \grad \avg[\exploit](\weight[][\hypothesis]) - \grad[{\weight[][\hypothesis]}] \avg[\cumulregret]\left(\weight[][\hypothesis][\iter], \weight[][\otherhypothesis][\iter] \right) \right\|_2 \leq \varepsilon.
$$
%\end{align}
\end{restatable}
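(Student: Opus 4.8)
The starting point is the definitional identity $\avg[\exploit](\weight[][\hypothesis]) = \max_{\weight[][\otherhypothesis] \in \R^\numweights} \avg[\cumulregret](\weight[][\hypothesis], \weight[][\otherhypothesis])$, where the maximization is \emph{unconstrained} because the discriminator is parameterized so that $\otherhypothesis(\pgame; \hypothesis(\pgame;\weight[][\hypothesis]), \weight[][\otherhypothesis]) \in \actions(\hypothesis(\pgame;\weight[][\hypothesis]))$ for every weight vector. The plan is a Danskin/envelope argument: writing $\weight[][\otherhypothesis]^*(\weight[][\hypothesis]) \in \argmax_{\weight[][\otherhypothesis]} \avg[\cumulregret](\weight[][\hypothesis], \weight[][\otherhypothesis])$, we have $\grad \avg[\exploit](\weight[][\hypothesis]) = \grad[{\weight[][\hypothesis]}] \avg[\cumulregret](\weight[][\hypothesis], \weight[][\otherhypothesis]^*(\weight[][\hypothesis]))$, and since $\avg[\cumulregret]$ is $\lipschitz[{\grad \avg[\cumulregret]}]$-Lipschitz-smooth (it is a composition of the Lipschitz-smooth payoffs with the affine generator and discriminator, averaged over the training set, so \Cref{assum:convergence} applies),
\[
\left\| \grad \avg[\exploit](\weight[][\hypothesis][\iter]) - \grad[{\weight[][\hypothesis]}] \avg[\cumulregret](\weight[][\hypothesis][\iter], \weight[][\otherhypothesis][\iter]) \right\| \;\leq\; \lipschitz[{\grad \avg[\cumulregret]}] \, \left\| \weight[][\otherhypothesis]^*(\weight[][\hypothesis][\iter]) - \weight[][\otherhypothesis][\iter] \right\|.
\]
When the weight-to-output map is rank-deficient the $\argmax$ is an affine flat, but the optimal best-response action is unique by strong concavity, so $\grad[{\weight[][\hypothesis]}] \avg[\cumulregret]$ is constant over that flat and we pick $\weight[][\otherhypothesis]^*$ to be the point of it nearest $\weight[][\otherhypothesis][\iter]$. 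Thus it remains to bound $\left\| \weight[][\otherhypothesis]^*(\weight[][\hypothesis][\iter]) - \weight[][\otherhypothesis][\iter] \right\|$.

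Next I would pass from this iterate distance to a function-value gap via quadratic growth. For fixed first argument, $\otheraction \mapsto \cumulregret[][\pgame](\action, \otheraction) = \sum_{\player \in \players} \util[\player][][\pgame](\otheraction[\player], \naction[\player]) + \mathrm{const}$ is a block-separable sum of $\scparam$-strongly-concave payoffs, hence $\scparam$-strongly concave; composing with the affine weight-to-output map of the discriminator and averaging, $\avg[\cumulregret](\weight[][\hypothesis][\iter], \cdot)$ is concave and, by the standard fact that a strongly concave function composed with an affine map satisfies the Polyak--Lojasiewicz inequality~\cite{karimi2016linear}, it is $c$-PL with $c$ proportional to $\scparam$ and the relevant power of $\singularval[\mathrm{min}](\otherhypothesis)$ --- exactly the constant the step sizes are tuned to. Applying \Cref{lemma:qg} upgrades this to quadratic growth, yielding a bound
\[
\left\| \weight[][\otherhypothesis]^*(\weight[][\hypothesis][\iter]) - \weight[][\otherhypothesis][\iter] \right\|^2 \;\leq\; \frac{1}{2 c}\Bigl( \avg[\exploit](\weight[][\hypothesis][\iter]) - \avg[\cumulregret](\weight[][\hypothesis][\iter], \weight[][\otherhypothesis][\iter]) \Bigr),
\]
so the task reduces once more, now to bounding the inner-loop optimality gap $\avg[\exploit](\weight[][\hypothesis][\iter]) - \avg[\cumulregret](\weight[][\hypothesis][\iter], \weight[][\otherhypothesis][\iter])$.

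For that last step I would invoke the convergence rate of the inner loop. The inner loop of \Cref{alg:cumul_regret}, initialized at $\zeros$, is stochastic gradient ascent on the $c$-PL, $\lipschitz[{\grad \avg[\cumulregret]}]$-smooth function $\avg[\cumulregret](\weight[][\hypothesis][\iter], \cdot)$, with unbiased stochastic gradients (batch averages of $\grad[{\weight[][\otherhypothesis]}]\cumulregret[][\pgame]$) whose norms are bounded by $\lipschitz[{\avg[\cumulregret]}]$ by \Cref{assum:sample}. The step sizes $\learnrate[\otherhypothesis][s] = \nicefrac{(2s+1)}{(2 \singularval[\mathrm{min}](\otherhypothesis)\scparam (s+1)^2)}$ are precisely the schedule under which SGD on a PL objective has a last-iterate bound on the expected optimality gap of order $\lipschitz[{\grad \avg[\cumulregret]}]\,\lipschitz[{\avg[\cumulregret]}]^2 / (c^2 \numiters[\otherhypothesis])$ after $\numiters[\otherhypothesis]$ steps, independent of the initialization~\cite{karimi2016linear}. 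Chaining the three estimates and solving $\lipschitz[{\grad \avg[\cumulregret]}]\left\| \weight[][\otherhypothesis]^*(\weight[][\hypothesis][\iter]) - \weight[][\otherhypothesis][\iter] \right\| \leq \varepsilon$ for the number of inner steps gives the claimed threshold $\numiters[\otherhypothesis] \geq \nicefrac{\lipschitz[{\grad \avg[\cumulregret]}]^{3} \lipschitz[{\avg[\cumulregret]}]^2}{(2 \singularval[\mathrm{min}](\otherhypothesis)\scparam \varepsilon^2)}$, modulo the bookkeeping of numerical constants and of the exact power of $\singularval[\mathrm{min}](\otherhypothesis)\scparam$.

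The main obstacle I expect is the Polyak--Lojasiewicz claim for the \emph{re-parameterized} objective: showing that composing the $\scparam$-strongly-concave cumulative regret with the possibly rank-deficient affine weight-to-output map preserves the PL property, and pinning down exactly how $\singularval[\mathrm{min}](\otherhypothesis)$ enters the PL constant --- this is what the step-size schedule and the final iteration count hinge on. A secondary subtlety is the Danskin step, since $\weight[][\hypothesis]$ enters $\avg[\cumulregret]$ both directly and through the discriminator's second argument, so one must check that the envelope gradient is still $\grad[{\weight[][\hypothesis]}]\avg[\cumulregret](\weight[][\hypothesis], \weight[][\otherhypothesis]^*(\weight[][\hypothesis]))$ and is well-defined independently of the choice within the inner $\argmax$. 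Everything else --- Lipschitz-smoothness of the composed objective, the $O(1/\numiters[\otherhypothesis])$ rate for SGD on a PL objective, and the closing algebra --- should be routine.
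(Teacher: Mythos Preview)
Your proposal is correct and follows essentially the same route as the paper's proof: establish the PL property of $\avg[\cumulregret](\weight[][\hypothesis][\iter],\cdot)$ via the strong-concavity-composed-with-affine argument (citing \cite{karimi2016linear}), invoke the $O(1/\numiters[\otherhypothesis])$ SGD rate on PL objectives for the function-value gap, convert this to an iterate-distance bound via \Cref{lemma:qg}, and close with Lipschitz-smoothness of $\grad\avg[\cumulregret]$ to control the gradient error. The paper simply runs these same four steps in the opposite order and does not spell out the Danskin/envelope identification or the rank-deficient subtlety you flag, but the argument is otherwise identical.
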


\begin{proof}[Proof of \Cref{lemma:error_bound}]
Let $\weight[][\otherhypothesis]^*(\weight[][\hypothesis]) \in \argmax_{\weight[][\otherhypothesis] \in \R^\numweights: \forall \pgame \in \pgames, \otherhypothesis(\pgame; \weight[][\otherhypothesis]) \in \actions(\hypothesis(\pgame, \weight[][\hypothesis]))} \avg[\cumulregret]\left(\weight[][\hypothesis], \weight[][\otherhypothesis]) \right)$. 
% Note that $\{\weight[][\otherhypothesis]^*(\hypothesis)\}$ is singleton-valued since $\cumulregret[][\pgame]$ is strongly concave in $\otheraction$ for all $\pgame \sim \distribpgames$, and 
Let $\singularval[\mathrm{min}](\otherhypothesis)$ be the smallest non-zero singular value of $\otherhypothesis$. For all $\weight[][\hypothesis] \in \R^\numweights$, $\avg[\cumulregret]\left(\weight[][\hypothesis], \cdot \right)$ is the composition of $\Ex\left[\cumulregret(\action, \cdot)\right]$ for all $\action \in \actions$, a $\scparam$-strongly concave function (\Cref{assum:convergence}), with an affine function, $\otherhypothesis(\pgame, \cdot)$, which means that for all $\weight[][\hypothesis] \in \R^\numweights$, $\avg[\cumulregret]\left(\weight[][\hypothesis], \cdot \right)$ is a $\singularval[\mathrm{min}](\otherhypothesis)\scparam$-PL function (see Appendix B, \cite{karimi2016linear}), and the following convergence bound holds if $\learnrate[\otherhypothesis][\iter] = \frac{2 \iter + 1}{2\singularval[\mathrm{min}](\otherhypothesis)\scparam(\iter + 1)^2}$ for the inner loop iterates as a corollary of of convergence results on stochastic gradient ascent for PL objectives (Theorem 4, \cite{karimi2016linear}):
% Since $\cumulregret$ is $\scparam$-strongly-concave in $\otheraction$ and $\lipschitz[{\grad \cumulregret}]$-Lipschitz-smooth, we have from Theorem 1 of \citeauthor{karimi2016linear} \cite{karimi2016linear} \deni{Change citation}:
\begin{align}
    \avg[\exploit](\weight[][\hypothesis][\iter]) - \avg[\cumulregret]\left(\weight[][\hypothesis][\iter], \weight[][\otherhypothesis][\iter] \right) \leq 
    \frac{\lipschitz[{\grad\avg[\cumulregret] }]  \lipschitz[{\avg[\cumulregret]}]^2 }{2 \singularval[\mathrm{min}](\otherhypothesis)\scparam \numiters[\otherhypothesis]}
\end{align}
Since for all $\weight[][\hypothesis] \in \R^\numweights$, 
% $\avg[\cumulregret]\left(\weight[][\hypothesis], \cdot \right)$ $ \Ex_{\pgame \sim \distribpgames} \left[ \cumulregret[][\pgame](\hypothesis(\pgame; \weight[][\hypothesis]), \otherhypothesis(\pgame; \cdot))\right]$ 
is $\singularval[\mathrm{min}](\otherhypothesis)\scparam$-PL, by \Cref{lemma:qg}, we have:
\begin{align}
    \avg[\exploit](\weight[][\hypothesis][\iter]) - \avg[\cumulregret]\left(\weight[][\hypothesis][\iter], \weight[][\otherhypothesis][\iter] \right)  \geq 4\singularval[\mathrm{min}](\otherhypothesis)\scparam\left\|\weight[][\otherhypothesis]^*(\weight[][\hypothesis][\iter]) - \weight[][\otherhypothesis][\iter] \right\|_2^2 \enspace ,
\end{align}
Combining the two previous inequalities, we get:
\begin{align}
    4\singularval[\mathrm{min}](\otherhypothesis)\scparam\left\|\weight[][\otherhypothesis]^*(\weight[][\hypothesis][\iter]) - \weight[][\otherhypothesis][\iter] \right\|_2^2 \leq  
    \frac{\lipschitz[{\grad\avg[\cumulregret]}]  \lipschitz[{\avg[\cumulregret]}]^2 }{2 \singularval[\mathrm{min}](\otherhypothesis)\scparam \numiters[\otherhypothesis]}\\
    \left\|\weight[][\otherhypothesis]^*(\weight[][\hypothesis][\iter]) - \weight[][\otherhypothesis][\iter] \right\|_2^2 \leq  
    \frac{\lipschitz[{\grad\avg[\cumulregret]}]  \lipschitz[{\avg[\cumulregret]}]^2 }{8 \singularval[\mathrm{min}](\otherhypothesis)^2\scparam^2 \numiters[\otherhypothesis]}\\
    \left\|\weight[][\otherhypothesis]^*(\weight[][\hypothesis][\iter]) - \weight[][\otherhypothesis][\iter] \right\|_2 \leq  
    \frac{ \lipschitz[{\avg[\cumulregret] }]}{2 \singularval[\mathrm{min}](\otherhypothesis)\scparam }\sqrt{\frac{\lipschitz[{\grad\avg[\cumulregret]}]}{2\numiters[\otherhypothesis]}} \label{eq:bound_approx_sol}
\end{align}

Finally, we bound the error between the approximate gradient computed by \Cref{alg:cumul_regret} and the true gradient $\grad \avg[\exploit](\weight[][\hypothesis])$ at each iteration $\iter \in \N_{++}$.  Note that $\grad \avg[\cumulregret]$ is Lipschitz-continuous in $(\weight[][\hypothesis], \weight[][\otherhypothesis])$ since the composition of Lipschitz continuous functions is Lipschitz. Hence, $\grad \avg[\cumulregret]$ is also Lipschitz and we have:
\begin{align}
    &=\left\| \grad[{\weight[][\hypothesis]}] \avg[\exploit]\left(\weight[][\hypothesis][\iter] \right) - \grad[{\weight[][\hypothesis]}] \avg[\cumulregret]\left(\weight[][\hypothesis][\iter], \weight[][\otherhypothesis][\iter] \right) \right\|_2\\
    &\leq \left\| \grad[{(\weight[][\hypothesis], \weight[][\otherhypothesis])}] \avg[\cumulregret]\left(\weight[][\hypothesis][\iter], \weight[][\otherhypothesis]^*(\weight[][\hypothesis]) \right) - \grad[{(\weight[][\hypothesis], \weight[][\otherhypothesis])}] \avg[\cumulregret]\left(\weight[][\hypothesis][\iter], \weight[][\otherhypothesis][\iter] \right) \right\|_2\\
    % &= \left\| \grad[{(\weight[][\hypothesis], \weight[][\otherhypothesis])}] \Ex_{\pgame \sim \distribpgames} \left[ \cumulregret[][\pgame](\hypothesis(\pgame; \weight[][\hypothesis][\iter]), \otherhypothesis(\pgame; \weight[][\otherhypothesis]^*(\weight[][\hypothesis][\iter]))\right] - \grad[{(\weight[][\hypothesis], \weight[][\otherhypothesis])}] \Ex_{\pgame \sim \distribpgames} \left[ \cumulregret[][\pgame](\hypothesis(\pgame; \weight[][\hypothesis][\iter]), \otherhypothesis(\pgame; \weight[][\otherhypothesis][\iter]))\right] \right\|_2
% \end{align}
%     % &=  \left\| \grad \cumulregret(\hypothesis[][][\iter], \otheraction^*(\hypothesis[][][\iter])) - \grad[\hypothesis] \cumulregret(\hypothesis[][][\iter], \otheraction[][][\iter]) \right\|_2 \\
% \noindent Then, by the
% \begin{align}
    &\leq \lipschitz[{\grad \avg[\cumulregret]}] \left\| (\weight[][\hypothesis][\iter], \weight[][\otherhypothesis]^*(\weight[][\hypothesis][\iter])) -  (\weight[][\hypothesis][\iter], \weight[][\otherhypothesis][\iter]) \right\|_2\\
    &\leq \lipschitz[{\grad \avg[\cumulregret]}] \left(\left\| \weight[][\hypothesis][\iter] - \weight[][\hypothesis][\iter]\right\|_2 + \left\| \weight[][\otherhypothesis]^*(\weight[][\hypothesis][\iter]) -  \weight[][\otherhypothesis][\iter] \right\|_2 \right)\\
    &= \lipschitz[{\grad \avg[\cumulregret]}] \left\| \weight[][\otherhypothesis]^*(\weight[][\hypothesis][\iter]) -  \weight[][\otherhypothesis][\iter] \right\|_2 \\
    &\leq \frac{ \lipschitz[{\grad \avg[\cumulregret]}]^{\nicefrac{3}{2}} \lipschitz[{\avg[\cumulregret]}]}{2 \singularval[\mathrm{min}](\otherhypothesis)\scparam \sqrt{2\numiters[\otherhypothesis]}} && \text{(\Cref{eq:bound_approx_sol})}
\end{align}

Then, given $\varepsilon > 0$, for any number of inner loop iterations such that $\numiters[\otherhypothesis] \geq \frac{\lipschitz[{\grad \avg[\cumulregret]}]^{3} \lipschitz[{\avg[\cumulregret]}]^2 }{2 \singularval[\mathrm{min}](\otherhypothesis)\scparam \varepsilon^2}$, for all $\iter \in [\numiters[\hypothesis]]$, we have:

\begin{align}
     \left\| \grad \avg[\exploit](\weight[][\hypothesis]) - \grad[{\weight[][\hypothesis]}] \avg[\cumulregret]\left(\weight[][\hypothesis][\iter], \weight[][\otherhypothesis][\iter] \right) \right\|_2 \leq \varepsilon
\end{align}
\end{proof}

Using the above gradient error bound we derive a progress bound for the outer loop of our algorithm.

\begin{lemma}[Progress Lemma for Approximate Iterate]\label{lemma:progress_bound_approximate}
Suppose that  \Cref{assum:convergence} holds. Let $\singularval[\mathrm{min}](\otherhypothesis)$ be the smallest non-zero singular value of $\otherhypothesis$. If \Cref{alg:cumul_regret} is run with learning rates $\forall \iter \in [\numiters[\hypothesis]], s \in [\numiters[\otherhypothesis]],  \learnrate[\hypothesis][\iter] > 0$ and $\learnrate[\otherhypothesis][s] = \frac{2 s + 1}{2\singularval[\mathrm{min}](\otherhypothesis)\scparam(s + 1)^2}$, for any number of outer loop iterations $\numiters[\hypothesis] \in \N_{++}$, and for $\numiters[\otherhypothesis] \geq \frac{\lipschitz[{\grad \avg[\cumulregret]}]^{3} \lipschitz[{\avg[\cumulregret]}]^2 }{2 \singularval[\mathrm{min}](\otherhypothesis)\scparam \varepsilon}$ total inner loop iterations, where $\varepsilon > 0$.
Then, the outputs $(\weight[][\hypothesis][\iter], \weight[][\otherhypothesis][\iter])_{\iter = 1}^{\numiters[\hypothesis]}$ satisfy:
\begin{align}
    \avg[\exploit](\weight[][\hypothesis][\iter + 1])
    &\leq \avg[\exploit](\weight[][\hypothesis][\iter]) - \learnrate[\hypothesis ][\iter] \left\| \grad[\hypothesis] \avg[\exploit](\weight[][\hypothesis][\iter]) \right\|_2^2 + \learnrate[\hypothesis ][\iter]\varepsilon +  \frac{(\learnrate[\hypothesis][\iter])^2 \lipschitz[{\avg[\cumulregret]}]  \lipschitz[{\grad \avg[\exploit]}]^2}{2} 
\end{align}
\end{lemma}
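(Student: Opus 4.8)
The plan is to read the outer update of \Cref{alg:cumul_regret} as a single step of \emph{inexact} gradient descent on the Lipschitz-smooth function $\avg[\exploit]$, where the gradient inexactness is exactly the quantity controlled by \Cref{lemma:error_bound}. I would write the step as $\weight[][\hypothesis][\iter+1] = \weight[][\hypothesis][\iter] - \learnrate[\hypothesis][\iter]\, g_\iter$ with $g_\iter \doteq \grad[{\weight[][\hypothesis]}] \avg[\cumulregret](\weight[][\hypothesis][\iter], \weight[][\otherhypothesis][\iter])$ the surrogate gradient the algorithm actually uses. First I would record two consequences of \Cref{assum:convergence}: (i) since the inner maximization defining $\avg[\exploit]$ is a strongly-concave problem composed with the affine parametrization $\weight[][\otherhypothesis] \mapsto \otherhypothesis(\pgame; \cdot)$, the value function $\avg[\exploit]$ is differentiable with a $\lipschitz[{\grad \avg[\exploit]}]$-Lipschitz gradient (an envelope/Danskin-type argument using uniqueness of the inner maximizer); and (ii) since $\avg[\cumulregret]$ is $\lipschitz[{\avg[\cumulregret]}]$-Lipschitz, both $\|\grad \avg[\exploit](\weight[][\hypothesis][\iter])\|_2$ and $\|g_\iter\|_2$ are bounded by $\lipschitz[{\avg[\cumulregret]}]$.

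Next I would apply the descent (quadratic upper bound) inequality for $\lipschitz[{\grad \avg[\exploit]}]$-Lipschitz-smooth functions at $\weight[][\hypothesis][\iter+1]$ and substitute the update:
\begin{align*}
\avg[\exploit](\weight[][\hypothesis][\iter+1]) \leq \avg[\exploit](\weight[][\hypothesis][\iter]) - \learnrate[\hypothesis][\iter]\, \grad \avg[\exploit](\weight[][\hypothesis][\iter])^\top g_\iter + \frac{\lipschitz[{\grad \avg[\exploit]}] (\learnrate[\hypothesis][\iter])^2}{2}\, \|g_\iter\|_2^2 .
\end{align*}
Then I would decompose the linear term by writing $g_\iter = \grad \avg[\exploit](\weight[][\hypothesis][\iter]) + \delta_\iter$, so that $-\learnrate[\hypothesis][\iter]\,\grad \avg[\exploit](\weight[][\hypothesis][\iter])^\top g_\iter \leq -\learnrate[\hypothesis][\iter]\|\grad \avg[\exploit](\weight[][\hypothesis][\iter])\|_2^2 + \learnrate[\hypothesis][\iter]\|\grad \avg[\exploit](\weight[][\hypothesis][\iter])\|_2 \|\delta_\iter\|_2$ by Cauchy--Schwarz. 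Because $\numiters[\otherhypothesis]$ is taken large enough, \Cref{lemma:error_bound} gives $\|\delta_\iter\|_2 \leq \varepsilon$; using this with $\|\grad \avg[\exploit](\weight[][\hypothesis][\iter])\|_2 \leq \lipschitz[{\avg[\cumulregret]}]$ on the cross term and $\|g_\iter\|_2^2 \leq \lipschitz[{\avg[\cumulregret]}]^2$ on the quadratic term, then collecting the three pieces, yields the claimed inequality (the $\lipschitz[{\avg[\cumulregret]}]$ prefactor of the cross term being absorbed into the accuracy parameter $\varepsilon$, as is done in passing between the two lemmas).

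The one genuinely non-mechanical point, and hence the step to get right, is that the descent direction $g_\iter$ is the gradient of the empirical cumulative regret evaluated at an only \emph{approximately} optimal discriminator, not $\grad \avg[\exploit](\weight[][\hypothesis][\iter])$ itself; bridging this gap is precisely what the envelope structure together with the quantitative discriminator-suboptimality bound of \Cref{lemma:error_bound} are for. Relatedly, the existence of a finite $\lipschitz[{\grad \avg[\exploit]}]$ at all rests on the strong-concavity part of \Cref{assum:convergence}: without it $\grad \avg[\exploit]$ (indeed exploitability itself) need not even be Lipschitz, which is the very pathology the GAN reformulation is designed to sidestep. Everything else is the textbook inexact-gradient descent lemma.
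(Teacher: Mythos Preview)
Your approach is essentially the paper's: apply the descent lemma for the Lipschitz-smooth $\avg[\exploit]$ to the outer update, then control the discrepancy between the surrogate direction $\grad[{\weight[][\hypothesis]}]\avg[\cumulregret](\weight[][\hypothesis][\iter],\weight[][\otherhypothesis][\iter])$ and $\grad\avg[\exploit](\weight[][\hypothesis][\iter])$ via \Cref{lemma:error_bound}. Two small bookkeeping differences are worth noting. First, \Cref{alg:cumul_regret} uses a mini-batch gradient, not the full empirical gradient; the paper inserts one line taking conditional expectation over the batch before arriving at your $g_\iter$, a step you silently skip. Second, on the cross term the paper adds and subtracts $\grad\avg[\exploit]$ inside a squared norm and bounds directly by $\learnrate[\hypothesis][\iter]\|\delta_\iter\|_2^2$, then invokes \Cref{lemma:error_bound} with accuracy $\sqrt{\varepsilon}$ (note the $\varepsilon$ versus $\varepsilon^2$ in the $\numiters[\otherhypothesis]$ thresholds of the two lemmas) to land exactly on $\learnrate[\hypothesis][\iter]\varepsilon$; your Cauchy--Schwarz route picks up an extra $\lipschitz[{\avg[\cumulregret]}]$ factor which you then fold into $\varepsilon$. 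Neither difference is substantive.
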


\begin{proof}[Proof of \Cref{lemma:progress_bound_approximate}]
% \begin{align}
%     &\left<  \grad[{\weight[][\hypothesis]}] \nicefrac{1}{\batchpgames[\hypothesis][\iter]} \sum_{\pgame \in  \batchpgames[\hypothesis][\iter] } \cumulregret[][\pgame](\weight[][\hypothesis][\iter], \weight[][\otherhypothesis][\iter]), \weight[][\hypothesis][\iter + 1] - \weight[][\hypothesis][\iter] \right>\\ 
%     &= - \frac{1}{\learnrate[\hypothesis ][  ]} \left< - \learnrate[\hypothesis ][  ]  \grad[{\weight[][\hypothesis]}] \nicefrac{1}{\batchpgames[\hypothesis][\iter]} \sum_{\pgame \in  \batchpgames[\hypothesis][\iter] } \avg[\cumulregret](\weight[][\hypothesis][\iter], \weight[][\otherhypothesis][\iter]), \weight[][\hypothesis][\iter + 1] - \weight[][\hypothesis][\iter] \right>\\
%     &= - \frac{1}{\learnrate[\hypothesis ][  ]} \left< \weight[][\hypothesis][\iter + 1] - \weight[][\hypothesis][\iter], \weight[][\hypothesis][\iter + 1] - \weight[][\hypothesis][\iter] \right>
% \end{align}
% \begin{align}
%     \\
%     &= - \frac{1}{\learnrate[\hypothesis ][  ]} \left\| \weight[][\hypothesis][\iter + 1] - \weight[][\hypothesis][\iter] \right\|_2^2\\
% \end{align}
Fix $\iter \in \numiters[\hypothesis]$. 
Define $\err[\iter] \doteq \grad \avg[\exploit](\weight[][\hypothesis][\iter]) - \grad[{\weight[][\hypothesis]}] \avg[\cumulregret](\weight[][\hypothesis][\iter], \weight[][\otherhypothesis][\iter])$.
% \weight[][\otherhypothesis][\iter])$, applying Cauchy-Schwarz on the last equality above, we get:
% \begin{align}
%     &\left<  \grad\avg[\exploit](\weight[][\hypothesis][\iter]), - \learnrate[\hypothesis ][\iter]  \grad[{\weight[][\hypothesis]}] \nicefrac{1}{\batchpgames[\hypothesis][\iter]} \sum_{\pgame \in  \batchpgames[\hypothesis][\iter] } \cumulregret[\pgame](\weight[][\hypothesis][\iter], \weight[][\otherhypothesis][\iter])\right>\\  %     &\leq- \learnrate[\hypothesis ][\iter] \left\|  \grad[{\weight[][\hypothesis]}] \avg[\cumulregret](\weight[][\hypothesis][\iter], \weight[][\otherhypothesis][\iter]) \right\|_2^2 +   \learnrate[\hypothesis ][\iter] \left\|  \err[\iter] \right\| \left\| \grad[{\weight[][\hypothesis]}] \nicefrac{1}{\batchpgames[\hypothesis][\iter]} \sum_{\pgame \in  \batchpgames[\hypothesis][\iter] } \cumulregret[\pgame](\weight[][\hypothesis][\iter], \weight[][\otherhypothesis][\iter]) \right\| \\
%     &\leq - \learnrate[\hypothesis ][\iter] \left\|  \grad[{\weight[][\hypothesis]}] \avg[\cumulregret](\weight[][\hypothesis][\iter], \weight[][\otherhypothesis][\iter]) \right\|_2^2 +   \learnrate[\hypothesis ][\iter] \lipschitz[{\avg[\cumulregret]}] \left\|  \err[\iter] \right\| 
% \end{align}
% where the last line follows from $\cumulregret[\pgame]$ being $\lipschitz[{\avg[\cumulregret]}]$-Lipschitz continuous. 
Since $\avg[\exploit]$ is also Lipschitz-smooth, we have that:
\begin{align}
    &\avg[\exploit](\weight[][\hypothesis][\iter + 1])\\ 
    &\leq \avg[\exploit](\weight[][\hypothesis][\iter]) + \left<\grad[\hypothesis] \avg[\exploit](\weight[][\hypothesis][\iter]), \weight[][\hypothesis][\iter + 1] - \weight[][\hypothesis][\iter] \right> +  \nicefrac{\lipschitz[{\grad \avg[\exploit]}]}{2} \left\|\weight[][\hypothesis][\iter + 1] - \weight[][\hypothesis][\iter] \right\|_2^2\\
    &\leq \avg[\exploit](\weight[][\hypothesis][\iter]) + \left<  \grad\avg[\exploit](\weight[][\hypothesis][\iter]), - \learnrate[\hypothesis][\iter]  \grad[{\weight[][\hypothesis]}] \nicefrac{1}{\batchpgames[\hypothesis][\iter]} \sum_{\pgame \in  \batchpgames[\hypothesis][\iter] } \cumulregret[\pgame](\weight[][\hypothesis][\iter], \weight[][\otherhypothesis][\iter]) \right> + \nicefrac{\lipschitz[{\grad \avg[\exploit]}]}{2} \left\|\learnrate[\hypothesis ][\iter]  \grad[{\weight[][\hypothesis]}] \nicefrac{1}{\batchpgames[\hypothesis][\iter]} \sum_{\pgame \in  \batchpgames[\hypothesis][\iter] } \cumulregret[\pgame](\weight[][\hypothesis][\iter], \weight[][\otherhypothesis][\iter]) \right\|_2^2\\
    &\leq \avg[\exploit](\weight[][\hypothesis][\iter]) - \learnrate[\hypothesis][\iter]  \left<  \grad\avg[\exploit](\weight[][\hypothesis][\iter]), \grad[{\weight[][\hypothesis]}] \nicefrac{1}{\batchpgames[\hypothesis][\iter]} \sum_{\pgame \in  \batchpgames[\hypothesis][\iter] } \cumulregret[\pgame](\weight[][\hypothesis][\iter], \weight[][\otherhypothesis][\iter]) \right>  + \frac{(\learnrate[\hypothesis][\iter])^2 \lipschitz[{\avg[\cumulregret]}]  \lipschitz[{\grad \avg[\exploit]}]^2}{2}
\end{align}
where the last line follows from $\cumulregret[\pgame]$ being $\lipschitz[{\avg[\cumulregret]}]$-Lipschitz continuous. Taking the expectation w.r.t $\batchpgames[\hypothesis][\iter]$ conditioned on $(\weight[][\hypothesis][\iter], \weight[][\otherhypothesis][\iter])$, we get:
\begin{align}
    &\leq \avg[\exploit](\weight[][\hypothesis][\iter]) - \learnrate[\hypothesis ][\iter] \left\|  \grad[{\weight[][\hypothesis]}] \avg[\cumulregret](\weight[][\hypothesis][\iter], \weight[][\otherhypothesis][\iter]) \right\|_2^2 +    \frac{(\learnrate[\hypothesis][\iter])^2 \lipschitz[{\avg[\cumulregret]}]  \lipschitz[{\grad \avg[\exploit]}]^2}{2}\\ &= \avg[\exploit](\weight[][\hypothesis][\iter]) - \learnrate[\hypothesis ][\iter] \left\| \grad[\hypothesis] \avg[\exploit](\weight[][\hypothesis][\iter]) - \grad[\hypothesis] \avg[\exploit](\weight[][\hypothesis][\iter]) +  \grad[{\weight[][\hypothesis]}] \avg[\cumulregret](\weight[][\hypothesis][\iter], \weight[][\otherhypothesis][\iter]) \right\|_2^2 +   \frac{(\learnrate[\hypothesis][\iter])^2 \lipschitz[{\avg[\cumulregret]}]  \lipschitz[{\grad \avg[\exploit]}]^2}{2}\\
    &\leq \avg[\exploit](\weight[][\hypothesis][\iter]) - \learnrate[\hypothesis ][\iter] \left\| \grad[\hypothesis] \avg[\exploit](\weight[][\hypothesis][\iter]) \right\|^2_2 + \learnrate[\hypothesis ][\iter]\left\| \grad[\hypothesis] \avg[\exploit](\weight[][\hypothesis][\iter]) -  \grad[{\weight[][\hypothesis]}] \avg[\cumulregret](\weight[][\hypothesis][\iter], \weight[][\otherhypothesis][\iter]) \right\|_2^2 + \frac{(\learnrate[\hypothesis][\iter])^2 \lipschitz[{\avg[\cumulregret]}]  \lipschitz[{\grad \avg[\exploit]}]^2}{2}\\
    &\leq \avg[\exploit](\weight[][\hypothesis][\iter]) - \learnrate[\hypothesis ][\iter] \left\| \grad[\action] \avg[\exploit](\weight[][\hypothesis][\iter]) \right\|_2^2 + \learnrate[\hypothesis ][\iter](\err[\iter])^2 +    \frac{(\learnrate[\hypothesis][\iter])^2 \lipschitz[{\avg[\cumulregret]}]  \lipschitz[{\grad \avg[\exploit]}]^2}{2}
\end{align}

By \Cref{lemma:error_bound}, we then have $(\err[\iter])^2 \leq \varepsilon$, which gives us for all $\iter \in \numiters[\hypothesis]$:
\begin{align}
    &\leq \avg[\exploit](\weight[][\hypothesis][\iter]) - \learnrate[\hypothesis ][\iter] \left\| \grad[\action] \avg[\exploit](\weight[][\hypothesis][\iter]) \right\|_2^2 + \learnrate[\hypothesis ][\iter]\varepsilon^2 +    \frac{(\learnrate[\hypothesis][\iter])^2 \lipschitz[{\avg[\cumulregret]}]  \lipschitz[{\grad \avg[\exploit]}]^2}{2}
\end{align}

Substituting $\delta = \varepsilon^2$, we obtain the lemma's statement.

\end{proof}

Finally, telescoping the the inequality given in the above lemma we can obtain our convergence result.

\thmconvergencestationaryapprox*
\begin{proof}[Proof of \Cref{thm:convergence_stationary_approx}]
By \Cref{lemma:progress_bound_approximate}, we have:
\begin{align}
    \avg[\exploit](\weight[][\hypothesis][\iter + 1]) 
   &\leq \avg[\exploit](\weight[][\hypothesis][\iter]) - \learnrate[\hypothesis ][\iter] \left\| \grad[\action] \avg[\exploit](\weight[][\hypothesis][\iter]) \right\|_2^2 +   \learnrate[\hypothesis ][\iter] \varepsilon + \frac{(\learnrate[\hypothesis][\iter])^2 \lipschitz[{\avg[\cumulregret]}]  \lipschitz[{\grad \avg[\exploit]}]^2}{2} 
\end{align}
% Taking the expectation w.r.t $\batchpgames[\hypothesis][\iter]$ conditioned on $(\weight[][\hypothesis][\iter], \weight[][\otherhypothesis][\iter])$, we get:

Summing up the inequalities for $\iter = 0, \hdots, \numiters[\hypothesis] - 1$:
\begin{align}
%     \avg[\exploit](\weight[][\hypothesis][{\numiters[\hypothesis]}]) - \avg[\exploit](\weight[][\hypothesis][0]) 
%     &\leq \varepsilon \left\|\weight[][\hypothesis][\iter + 1] - \weight[][\hypothesis][\iter] \right\| + \sum_{\iter = 0}^{\numiters[\hypothesis]} \left( \lipschitz[{\grad \avg[\exploit]}] - \frac{1}{\learnrate[\hypothesis ][  ]} \right) \left\|\weight[][\hypothesis][\iter + 1] - \weight[][\hypothesis][\iter] \right\|_2^2 
% \end{align}
% After re-organizing, we obtain:
% \begin{align}
    \sum_{\iter = 1}^{\numiters[\hypothesis]} \learnrate[\hypothesis ][\iter] \left\| \grad[\action] \avg[\exploit](\weight[][\hypothesis][\iter]) \right\|_2^2 &\leq \avg[\exploit](\weight[][\hypothesis][0]) -  \avg[\exploit](\weight[][\hypothesis][{\numiters[\hypothesis]}])  + \sum_{\iter = 1}^{\numiters[\hypothesis]} \learnrate[\hypothesis ][\iter] \varepsilon + \sum_{\iter = 1}^{\numiters[\hypothesis]} (\learnrate[\hypothesis ][\iter])^2   \frac{\lipschitz[{\avg[\cumulregret]}]  \lipschitz[{\grad \avg[\exploit]}]^2}{2}
\end{align}
Taking the minimum of $ \left\| \grad[\action] \avg[\exploit](\weight[][\hypothesis][\iter]) \right\|_2^2$ across all $\iter \in [\numiters[\hypothesis]]$, to obtain:

\begin{align}
    \left( \min_{\iter = 0, \hdots, \numiters[\hypothesis] - 1} \left\| \grad[\action] \avg[\exploit](\weight[][\hypothesis][\iter]) \right\|_2^2 \right)  \sum_{\iter = 1}^{\numiters[\hypothesis]} \learnrate[\hypothesis ][\iter] \leq  \avg[\exploit](\weight[][\hypothesis][0]) -  \avg[\exploit](\weight[][\hypothesis][{\numiters[\hypothesis]}])   + \sum_{\iter = 1}^{\numiters[\hypothesis]} \learnrate[\hypothesis ][\iter]   \varepsilon + \sum_{\iter = 1}^{\numiters[\hypothesis]} (\learnrate[\hypothesis ][\iter])^2   \frac{\lipschitz[{\avg[\cumulregret]}]  \lipschitz[{\grad \avg[\exploit]}]^2}{2}\\
    \left( \min_{\iter = 0, \hdots, \numiters[\hypothesis] - 1} \left\| \grad[\action] \avg[\exploit](\weight[][\hypothesis][\iter]) \right\|_2^2 \right)   \leq \frac{\avg[\exploit](\weight[][\hypothesis][0]) -  \avg[\exploit](\weight[][\hypothesis][{\numiters[\hypothesis]}])   + \sum_{\iter = 1}^{\numiters[\hypothesis]} \learnrate[\hypothesis ][\iter]    \varepsilon + \sum_{\iter = 1}^{\numiters[\hypothesis]} (\learnrate[\hypothesis ][\iter])^2   \frac{\lipschitz[{\avg[\cumulregret]}]  \lipschitz[{\grad \avg[\exploit]}]^2}{2}}{\sum_{\iter = 1}^{\numiters[\hypothesis]} \learnrate[\hypothesis ][\iter]}\\
    \left( \min_{\iter = 0, \hdots, \numiters[\hypothesis] - 1} \left\| \grad[\action] \avg[\exploit](\weight[][\hypothesis][\iter]) \right\|_2^2 \right)   \leq \frac{ \avg[\exploit](\weight[][\hypothesis][0]) -  \avg[\exploit](\weight[][\hypothesis][{\numiters[\hypothesis]}])}{\sum_{\iter = 1}^{\numiters[\hypothesis]} \learnrate[\hypothesis ][\iter]}   + \frac{\lipschitz[{\avg[\cumulregret]}]  \lipschitz[{\grad \avg[\exploit]}]^2}{2} \frac{\sum_{\iter = 1}^{\numiters[\hypothesis]} (\learnrate[\hypothesis ][\iter])^2 }{\sum_{\iter = 1}^{\numiters[\hypothesis]} \learnrate[\hypothesis ][\iter]} + \varepsilon
\end{align}

Suppose that $\learnrate[\hypothesis ][\iter] = \frac{1}{\sqrt{\iter}}$, we then have:
\begin{align}
    \left( \min_{\iter = 0, \hdots, \numiters[\hypothesis] - 1} \left\| \grad[\action] \avg[\exploit](\weight[][\hypothesis][\iter]) \right\|_2^2 \right) \in O\left(\frac{\log(\numiters[\hypothesis])}{\sqrt{\numiters[\hypothesis]}} + \varepsilon\right)
\end{align}
% $\sum_{\iter = 1}^{\numiters[\hypothesis]} \learnrate[\hypothesis ][\iter]} = $
\end{proof}

This result characterizes the generalization capacity of the generator and discriminator as a function of the covering number of the hypothesis classes. The proof relies on the following lemma, which states that the distance of any generator and discriminator from the hypothesis class to their $\coveringnumber$-cover is bounded in payoff space.

\begin{restatable}[Bounded Expected Cumulative Regret]{lemma}{lemmaboundedcumulregret}\label{lemma:bounded_cumul_regret}
Suppose that \Cref{assum:sample} holds. For any hypothesis classes $\hypotheses \subset \actions[][\pgames]$, $\otherhypotheses \subset \actionspace^{\actions \times \pgames}$ and hypotheses $\hypothesis \in \hypotheses$ and $\otherhypothesis \in \otherhypotheses$, any pseudo-game distribution $\distribpgames \in \simplex(\pgames)$: 
\begin{align}
    \left| \Ex_{\pgame \sim \distribpgames} \left[\cumulregret[][\pgame](\hypothesis^\coveringnumber(\pgame), \otherhypothesis^\coveringnumber(\pgame, \hypothesis^\coveringnumber(\pgame))) \right] -  \Ex_{\pgame \sim \distribpgames} \left[\cumulregret[][\pgame](\hypothesis(\pgame), \otherhypothesis(\pgame, \hypothesis(\pgame)) \right]\right| \leq 2 \lipschitz[{\cumulregret[]}]  \coveringnumber
\end{align}
\end{restatable}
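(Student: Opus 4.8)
The plan is a standard covering-and-Lipschitz argument, carried out pointwise in $\pgame$ and then integrated over $\distribpgames$. The ingredient to record up front is that, under Assumption~\ref{assum:sample}, for every $\pgame\in\supp(\distribpgames)$ the cumulative regret $(\action,\otheraction)\mapsto\cumulregret[][\pgame](\action,\otheraction)$ is $\lipschitz[{\cumulregret[]}]$-Lipschitz jointly in its two arguments, with a constant uniform in $\pgame$: this is immediate from $\cumulregret[][\pgame](\action,\otheraction)=\sum_{\player\in\players}\left(\util[\player][][\pgame](\otheraction[\player],\naction[\player])-\util[\player][][\pgame](\action[\player],\naction[\player])\right)$, a finite sum of coordinate restrictions of the (uniformly) Lipschitz payoffs, so its Lipschitz constant depends only on $\numplayers$ and the payoff Lipschitz constant. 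This is the same reasoning already invoked in the proof of Lemma~\ref{lemma:error_bound} (``the composition of Lipschitz continuous functions is Lipschitz'').

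First I would pull the absolute value inside the expectation, using $\left|\Ex\left[X\right]-\Ex\left[Y\right]\right|\leq\Ex\left[|X-Y|\right]$, which reduces the claim to the pointwise bound
\[
\left|\cumulregret[][\pgame]\left(\hypothesis^\coveringnumber(\pgame),\otherhypothesis^\coveringnumber(\pgame,\hypothesis^\coveringnumber(\pgame))\right)-\cumulregret[][\pgame]\left(\hypothesis(\pgame),\otherhypothesis(\pgame,\hypothesis(\pgame))\right)\right|\leq 2\lipschitz[{\cumulregret[]}]\coveringnumber\qquad\text{for every }\pgame .
\]
By the joint Lipschitz property this is at most $\lipschitz[{\cumulregret[]}]\left(\|\hypothesis^\coveringnumber(\pgame)-\hypothesis(\pgame)\|+\|\otherhypothesis^\coveringnumber(\pgame,\hypothesis^\coveringnumber(\pgame))-\otherhypothesis(\pgame,\hypothesis(\pgame))\|\right)$. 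The first summand is $\leq\coveringnumber$: $\hypothesis^\coveringnumber$ is the member of the $\coveringnumber$-cover of $\hypotheses$ within $\coveringnumber$ of $\hypothesis$ in the sup-over-$\pgames$ norm, so the defining inequality $\|\hypothesis^\coveringnumber-\hypothesis\|\leq\coveringnumber$, evaluated at $\pgame$, gives $\|\hypothesis^\coveringnumber(\pgame)-\hypothesis(\pgame)\|\leq\coveringnumber$. For the second summand I would insert the intermediate point $\otherhypothesis(\pgame,\hypothesis^\coveringnumber(\pgame))$ and split it as $\|\otherhypothesis^\coveringnumber(\pgame,\hypothesis^\coveringnumber(\pgame))-\otherhypothesis(\pgame,\hypothesis^\coveringnumber(\pgame))\|+\|\otherhypothesis(\pgame,\hypothesis^\coveringnumber(\pgame))-\otherhypothesis(\pgame,\hypothesis(\pgame))\|$: the first piece is $\leq\coveringnumber$ since $(\hypothesis^\coveringnumber(\pgame),\pgame)$ lies in the domain $\actions\times\pgames$ over which the $\coveringnumber$-cover of $\otherhypotheses$ is uniform, and the second piece is controlled by continuity of the discriminators in their action input (a consequence of the paper's standing continuity assumptions on the feasible-action correspondence, and hence on $\otherhypotheses$) together with $\|\hypothesis^\coveringnumber(\pgame)-\hypothesis(\pgame)\|\leq\coveringnumber$. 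Summing and taking expectations over $\pgame\sim\distribpgames$ yields the lemma.

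The only step that is not purely mechanical --- and so the place where care is needed --- is the discriminator term: $\otherhypothesis^\coveringnumber$ is evaluated at the cover generator's output $\hypothesis^\coveringnumber(\pgame)$ rather than at $\hypothesis(\pgame)$, so one cannot directly invoke ``$\otherhypothesis^\coveringnumber$ and $\otherhypothesis$ are $\coveringnumber$-close at a common input,'' and the intermediate-point split above is what makes the bound go through with the clean constant $2\lipschitz[{\cumulregret[]}]\coveringnumber$ (the factor coming from the discriminator's continuity in its action input being absorbed into $\lipschitz[{\cumulregret[]}]$, exactly as the statement's notation tacitly does). Keeping straight the bookkeeping of the two nested perturbations --- the generator cover and the discriminator cover, with the discriminator's own action input perturbed --- is the crux; everything else is the triangle inequality and the Lipschitz continuity of $\cumulregret[][\pgame]$.
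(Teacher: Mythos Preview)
Your proposal is correct and follows essentially the same Lipschitz-plus-covering argument as the paper's own proof: pass the absolute value inside the expectation (the paper bounds by a $\max_{\pgame}$, which is the same move), apply the joint Lipschitz constant $\lipschitz[{\cumulregret[]}]$ of $\cumulregret[][\pgame]$, and split $\|(\hypothesis^\coveringnumber(\pgame),\otherhypothesis^\coveringnumber(\pgame,\hypothesis^\coveringnumber(\pgame)))-(\hypothesis(\pgame),\otherhypothesis(\pgame,\hypothesis(\pgame)))\|$ into the generator and discriminator pieces. You are in fact more careful than the paper on the discriminator piece: the paper simply asserts $\|\otherhypothesis^\coveringnumber(\pgame,\hypothesis^\coveringnumber(\pgame))-\otherhypothesis(\pgame,\hypothesis(\pgame))\|\leq\coveringnumber$ without comment, whereas you correctly flag that the two discriminators are evaluated at different action inputs and handle this with an intermediate point---a subtlety the paper's proof elides.
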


\begin{proof}[Proof of \Cref{lemma:bounded_cumul_regret}]
Let $\lipschitz[{\cumulregret[][\pgame]}] = \max_{(\action, \otheraction) \in \actionspace \times \actionspace} \left| \grad \cumulregret[][\pgame](\action, \otheraction) \right| = \max_{\pgame \in \pgames} \max_{(\action, \otheraction) \in \actionspace \times \actionspace}  \left|\grad  \util[\player][][\pgame](\otheraction[\player], \naction[\player]) -  \grad \util[\player][][\pgame](\action) \right|$ and let \\ $\lipschitz[{\cumulregret[]}] = \max_{\pgame \in \pgames} \max_{(\action, \otheraction) \in \actionspace \times \actionspace} \left| \grad \cumulregret[][\pgame](\action, \otheraction) \right| = \max_{\pgame \in \pgames} \max_{(\action, \otheraction) \in \actionspace \times \actionspace} \left|  \grad \util[\player][][\pgame](\otheraction[\player], \naction[\player]) -  \grad \util[\player][][\pgame](\action) \right|$.
\begin{align}
    &\left| \Ex_{\pgame \sim \distribpgames}     \left[\cumulregret[][\pgame](\hypothesis^\coveringnumber(\pgame), \otherhypothesis^\coveringnumber(\pgame, \hypothesis^\coveringnumber(\pgame))) \right] -  \Ex_{\pgame \sim \distribpgames} \left[\cumulregret[][\pgame](\hypothesis(\pgame), \otherhypothesis(\pgame, \hypothesis(\pgame)) \right]\right|\\ &=  \left| \Ex_{\pgame \sim \distribpgames}     \left[\cumulregret[][\pgame](\hypothesis^\coveringnumber(\pgame), \otherhypothesis^\coveringnumber(\pgame, \hypothesis^\coveringnumber(\pgame))) - \cumulregret[][\pgame](\hypothesis(\pgame), \otherhypothesis(\pgame, \hypothesis(\pgame))) \right]\right|\\
    &\leq \max_{\pgame \in \pgames} \left| \cumulregret[][\pgame](\hypothesis^\coveringnumber(\pgame), \otherhypothesis^\coveringnumber(\pgame, \hypothesis^\coveringnumber(\pgame))) - \cumulregret[][\pgame](\hypothesis(\pgame), \otherhypothesis(\pgame, \hypothesis(\pgame))) \right|\\
    &\leq \max_{\pgame \in \pgames} \lipschitz[{\cumulregret[][\pgame]}]\left\| (\hypothesis^\coveringnumber(\pgame), \otherhypothesis^\coveringnumber(\pgame, \hypothesis^\coveringnumber(\pgame))) - (\hypothesis(\pgame), \otherhypothesis(\pgame, \hypothesis(\pgame))) \right\|\\
    &= \lipschitz[{\cumulregret[]}]\left\| (\hypothesis^\coveringnumber(\pgame), \otherhypothesis^\coveringnumber(\pgame, \hypothesis^\coveringnumber(\pgame))) - (\hypothesis(\pgame), \otherhypothesis(\pgame, \hypothesis(\pgame))) \right\|\\
    &= \lipschitz[{\cumulregret[]}]  \left( \left\| \hypothesis^\coveringnumber(\pgame) - \hypothesis(\pgame) \right\| + \left\| \otherhypothesis^\coveringnumber(\pgame, \hypothesis^\coveringnumber(\pgame)) -\otherhypothesis(\pgame, \hypothesis(\pgame))) \right\| \right)\\
    &\leq 2 \lipschitz[{\cumulregret[]}]  \coveringnumber
\end{align}
\end{proof}

\thmsamplecomplexity*

\begin{proof}[Proof of \Cref{thm:sample_complexity}]
Let $\lipschitz[{\cumulregret[]}] = \max\limits_{\pgame \in \pgames} \max\limits_{(\action, \otheraction) \in \actionspace \times \actionspace} \left| \grad \cumulregret[][\pgame](\action, \otheraction) \right| \leq \max\limits_{\pgame \in \pgames} \max\limits_{(\action, \otheraction) \in \actionspace \times \actionspace} \grad \left|  \util[\player][][\pgame](\otheraction[\player], \naction[\player]) -  \util[\player][][\pgame](\action) \right|$.
Consider $\coveringnumber = \frac{\varepsilon}{6 \lipschitz[{\cumulregret[]}]}$, let $\hypotheses^\coveringnumber$ and $\otherhypotheses^\coveringnumber$ be the minimum $\coveringnumber$-covering sets of $\hypotheses$ and $\otherhypotheses$ respectively, and let $\hypothesis^\coveringnumber$ and $\otherhypothesis^\coveringnumber$ be the closest hypotheses to $\hypothesis$ and $\otherhypothesis$  in the sets $\hypotheses^\coveringnumber$ and $\otherhypotheses^\coveringnumber$ respectively, i.e., $\hypothesis^\coveringnumber \in \argmin_{\hypothesis^\prime \in \hypotheses^\coveringnumber} \left\| \hypothesis - \hypothesis^\prime \right\|$ and $\otherhypothesis^\coveringnumber \in \argmin_{\otherhypothesis^\prime \in \otherhypotheses^\coveringnumber} \left\| \otherhypothesis - \otherhypothesis^\prime \right\|$.
\begin{align}
    &\Pr_{\samplepgames \sim \distribpgames^\numsamples} \left[ \exists (\hypothesis, \otherhypothesis) \in \hypotheses \times \otherhypotheses, \left| \Ex_{\pgame \sim \mathrm{unif}(\samplepgames)} \left[\cumulregret[][\pgame](\hypothesis(\pgame), \otherhypothesis(\pgame, \hypothesis(\pgame))) \right] -  \Ex_{\pgame \sim \distribpgames} \left[\cumulregret[][\pgame](\hypothesis(\pgame), \otherhypothesis(\pgame, \hypothesis(\pgame))) \right] \right| \geq \varepsilon  \right]\\
    &\leq  \Pr_{\samplepgames \sim \distribpgames^\numsamples} \left[  \exists (\hypothesis, \otherhypothesis) \in \hypotheses \times \otherhypotheses, \left| \Ex_{\pgame \sim \mathrm{unif}(\samplepgames)} \left[\cumulregret[][\pgame](\hypothesis(\pgame), \otherhypothesis(\pgame, \hypothesis(\pgame))) \right] -  \Ex_{\pgame \sim \mathrm{unif}(\samplepgames)} \left[\cumulregret[][\pgame](\hypothesis^\coveringnumber(\pgame), \otherhypothesis^\coveringnumber(\pgame, \hypothesis^\coveringnumber(\pgame)) \right] \right| \right. \notag \\
    & \quad + \left. \left| \Ex_{\pgame \sim \mathrm{unif}(\samplepgames)} \left[\cumulregret[][\pgame](\hypothesis^\coveringnumber(\pgame), \otherhypothesis^\coveringnumber(\pgame, \hypothesis^\coveringnumber(\pgame))) \right] -  \Ex_{\pgame \sim \distribpgames} \left[\cumulregret[][\pgame](\hypothesis^\coveringnumber(\pgame), \otherhypothesis^\coveringnumber(\pgame, \hypothesis^\coveringnumber(\pgame))) \right] \right| + \right. \notag \\ 
    & \quad \left. \left| \Ex_{\pgame \sim \distribpgames} \left[\cumulregret[][\pgame](\hypothesis^\coveringnumber(\pgame), \otherhypothesis^\coveringnumber(\pgame, \hypothesis^\coveringnumber(\pgame))) \right] -  \Ex_{\pgame \sim \distribpgames} \left[\cumulregret[][\pgame](\hypothesis(\pgame), \otherhypothesis(\pgame, \hypothesis(\pgame)) \right]\right|   \geq \varepsilon  \right]\\
    &\leq  \Pr_{\samplepgames \sim \distribpgames^\numsamples} \left[  \exists (\hypothesis, \otherhypothesis) \in \hypotheses \times \otherhypotheses, 2\lipschitz[{\cumulregret[][]}] \coveringnumber + \left| \Ex_{\pgame \sim \mathrm{unif}(\samplepgames)} \left[\cumulregret[][\pgame](\hypothesis^\coveringnumber(\pgame), \otherhypothesis^\coveringnumber(\pgame, \hypothesis^\coveringnumber(\pgame))) \right] -  \Ex_{\pgame \sim \distribpgames} \left[\cumulregret[][\pgame](\hypothesis^\coveringnumber(\pgame),  \otherhypothesis^\coveringnumber(\pgame, \hypothesis^\coveringnumber(\pgame))) \right] \right| + 2\lipschitz[{\cumulregret[][]}] \coveringnumber \geq \varepsilon  \right]\\
    &= \Pr_{\samplepgames \sim \distribpgames^\numsamples} \left[  \exists (\hypothesis^\coveringnumber, \otherhypothesis^\coveringnumber) \in \hypotheses^\coveringnumber \times \otherhypotheses^\coveringnumber, \left| \Ex_{\pgame \sim \mathrm{unif}(\samplepgames)} \left[\cumulregret[][\pgame](\hypothesis^\coveringnumber(\pgame), \otherhypothesis^\coveringnumber(\pgame, \hypothesis^\coveringnumber(\pgame))) \right] -  \Ex_{\pgame \sim \distribpgames} \left[\cumulregret[][\pgame](\hypothesis^\coveringnumber(\pgame), \otherhypothesis^\coveringnumber(\pgame, \hypothesis^\coveringnumber(\pgame))) \right] \right| \geq \frac{\varepsilon}{3}  \right]
\end{align}
where the penultimate line follow from the Lipschitz continuity of the Nash approximation error, and the final line from $\coveringnumber = \frac{\varepsilon}{6}$.

Then, using a union bound we get:
\begin{align}
    & \Pr_{\samplepgames \sim \distribpgames^\numsamples} \left[ \forall (\hypothesis^\coveringnumber, \otherhypothesis^\coveringnumber) \in \hypotheses^\coveringnumber \times \otherhypotheses^\coveringnumber,, \left| \Ex_{\pgame \sim \mathrm{unif}(\samplepgames)} \left[\cumulregret[][\pgame](\hypothesis^\coveringnumber(\pgame), \otherhypothesis^\coveringnumber(\pgame, \hypothesis^\coveringnumber(\pgame))) \right] -  \Ex_{\pgame \sim \distribpgames} \left[\cumulregret[][\pgame](\hypothesis^\coveringnumber(\pgame), \otherhypothesis^\coveringnumber(\pgame, \hypothesis^\coveringnumber(\pgame))) \right] \right| \leq  \frac{\varepsilon}{3}  \right]\\
    &\leq \coveringcard(\hypotheses, \nicefrac{\varepsilon}{6}) \coveringcard(\otherhypotheses, \nicefrac{\varepsilon}{6}) \Pr_{\samplepgames \sim \distribpgames^\numsamples} \left[ \left| \Ex_{\pgame \sim \mathrm{unif}(\samplepgames)} \left[\cumulregret[][\pgame](\hypothesis^\coveringnumber(\pgame), \otherhypothesis^\coveringnumber(\pgame, \hypothesis^\coveringnumber(\pgame))) \right] -  \Ex_{\pgame \sim \distribpgames} \left[\cumulregret[][\pgame](\hypothesis^\coveringnumber(\pgame), \otherhypothesis^\coveringnumber(\pgame, \hypothesis^\coveringnumber(\pgame))) \right] \right| \leq \frac{\varepsilon}{3}  \right]\\
    &\leq 2 \coveringcard(\hypotheses, \nicefrac{\varepsilon}{6}) \coveringcard(\otherhypotheses, \nicefrac{\varepsilon}{6}) \exp\left\{ - 2\numsamples\left(\frac{\varepsilon}{3}\right)^2 \right\}
\end{align}
Re-organizing expressions, in order to get that: $$\Pr_{\samplepgames \sim \distribpgames^\numsamples} \left[ \exists (\hypothesis, \otherhypothesis) \in \hypotheses \times \hypotheses, \left| \Ex_{\pgame \sim \mathrm{unif}(\samplepgames)} \left[\cumulregret[][\pgame](\hypothesis(\pgame), \otherhypothesis(\pgame, \hypothesis(\pgame))) \right] -  \Ex_{\pgame \sim \distribpgames} \left[\cumulregret[][\pgame](\hypothesis(\pgame), \otherhypothesis(\pgame, \hypothesis(\pgame))) \right] \right| \geq \varepsilon  \right] = \delta$$ we obtain that the sample size $\numsamples$ should be set as follows:
\begin{align}
    \numsamples \geq \frac{9 }{\varepsilon^2} \left[ \log\left(\frac{\coveringcard(\hypotheses, \nicefrac{\varepsilon}{6})}{\delta}\right) + \log\left(\frac{\coveringcard(\otherhypotheses, \nicefrac{\varepsilon}{6})}{\delta}\right) \right]
\end{align}
\end{proof}

\section{Experiments}\label{sec_app:experiments}

We run three sets of experiments
in which we train \nees{} in three different pseudo-game settings.
All  experiments are run with 5 randomly selected different seeds ($\{5, 10, 25, 30, 42\}$), with hyperparameter selection being done over 
all 5 seeds. Unless otherwise mentioned, all results correspond to an average over these 5 seeds, with confidence intervals reported across these seeds as appropriate. 
In all of our experiments, we adopt the update rule in ADAM for the gradient step, making
%use our algorithm to train our functions, but replace the gradient step with the update rule given for ADAM.
%To this end, we 
use of the ADAM implementation in the OPTAX library. 
We use JaxOPT's projected gradient method to compute best-responses and thus the  exploitability 
of an action profile
when a closed form is not available for the best-response.
For all of the networks used in our experiments, if BatchNorm is used, it is applied before the activation layer. We describe whether if BatchNorm is used in the architecture of each network individually in the following sections. 

\paragraph{Computational Resources.}
Our normal-form game experiments were run on GPUs
while our other experiments were run on CPUs. 
% \dcp{is this GPU/CPU 
% backwards?}\deni{No, it is correct! GAMUT experiments were much larger so faster to run on GPU than the others.}
% % The experiments took three days to run in total. \dcp{3 days not meaningful if we don't know the number of machines}
% Our data analysis was done 
% with a CPU equipped machine with 128GB of RAM.

\paragraph{Programming Languages, Packages, and Licensing.}
We ran our experiments in Python 3.7 \cite{van1995python}, using NumPy \cite{numpy}, Jax \cite{jax2018github}, OPTAX \cite{jax2018github}, Haiku \cite{haiku2020github},  JaxOPT \cite{jaxopt_implicit_diff}, and pycdd \cite{pycddlib2015}.
All figures were graphed using Matplotlib \cite{matplotlib} and Seaborn \cite{Waskom2021seaborn}. 

Numpy and Seaborn are distributed under a liberal BSD license. Matplotlib only uses BSD compatible code, and its license is based on the PSF license. CVXPY is licensed under an APACHE license. Jax and Haiku are licensed under the Apache 2.0 License. Pycdd is distributed under a GNU license.

\subsection{Normal-form Games} 

Our first set of experiments aims to explore the impact of the accuracy of the discriminator on the performance of the generator. To this end, we consider normal-form games in which there exists a closed form solution for the discriminator. We observe that with an accurate enough discriminator, \nees{} achieves a performance similar to the neural architecture proposed by  \citet{duan2021pac} when using the same equilibrium mapping architecture for the generator.  

In our experiments, we replicate the setup proposed by \citet{duan2021pac}, and we try to solve five
games from the GAMUT library, namely, Traveller's Dilemma, Bertrand Oligopoly, Grab the Dollar, War of Attrition, and Majority Voting. We give a description of each game, as presented by \citet{duan2021pac}:
\begin{itemize}
    \item \mydef{Traveler's dilemma}: Each player simultaneously requests an amount of money and receives the lowest of the requests submitted by all players.
    \item \mydef{Grab the dollar}: A price is up for grabs, and both players have to decide when to grab the price. The action of each player is the chosen times. If both players grab it simultaneously, they will rip the price and receive a low payoff. If one chooses a time earlier than the other, they will receive the high payoff, and the opposing player will receive a payoff between the high and the low.
    \item \mydef{War of attrition}: In this game, both players compete for a single object, and each chooses a time to concede the object to the other player. If both concede at the same time, they share the object. Each player has a valuation of the object, and each player’s utility is decremented at every time step.
    \item \mydef{Bertrand oligopoly}: All players in this game are producing the same item and are expected to set a price at which to sell the item. The player with the lowest price gets all the demand for the item and produces enough items to meet the demand to obtain the corresponding payoff.
    \item \mydef{Majority Voting}: This is an $n$-player symmetric game. All players vote for one of the other players. Players’ utilities for each candidate being declared the winner are arbitrary. If there is a tie, the winner is the candidate with the lowest number. There may be multiple Nash equilibria in this game.
\end{itemize}

\paragraph{Game Generation.}

We use the {\em GAMUT library} \cite{nudelman2004run}, which is a normal-form game generation library designated for testing game-theoretic algorithms, to generate a data set of games. 
Following \citet{duan2021pac}, the games were generated so that payoffs were normalized between $[0,1]$, with all other parameters drawn randomly. We generate 2000 games with 2 players and 300 actions for both players and for each game category, setting aside 200 for validation and 100 for testing. 

\paragraph{Network Architecture.}

In all our experiments, we use a generator for \nees{} that has the same architecture as the equilibrium solver proposed by \citet{duan2021pac}. Namely, we use a neural network with 4 hidden layers each with 1024 nodes and ReLU activations. The final layers of the generator have the same dimension as an action profile and each action in the profile is passed through a player-wise softmax activation.
We augment the entire network with BatchNorm layers with non-trainable parameters after each activation layer. The total number of parameters for this generator is 20,855.

To explore how the accuracy of \nees{}'s generator degrades as one uses more and more approximate discriminators, we consider 4 types of  discriminators: a true discriminator that takes as input every player's expected payoffs for all actions and outputs the action with the highest payoff (this discriminator recovers exactly the \citet{duan2021pac} network architecture);
a softmax discriminator with a scaling parameter of $100$, i.e., $\mathrm{softmax}(\u) = \frac{e^{100u_i}}{\sum_i e^{100u_i}}$, that takes as input the expected payoffs of each player at the given equilibrium actions predicted by the generator and outputs its softmax;
a less precise softmax discriminator with a scaling parameter of $10$;
and,  a neural network with one linear layer with 1024 nodes and a softmax activation layer with scaling 1 (the total number of parameters for this discriminator is 7,115).

\paragraph{Training and Hyperparameters.}

We run our algorithm with no inner loop iterations and 10,000 outer loop iterations for the non-neural network discriminators, since they require no training. We adopt for the gradient step in our algorithm 
the ADAM algorithm. We use a learning rate of 0.001 for the optimizer step on each of 
the generator and discriminator  (when appropriate), and use the default settings for the other hyperparameters of ADAM as given in the OPTAX implementation. We process the training data in batches of size 50. 

\paragraph{Experimental Results.} 

In \Cref{fig:gamut_train} (train) and \Cref{fig:gamut_test} (test), we observe that as the quality of the discriminator becomes more approximate the quality of the generator degrades significantly in certain games, underscoring the importance of designing optimal discriminators for \nees{} in certain settings.
Perhaps more interestingly, we observe that  \nees{} with a linear layer discriminator has a hard time in the Bertrand oligopoly and Traveller's Dillemma games. The reason for this is that the discriminator has a hard time learning a pure strategy best-response action, and this seems  crucial for our training algorithm to find an optimal generator in these two games.
This is further justified by \nees{}'s near perfect performance when coupled with 
the true best-response discriminator. In contrast, for other games the neural network's performance suggests that an approximate best-response is enough for our training algorithm to find the optimal discriminator. Future work can investigate the relationship between game classes, and the precision-level w.r.t.~the discriminator that is required for our training algorithms to perform well. We note that \mydef{normalized exploitability} is given as the exploitability divided by the average exploitability over the action space.
\begin{figure*}[t]
  \centering
  \begin{subfigure}{\textwidth}
    \includegraphics[width=\linewidth,trim={0 0 8.5cm 1cm},clip]{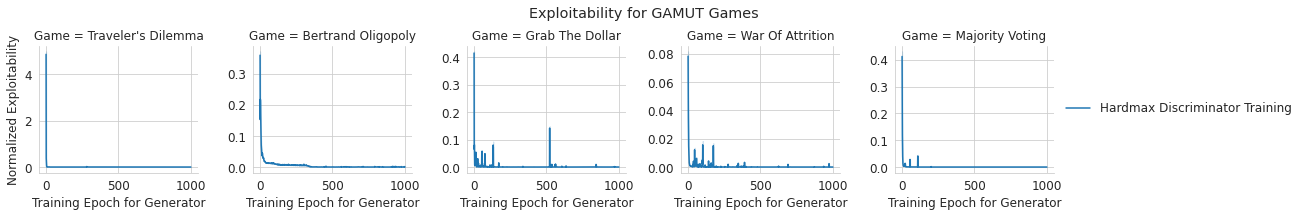}
    \caption{\nees{} with discriminator that outputs the true best-response ($\argmax$ of the expected payoffs for each player).}
  \end{subfigure}
  \hspace{0.5cm}
  \begin{subfigure}{\textwidth}     
    \includegraphics[width=\linewidth,trim={0 0 6.2cm 1cm},clip]{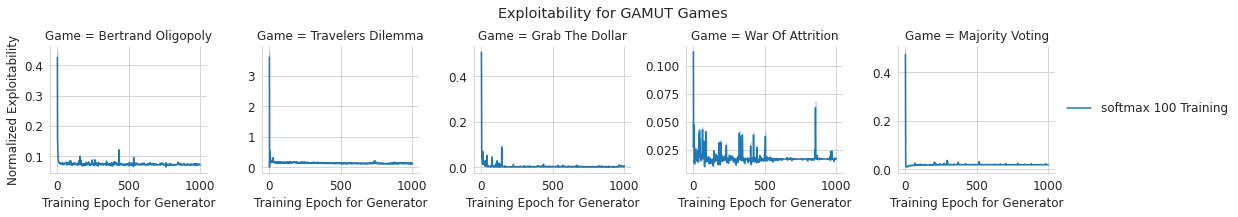}
    \caption{\nees{} with discriminator that outputs the softmax of the expected payoffs for each player (scaling parameter 100).}
  \end{subfigure}
  \begin{subfigure}{\textwidth}
    \includegraphics[width=\linewidth,trim={0 0 8.8cm 1cm},clip]{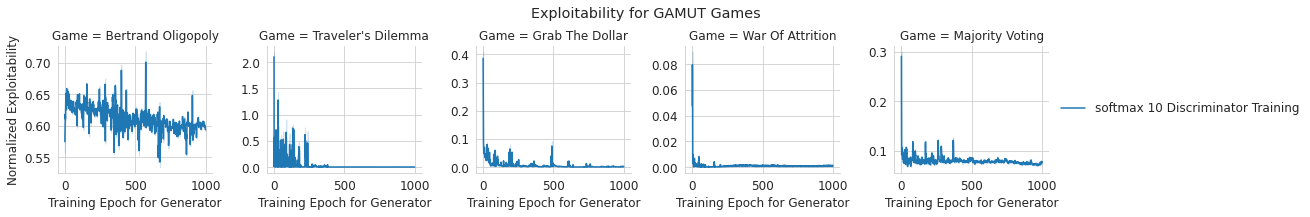}
    \caption{\nees{} with discriminator that outputs the softmax of the expected payoffs for each player (scaling parameter 10).}
  \end{subfigure}
%   \hspace{0.5cm}

%   \begin{subfigure}{0.3\textwidth}
%     \includegraphics[width=\linewidth]{figures/Arrow-Debreu/CES_GC_exploit_CI1.png}
%     \caption{Caption for image 4}
%   \end{subfigure}
%   \hspace{0.5cm}
%   \begin{subfigure}{0.3\textwidth}
%     \includegraphics[width=\linewidth]{figures/Arrow-Debreu/CES_GS_exploit_CI1.png}
%     \caption{Caption for image 5}
%   \end{subfigure}
%   \hspace{0.5cm}
  \begin{subfigure}{\textwidth}
    \includegraphics[width=\linewidth,trim={0 0 11.2cm 1cm},clip]{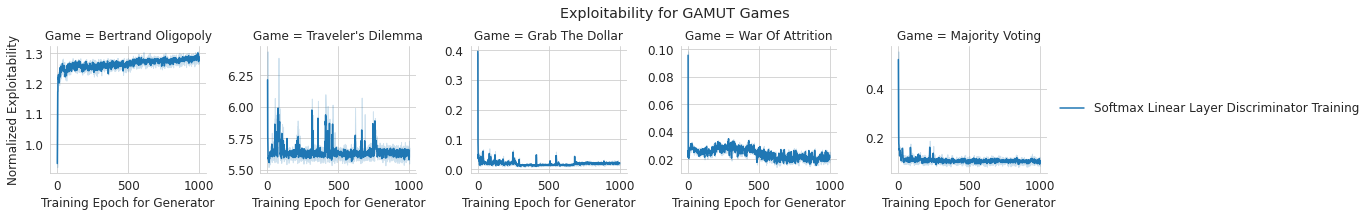}
    \caption{\nees{} with a neural network discriminator.}
  \end{subfigure}
  \caption{Training Exploitability of \nees{} on five classes of GAMUT games. We observe that the performance of \nees{} degrades significantly in Bertrand oligopoly and traveller's dilemma when the discriminator is not precise enough.\label{fig:gamut_train}}
  %. For Leontief, exploitability descent is not shown, as it performed much worse than either methods. }
\end{figure*}

\begin{figure*}[!ht]
  \centering
  \begin{subfigure}{\textwidth}
    \includegraphics[width=\linewidth,trim={0 0 8.5cm 1cm},clip]{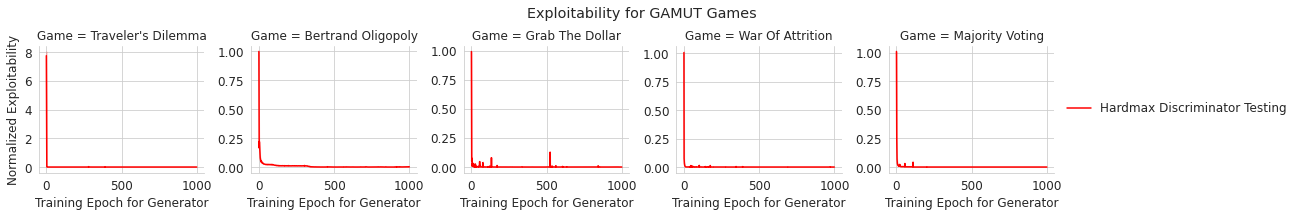}
    \caption{\nees{} with discriminator that outputs the true best-response ($\argmax$ of the expected payoffs for each player).}
  \end{subfigure}
  \hspace{0.5cm}
  \begin{subfigure}{\textwidth}     
\includegraphics[width=\linewidth,trim={0 0 6cm 1cm},clip]{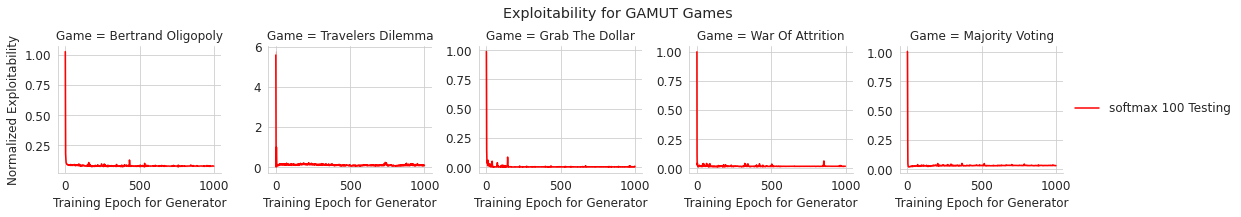}
    \caption{\nees{} with discriminator that outputs the softmax of the expected payoffs for each player (scaling parameter 100).}
  \end{subfigure}
  \begin{subfigure}{\textwidth}
    \includegraphics[width=\linewidth,trim={0 0 8.6cm 1cm},clip]{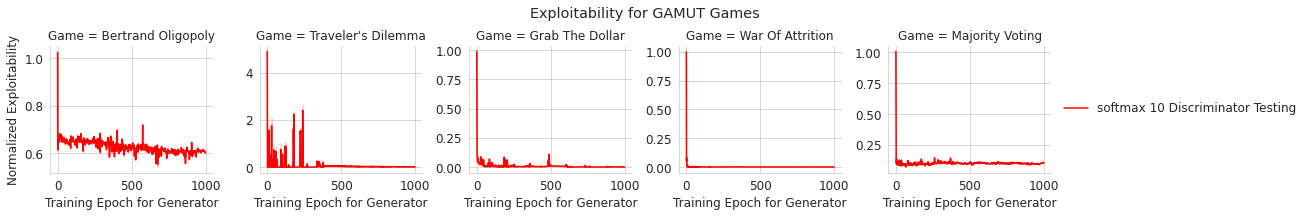}
    \caption{\nees{} with discriminator that outputs the softmax of the expected payoffs for each player (scaling parameter 10).}
  \end{subfigure}
%   \hspace{0.5cm}

%   \begin{subfigure}{0.3\textwidth}
%     \includegraphics[width=\linewidth]{figures/Arrow-Debreu/CES_GC_exploit_CI1.png}
%     \caption{Caption for image 4}
%   \end{subfigure}
%   \hspace{0.5cm}
%   \begin{subfigure}{0.3\textwidth}
%     \includegraphics[width=\linewidth]{figures/Arrow-Debreu/CES_GS_exploit_CI1.png}
%     \caption{Caption for image 5}
%   \end{subfigure}
%   \hspace{0.5cm}
  \begin{subfigure}{\textwidth}
    \includegraphics[width=\linewidth,trim={0 0 11cm 1cm},clip]{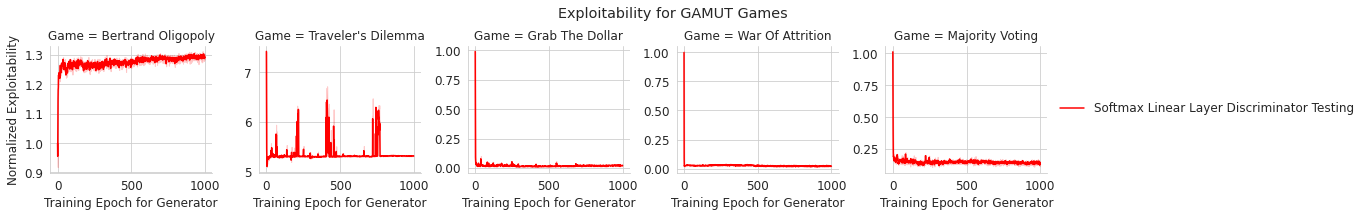}
    \caption{\nees{} with a neural network discriminator.}
  \end{subfigure}
  \caption{Testing Exploitability of \nees{} on five classes of GAMUT games. We observe that the performance of \nees{} degrades significantly in Bertrand oligopoly and traveller's dilemma
  when the discriminator is not precise enough.\label{fig:gamut_test}}
  %. For Leontief, exploitability descent is not shown, as it performed much worse than either methods. }
\end{figure*}

\newpage

\subsection{Arrow-Debreu Exchange Economies}\label{sec_ap:arrow_debreu}

\paragraph{Additional Preliminaries.} A \mydef{Competitive equilibrium (CE)} is a tuple which consists of allocations $\allocation \in \R_+^{\numgoods \times \numbuyers}$, and prices $\price \in \R_+^\numgoods$ such that 1.~ all traders $\buyer \in \buyers$ maximize utility constrained by their budget: $\allocation[\buyer]^* \in \argmax_{\allocation[\buyer] \in \consumptions[\buyer]: \allocation[\buyer] \cdot \price \leq \consendow[\buyer] \cdot \price} \util[\buyer](\allocation[\buyer])$, 2.~the markets clear and goods that are not demanded are priced at 0, i.e., $ \sum_{\buyer = 1}^{\numagents} \allocation[\buyer]^*  \leq \sum_{\buyer = 1}^{\numagents} \consendow[\buyer]$ and $\price^* \cdot \left( \sum_{\buyer = 1}^{\numagents} \allocation[\buyer]^*  - \sum_{\buyer = 1}^{\numagents} \consendow[\buyer] \right) = 0$.

A \mydef{Scarf economy}, denoted $(\consendow, \valuation)$, is a Leontief exchange economy with 3 buyers and 3 goods, where the valuation $\valuation$ and endowment $\consendow$ matrices are given as follows:
\begin{align}
    &\consendow = \begin{pmatrix}
        1 & 0 & 0 \\
        0 & 1 & 0 \\
        0 & 0 & 1 \\
    \end{pmatrix} 
    &\valuation = \begin{pmatrix}
        0 & 1 & 0 \\
        0 & 0 & 1 \\
        1 & 0 & 0 \\
    \end{pmatrix}  
\end{align}

\paragraph{Related Work.}

Exchange economies can be solved in polynomial-time via t\^atonnement for CES utilties with $\rho \in [0, 1)$ \cite{bei2015tatonnement}. However,  t\^atonnement is not guaranteed to converge beyond these domains. There exists a convex programs to compute CE in linear exchange markets in polynomial time \cite{devanur2016rational}. The computation of CE is PPAD-complete for Leontief \cite{codenotti2006leontief}, piecewise-linear concave, and additively seperable concave \cite{chen2009settling}, and exchange markets \cite{vazirani2011market, chen2009spending}.
The complexity of CES markets for $\rho \in (\infty, 0)$ is unknown
and remains an open question.

\paragraph{Experimental Setup.}

We consider experiments on six different exchange economies, each 
with 3 buyers and 5 goods,
and with each economy defined by the utility functions of the players: 1) linear, 2) Cobb-Douglas, 3) Leontief, 4) gross substitutes CES where for all buyers $
\buyer \in \buyers$, $\rho_\buyer \in [0.5, 1]$, 5) gross complements CES where for all buyers $ \buyer \in \buyers$, $\rho_\buyer \in [-1.25, -0.75]$, and 6) mixed CES markets where for all buyers $ \buyer \in \buyers$, $\rho_\buyer \in [-1.25, -0.75] \cap  [0.5, 1]$.\footnote{We reduce the range of the $\rho$ parameter to avoid numerical instability in the computation of utilities.} In addition to these settings, we also consider a Leontief economy with 3 buyers and 3 goods with the goal of solving Scarf economies.

\paragraph{Baselines.} 

We benchmark our algorithm to the most well-known algorithm to solve Arrow-Debreu markets, \mydef{t\^atonnement}, which is an auction-like algorithm that is guaranteed to converge for $\rho \in (1, 0]$ \cite{bei2015tatonnement}, and thus including the Cobb-Douglas cases, and 
the \mydef{exploitability descent} algorithm\footnote{This algorithm simply corresponds to running gradient descent on the exploitability. More information on this algorithm can be found in \citeauthor{goktas2022exploitability} \cite{goktas2022exploitability}.}. %
T\^atonnement is defined by the following sequence of prices:
\begin{align}
    \price[][\iter+1] = \price[][\iter] + \eta \sqrt{t} \left( \sum_{\buyer \in \buyers} \allocation[\buyer][][\iter] - \sum_{\buyer \in \buyers} \consendow[\buyer]^{(\iter)} \right) && t = 0, 1, \hdots\\
    \allocation[\buyer][][\iter] \in \argmax_{\allocation[\buyer]: \allocation[\buyer] \cdot \price[][\iter] \leq \consendow[\buyer] \cdot \price[][\iter]} \util[\buyer](\allocation[\buyer]) && \buyer \in \buyers, t = 0, 1, \hdots, 
\end{align}
\noindent while exploitability descent is defined by the following sequence of prices:
\begin{align}
    (\price[][\iter+1], \allocation[][][\iter+1])  = (\price[][\iter], \allocation[][][\iter]) - \eta \sqrt{t} \left(\grad \exploit(\price[][\iter], \allocation[][][\iter]) \right) && t = 0, 1, \hdots 
\end{align}
\noindent where $\exploit(\price[][\iter], \allocation[][][\iter])$ is the exploitability associated with the exchange economy pseudo-game.
For both of these baselines, we use a decreasing learning rate of $\eta \sqrt{t}$ as a function of the number of iterations $t$,
and run the baselines until convergences is observed (we observe that experiments all converge in less than $\leq 200$ iterations). 
We note that the use of a learning rate schedule is necessary 
for these algorithms to converge \cite{goktas2022exploitability}.
We run an extensive grid search for $\eta$ over the set \{1.0, 0.5, 0.1, 0.05, 0.01, 0.005, 0.001, 0.0005, 0.0001]\}, selecting $\eta$ to minimize exploitability over the validation set.
We then evaluate these baselines on the test set  with the selected hyperparameters.

\paragraph{Economy Generation.}

For all experiments, we generate 5,000 exchange economy instances,
and set aside 500 markets for each of validation and test.
To generate a market instance, for all buyers $\buyer \in \buyers$ and goods $\good \in \goods$, we sample  each endowment $\consendow[\buyer][\good] \sim \mathrm{Unif}[10^{-9}, 1]$, valuations $\valuation[\buyer][\good] \sim \mathrm{Unif}[10^{-9}, 1]$, and when appropriate
we sample the substitution parameters $\bm{\rho}$ from the  ranges mentioned above. 
A competitive equilibrium is guaranteed to exist 
for exchange markets in all the exchange markets we sample
by Arrow-Debreu's first existence theorem \cite{arrow-debreu} since buyers are endowed with a non-zero amount of each good.

\paragraph{Generator Architecture.}

\begin{figure}[!ht]
    \centering
    \includegraphics[width=\linewidth]{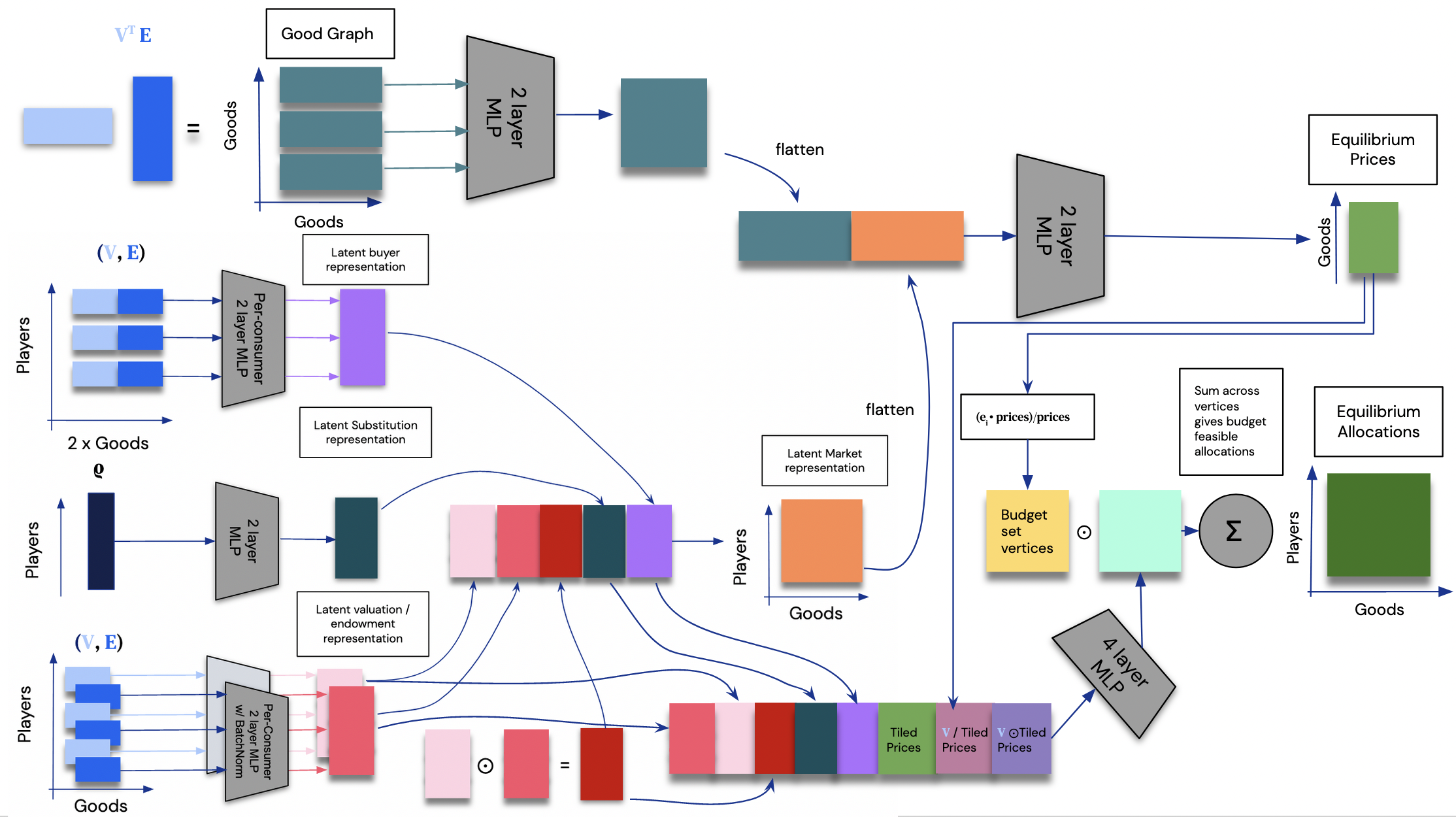}
    \caption{Architecture of the generator of \nees{} for exchange economies.}
    \label{fig:ad_gen}
\end{figure}

We summarize our generator's architecture in \Cref{fig:ad_gen}. As a reminder, in this setting, the generator takes as input an exchange economy $(\consendow, \valuation, \bm{\rho})$\footnote{If the economy is not a CES economy $\bm \rho$ is drawn uniformly at random from $[0.25, 0.75]^\numbuyers$.} and outputs equilibrium prices $\hat{\price}$ and allocations $\hat{\allocation}$. We use the same generator architecture in each of our experiments. 
This generator takes as input the economy matrix $(\consendow, \valuation)$, and passes it through two fully connected layers with ReLU activations and 20 and 10 nodes respectively, to obtain a
\mydef{latent buyer representation}.
The valuations $\valuation$, and endowments $\consendow$ are also separately fed through a network with the same architecture. The network for valuations are augmented with a BatchNorm layer with trainable parameters. This gives us a \mydef{latent valuation representation} and a \mydef{latent  endowment representation}. Each latent representation as well as the element-wise product of  the latent valuations
and latent endowments 
are then concatenated and fed through two fully connected layers with 20 and $\numgoods$ (number of goods)
hidden nodes, respectively, followed by ReLU
activations at each layer. This gives us a 
\mydef{latent market representation}. We then pass the matrix $\valuation^T\consendow$, which we call the {\em good graph}, through two fully connected layers, each
with BatchNorm with trainable parameters and ReLU activation.
These layers have 20 and 10 nodes respectively. We refer to the output of this network as the 
{\em latent good graph}.
We then concatenate the flattened,  latent good graph and latent market representations and feed them through three fully connected layers with 40, 20, and $\numgoods$ (number of goods) nodes, respectively. The outputs of the first two layers are passed through ReLU activations, while the last layer is passed through a softmax. The output of this final layer is the generator's predicted prices $\hat{\price}$. 

Given these prices, we then build an {\em allocation coefficient matrix} of dimensions $\numbuyers \times \numgoods$, where the $(i,j)th$ entry is given by $\frac{\consendow[\buyer] \cdot \hat{\price}}{\hat{\price[\good]}}$. We then calculate the budgets $\consendow \hat{\price}$ of the buyers at prices $\hat{\price}$, and feed them through two fully connected layers, with 30 and 20 nodes respectively and ReLU activations, to obtain a {\em latent budget representation}.
Define the {\em tiled prices} as $(\hat{\price}, \hdots, \hat{\price})^T \in \R^{\numbuyers \times \numgoods}$, i.e., a matrix whose rows consists of the vector $\hat{\price}$ repeated $\numbuyers$ times so as to obtain ``$\numbuyers$ tiles of prices''.
We concatenate the latent representation of buyers, the tiled prices, the latent endowment representation, the valuations element-wise divided by the tiled prices, the valuations element-wise multiplied by the tiled prices, and the latent budget on the last dimension, i.e., we append each matrix horizontally so as to preserve the number of rows $\numbuyers$. 
We pass the obtained matrix through 3 fully connected layers, with 100, 50, and 20 nodes respectively and ReLU 
activations. Finally, the output of this network is passed through a  fully connected layer with $\numgoods$
nodes and a softmax activation, leaving us with a matrix of dimension $\numplayers \times \numgoods$ whose rows sum up to 1. 
We multiply this matrix element-wise with the allocation coefficient matrix,
which gives us the allocation $\hat{\allocation}$ for the generator. Notice that $\hat{\allocation}$ is budget feasible at price $\hat{\price}$. The total number of parameters of this generator is 20,855.

\paragraph{Discriminator Architectures.}

\begin{figure}[!ht]
    \centering
    \includegraphics[width=\linewidth]{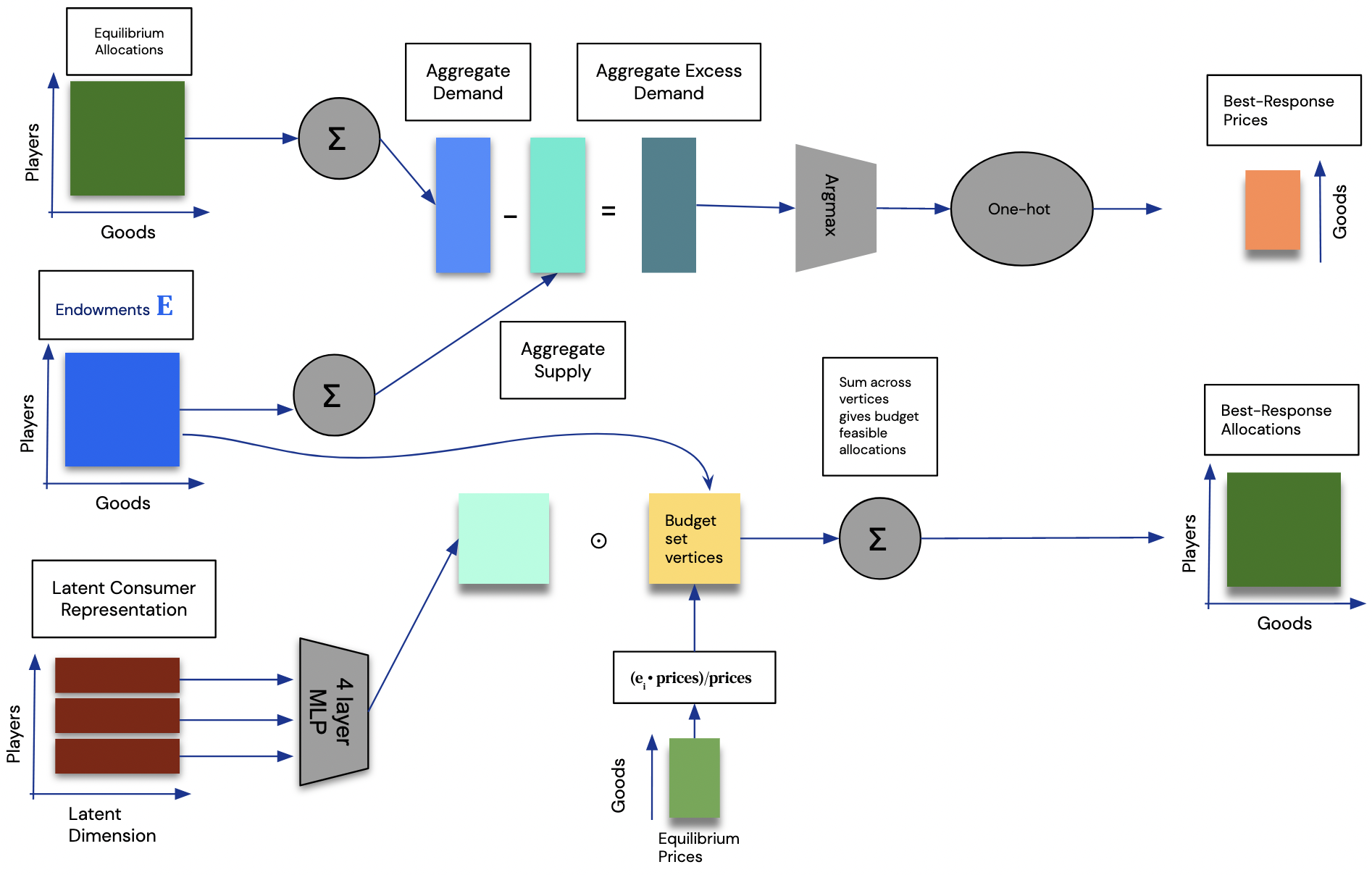}
    \caption{Architecture of the generator of \nees{} for exchange economies. The latent consumer representation associated with each exchange economy is described in the description of the discriminator.}
    \label{fig:ad_discrim}
\end{figure}

We summarize the architecture of our network in \Cref{fig:ad_discrim}. As a reminder, in this setting, the generator takes as input an exchange economy $(\consendow, \valuation, \bm{\rho})$ as well as an equilibrium $(\hat{\price}, \hat{\allocation})$ and outputs best-responses $(\price^*, \allocation^*)$.
We build different, modular discriminator architectures for
each of the linear, Cobb-Douglas, Leontief, and CES markets. 
These networks take as input the market matrix $(\consendow, \valuation, \bm{\rho})$ and the equilibrium $(\hat{\allocation}, \hat{\price})$ predicted by the generator.
For all four discriminators, the discriminator outputs price $\price^*$ such that $\price[\good]^* = 1$ if $\good \in \argmax_{\good \in \goods} \sum_{\buyer \in \buyers}\left( \hat{\allocation[ ][ ]}_{\buyer\good} - \consendow[\buyer][\good]\right)$ and $\price[\good]^* = 0$ otherwise. In regard to the allocations, we build a modular allocation network, which takes as input a latent representation of each consumer
as a matrix of dimension $\numbuyers \times p$, and outputs
an allocation. It is in this latent representation
that the discriminators for each economy differ. We describe the latent representation associated with different exchange economies below. 
We first build an \mydef{allocation coefficient matrix}, where the $(i,j)th$ entry is given by $\frac{\consendow[\buyer] \cdot \hat{\price}}{\hat{\price[\good]}}$. We then pass the matrix of latent consumer representations through three fully connected layers with  100, 50, 20 nodes respectively and ReLU activations.   We take this output and pass it through a final fully connected layer with $\numgoods$ (num goods)
nodes and softmax activations to obtain a matrix of dimension $\numplayers \times \numgoods$ whose rows sum up to 1. We multiply this matrix element-wise with the allocation coefficient matrix,
which gives us {\em allocations} $\allocation^*$ that
are budget feasible at prices $\hat{\price}$.

With the main architecture out of the way, we can now present the 
different latent consumer representations that we
use. For the linear and Cobb-Douglas markets,
the latent representation is simply the matrix that
is given by the valuations matrix $\valuation$ whose rows are divided by $\hat{\price}$. 
For Leontief, the latent representation is the matrix whose $(i,j)th$ entries are given as $\frac{\valuation[\buyer][\good]}{\sum_{\good \in \goods} \price[\good] \valuation[\buyer][\good]}$. Finally, for CES, the latent representation is given as follows: First we obtain latent representations of
each of $\rho$, $\hat{\price}$, $(\consendow[\buyer] \bigodot \price)_{\buyer \in \buyers}^T$, and $\valuation$ by passing them through separate but identical, two fully connected BatchNorm layers with ReLU activations, and
20 and 10 nodes respectively. 
The concatenation of all of these latent representations on the last dimension so as to obtain a matrix with $\numbuyers$ rows and 40 columns, where each row is the concatenation of the latent representations for that consumer gives us the \mydef{latent consumer representation}.

The number of parameters for the discriminator for each of the linear and Cobb-Douglas markets is $6,775$, for Leontief is $7,115$,
and for CES is $14,125$. 

\paragraph{Training Hyperparameters.}

We run our algorithm with an initial warm-up of 10,000 iterations for the discriminator. This warm-up procedure follows exactly the inner loop of \Cref{alg:cumul_regret} but instead uses randomly sampled economies, and randomly sampled action profiles. After the warmup, we use only one step of inner loop iteration for running \Cref{alg:cumul_regret} and run the outer loop for 10,000 iterations. Together with the warmup, the small number of inner loop iterations allow us to significantly speed up the training process. The gradient step is provided by  the ADAM algorithm. For the discriminator, we use the same learning rate for the warm-up and regular training. The learning rates used for ADAM in different markets can be found in \Cref{table:ee_learning_rates}. For all other hyperparameters, we use the default settings of ADAM as given in the OPTAX implementation. 
We process the training data in batches of size 200. We found the learning rates for our algorithm by performing grid search on the validation set for all economies. For the Scarf economy the grid search values were sampled from a standard lognormal distribution.
\begin{table}[t]
\centering
\begin{tabular}{c|c|c|}
 Economy Type &  Generator &  Discriminator  \\\hline
 Linear & 0.0001 & 0.001   \\\hline
 Cobb-Douglas & 0.0001 & 0.00001\\\hline
 Leontief & 0.00001 &  0.01  \\\hline
 GS CES &  0.00001 &  0.0001  \\\hline
 GC CES & 0.0001 &  0.0001 \\\hline
 Mixed CES & 0.0001 &  0.00001 \\\hline
 Scarf & 0.000003297599624930109 & 0.0000014820835507051155
\end{tabular}
\caption{Learning rates used for ADAM to train \nees{} in different markets. These learning rates are found via grid search on the validation set for all economies.\label{table:ee_learning_rates}}
\end{table}

We present additional results, missing from the main body of the paper, 
in \Cref{fig:arrow_exploit} and 
\Cref{fig:violin_ad_ap}. We see that \nees{} outperforms all baselines in all economies except Gross Substitutes CES economies for which t\^atonnement performs best. This makes sense since for Gross Substitutes CES economies t\^atonnement is guaranteed to converge to a competitive equilibrium. Even the, we note that the performance of \nees{} is remarkable since it achieves a testing normalized exploitability of 0.005, meaning that \nees{} finds an action profile which is closer than 99.95\% of allocations and prices to a competitive equilibrium.

\begin{figure*}[!ht]
  \centering
  \begin{subfigure}{0.45\textwidth}
    \includegraphics[width=\linewidth]{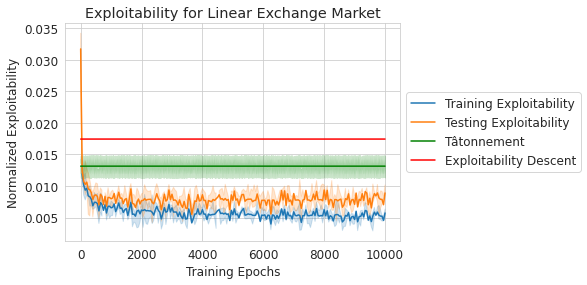}
    % \caption{Caption for image 1}
  \end{subfigure}
%   \hspace{0.5cm}
  \begin{subfigure}{0.45\textwidth}
    \includegraphics[width=\linewidth]{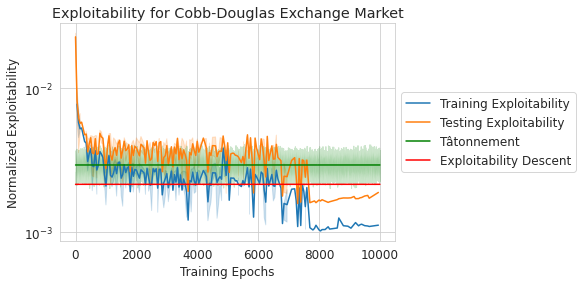}
    % \caption{Caption for image 2}
  \end{subfigure}
%   \hspace{0.5cm}
  \begin{subfigure}{0.45\textwidth}
    \includegraphics[width=\linewidth]{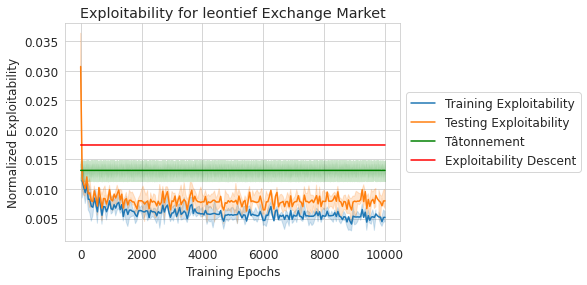}
    % \caption{Caption for image 3}
  \end{subfigure}
  \begin{subfigure}{0.45\textwidth}
    \includegraphics[width=\linewidth]{figures/Arrow-Debreu/CES_GC_exploit_CI1.png}
    % \caption{Caption for image 4}
  \end{subfigure}
%   \hspace{0.5cm}
  \begin{subfigure}{0.45\textwidth}
    \includegraphics[width=\linewidth]{figures/Arrow-Debreu/CES_GS_exploit_CI1.png}
    % \caption{Caption for image 5}
  \end{subfigure}
%   \hspace{0.5cm}
  \begin{subfigure}{0.45\textwidth}
    \includegraphics[width=\linewidth]{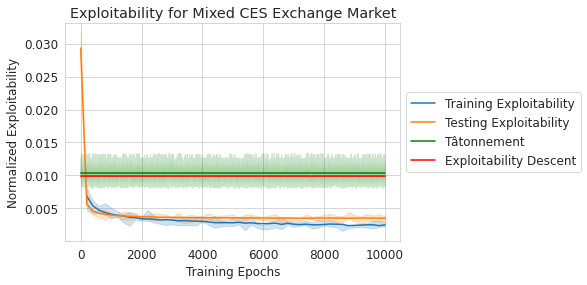}
    % \caption{Caption for image 6}
  \end{subfigure}
  \caption{Training and Testing Exploitability of \nees{} in linear, Cobb-Douglas, Leontief, GS CES, GC CES, and mixed CES economies.
  \label{fig:arrow_exploit}}
\end{figure*}

\begin{figure*}[!ht]
  \centering
%   \hspace{0.5cm}
  \begin{subfigure}{0.45\linewidth}
    \includegraphics[width= \textwidth]{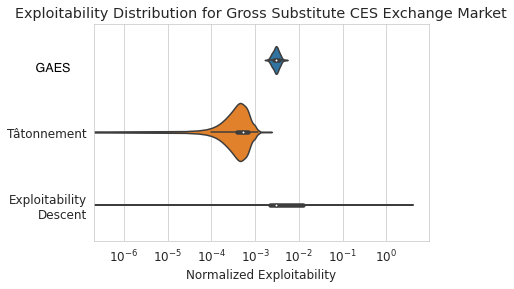}
    % \caption{Caption for image 5}
  \end{subfigure}
%   \hspace{0.5cm}
  \begin{subfigure}{0.45\textwidth}
    \includegraphics[width=\linewidth]{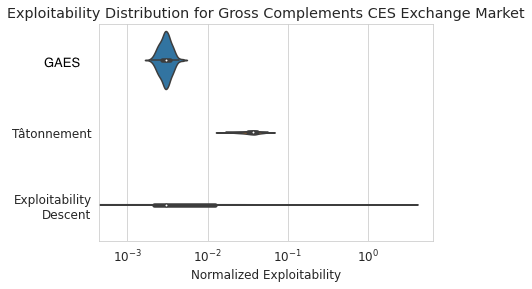}
    % \caption{Caption for image 6}
  \end{subfigure}
  \caption{Distribution of test exploitability on exchange economies/pseudo-games.
  %. For Leontief, exploitability descent is not shown, as it performed much worse than either methods. 
   \nees{} outperforms all baselines on average in all markets, and in distribution in all markets except Cobb-Douglas.
  \label{fig:violin_ad_ap}}
\end{figure*}

\newpage
\subsection{Kyoto Joint Implementation Mechanism}

\paragraph{Experimental Setup}
For this experiment, we focus on computing a refinement of the GNE known as VE (see \Cref{sec_ap:prelims}),
which are guaranteed to exist for this jointly convex pseudo-games.\footnote{We recall that a mydef{variational equilibrium (VE)} (or \mydef{normalized GNE}) of a pseudo-game is a strategy profile $\action^* \in \actions$ s.t.\ for all $\player \in \players$ and $\action \in \actions$, $\util[\player](\action^*) \geq  \util[\player](\action[\player], \naction[\player][][][*])$. See \Cref{sec_ap:prelims} for more details.} %
This does not change the structure of the generator of \nees{} or the training algorithm.
However, it allows us to consider discriminators that output best-responses that are in the space of jointly feasible actions rather than in the space of individually feasible action spaces, greatly simplifying the architecture of our discriminator.  We aim to first replicate the results provided in section 4 of \citeauthor{breton2006game} \cite{breton2006game}. To do so, we first consider a 2 country Kyoto JI mechanism, with all parameters of the Kyoto JI mechanism except $\revrate$ fixed, and compute equilibria for different values of $\revrate$ (\Cref{fig:phase_kyoto}). We then also consider a 2 country Kyoto JI mechanism where we sample all parameters randomly (\Cref{fig:exploit_kyoto}).

\paragraph{Pseudo-Game Generation.}
We sample 12,000 pseudo-games, putting
aside 1,000 for validation and 1,000 for testing. 
We sample the payoff and constraint parameters  of all the players ($\revrate$, $\investrate$, $\emitcap$, $\damagerate$), uniformly in the range $[0.5, 50]$ to produce the pseudo-games.
For each of these pseudo-games, since the set of jointly feasible actions is a polytope, we also generate the vertices associated with the set of jointly feasible actions. To do so, we use the pycdd library \cite{pycddlib2015}, and store a matrix of vertices for each pseudo-game, where the rows correspond to the maximum number of vertices, denoted $\mathrm{MaxNumVertex}$, for any pseudo-game in the training set, and the columns correspond to the dimension of the action space, i.e., $\numcountries*(\numcountries + 1)$ (the first row corresponds to emissions, the  last $\numcountries$ rows correspond to the investment matrix).
For the experiments, and replicating the comparative static analysis of \citet{breton2006game}, 
we randomly sample and fix all parameters of the game except $\revrate$, and
sample $\revrate$ from the range $[0.5, 50]^\numcountries$ as stated above. 

\paragraph{Hyperparameters.}
We run our algorithm with an initial warm-up of 10,000 iterations for the discriminator. This warm-up procedure follows exactly the inner loop of \Cref{alg:cumul_regret} but instead uses randomly sampled pseudo-games and randomly sampled action profiles. After the warmup, we use only one step of inner loop iteration for running \Cref{alg:cumul_regret} and run the outer loop for 10,000 iterations. Together with the warmup, the small number of inner loop iterations allow us to significantly speed up the training process. The gradient step in our algorithm is a step of the ADAM algorithm. For the discriminator, we use the same learning rate for the warm-up and regular training, following a grid search which is a learning rate of $0.0001$, while for the generator we use a learning rate of $0.001$. 

\paragraph{Generator Architecture.}

\begin{figure}[!ht]
    \centering
    \includegraphics[width=\linewidth]{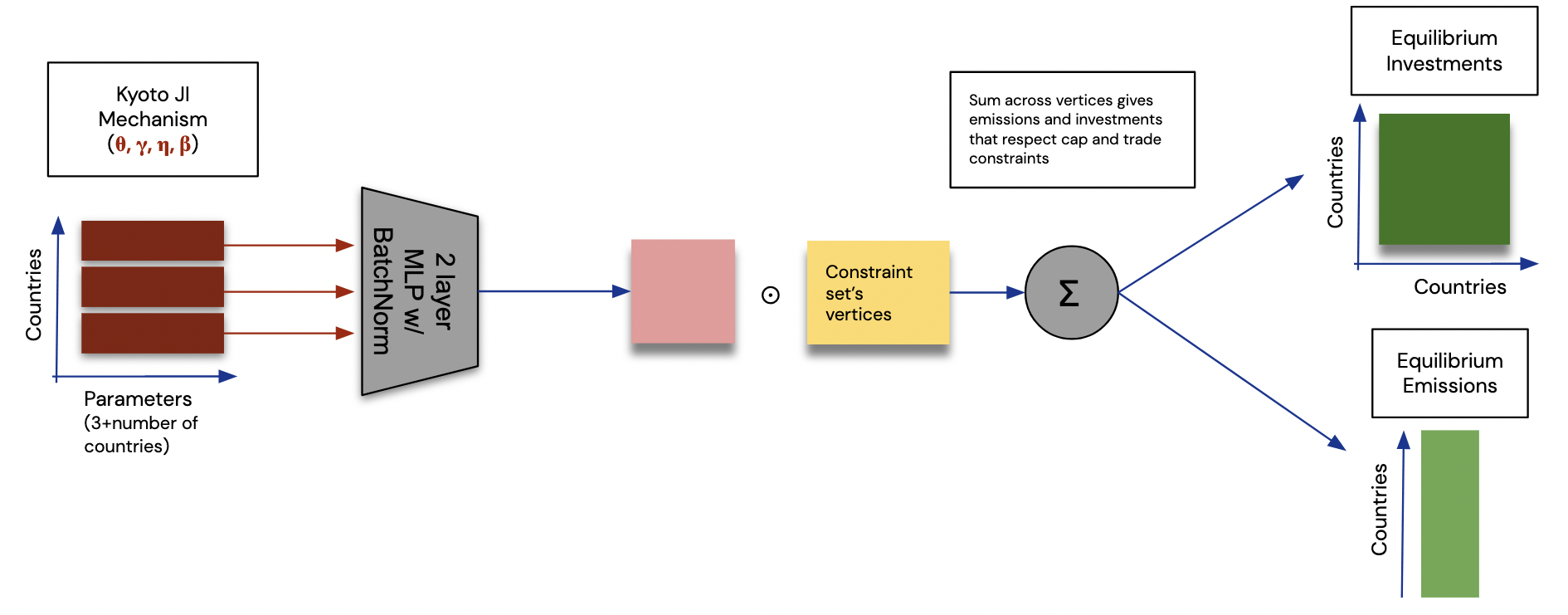}
    \caption{Architecture of the generator of \nees{} for Kyoto JI mechanisms.}
    \label{fig:kyoto_gen}
\end{figure}

We summarize the architecture of \nees{}'s generator for Kyoto JI mechanisms in \Cref{fig:kyoto_gen}. The generator for the Kyoto setting
takes as input the game matrix ($\revrate$, $\investrate$, $\emitcap$, $\damagerate$),
and feeds these inputs through a neural network with two hidden layers, each
with 20 and 30 nodes respectively and ReLU activations.
The output of each layer is also passed through a ReLU
activation, as well as a BatchNorm \cite{ioffe2015batch} layer with trainable parameters, 
and with default parameters as implemented by Haiku.
The output layer of the neural network consists of a fully connected layer with softmax activation with output dimension equal to $\mathrm{MaxNumVertex}$. The output of this final layer is  multiplied with the matrix of vertices associated with the pseudo-game ($\revrate$, $\investrate$, $\emitcap$, $\damagerate$) across its rows, i.e., each vertex associated with pseudo-game's constraint is multiplied by some probability.
The obtained matrix is then summed up across the rows and output by the generator after setting the first column to be $\emit$, and the matrix formed by the remaining columns to be $\invest$. Since the neural network outputs a convex combination of the vertices associated with the constraints of the game, the action profile outputted by the neural network is always jointly feasible. The number of parameters for the generator is 2,824.

\paragraph{Discriminator Architecture}

\begin{figure}[!ht]
    \centering
    \includegraphics[width=\linewidth]{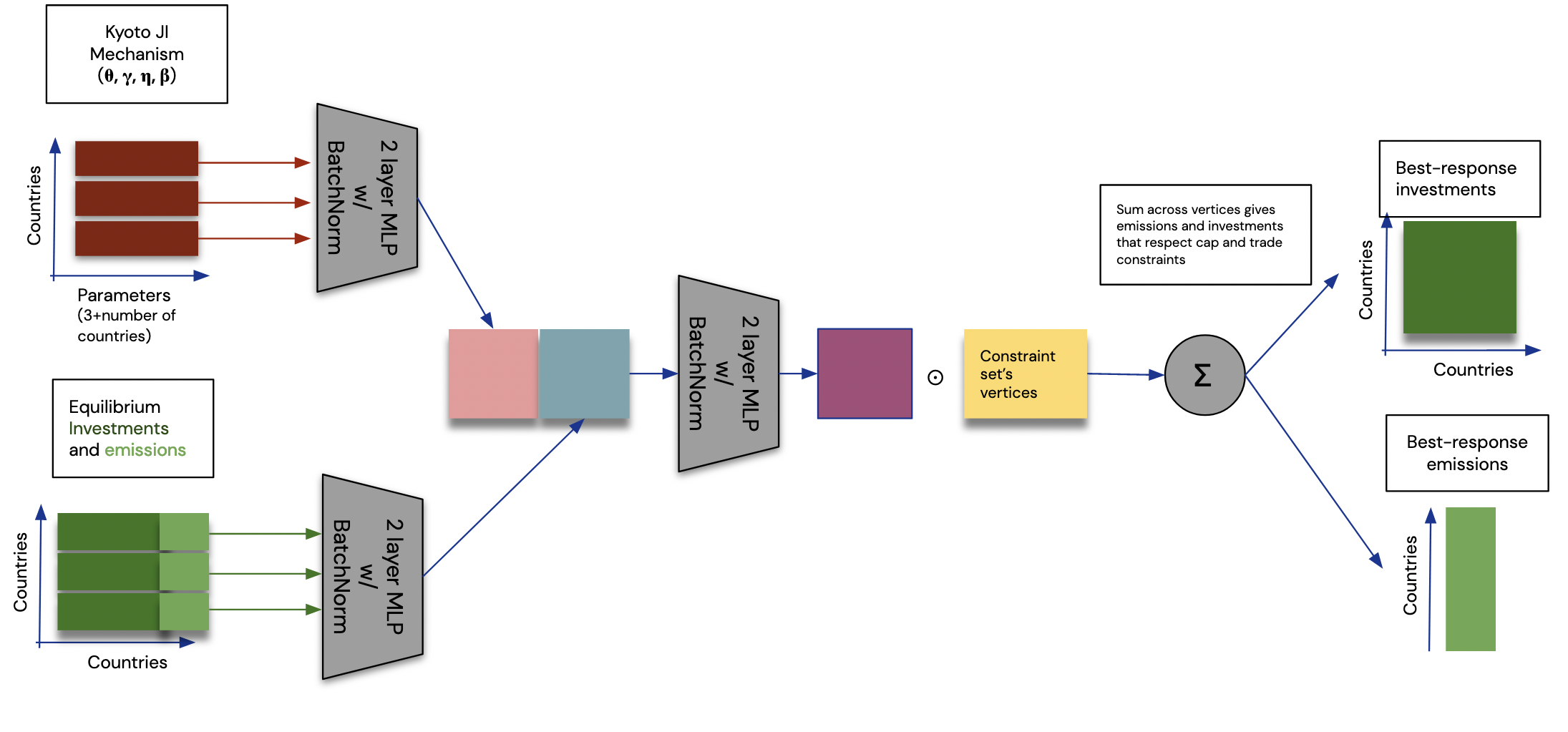}
    \caption{Architecture of the generator of \nees{} for Kyoto JI mechanisms.}
    \label{fig:kyoto_discrim}
\end{figure}

We summarize the architecture of  \nees{}'s discriminator for Kyoto JI mechanisms in \Cref{fig:kyoto_discrim}.Our discriminator takes as input the matrix $(\emit, \invest)$, the output of the generator,
and the pseudo-game matrix ($\revrate$, $\investrate$, $\emitcap$, $\damagerate$).
The equilibrium $(\emit, \invest)$ is first passed through a neural network with two fully connected 
trainable BatchNorm layers, each with 500 nodes.
Similarly, the pseudo-game ($\revrate$, $\investrate$, $\emitcap$, $\damagerate$) is passed through a network with the same architecture. The output of both networks are then concatenated over the last dimension, i.e., the matrices outputted by both networks are appended horizontally so as to preserve the number of rows $\numcountries$.
and passed through a neural network with two fully connected trainable BatchNorm layers, each
with 500 nodes.
For each of these layers, the output  is passed through a ReLU
activation. The output of the last neural network is then flattened and passed through a final fully connected layer with a softmax activation. The output of this final layer is  multiplied with the matrix ($\revrate$, $\investrate$, $\emitcap$, $\damagerate$) of vertices associated with the pseudo-game across its rows.
The obtained matrix is then summed up across the rows and output by the discriminator after setting the first column to be $\emit$, and with the matrix formed by the remaining columns of the output adopted as 
$\invest$. Since the neural network outputs a convex combination of the vertices associated with the constraints of the game, the action profile is always jointly feasible, meaning that the neural network outputs a best-response (for a VE). The number of parameters for the discriminator is 1,302,544.

\begin{figure}[!ht]
    \centering
    \includegraphics[width=0.5\linewidth]{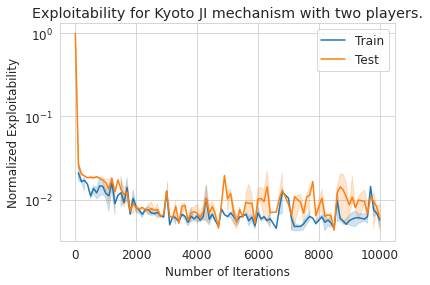}
    \caption{Normalized exploitability achieved by \nees{} throughout training for a two country JI mechanism.
    \label{fig:exploit_kyoto}}
\end{figure}
% \bibliography{references}
% \bibliographystyle{icml2023}

\end{document}